\documentclass[letterpaper,twocolumn,10pt]{article}
\usepackage{usenix}
\usepackage[english]{babel}
\usepackage{balance}
\usepackage{bm}     
\usepackage{xspace}
\usepackage{array}
\usepackage{subcaption}
\newcolumntype{C}[1]{>{\centering\arraybackslash}m{#1}}
\usepackage{multirow}
\usepackage{amssymb}
\usepackage{amsmath}
\usepackage{amsfonts}
\providecommand{\Description}[1]{}
\usepackage{accsupp}
\usepackage{accessibility}
\usepackage{setspace}
\usepackage{amsthm}
\newtheorem{theorem}{Theorem}[section]
\newtheorem{lemma}[theorem]{Lemma}
\theoremstyle{definition}
\newtheorem{definition}[theorem]{Definition}
\usepackage{threeparttable}
\usepackage{colortbl} 
\usepackage{enumerate}
\usepackage{pifont}     
\usepackage{bbding}     
\usepackage{fontawesome}
\usepackage{algorithm}
\usepackage[noend]{algpseudocode}
\definecolor{commentcolor}{RGB}{0, 100, 100} 
\definecolor{actioncolor}{RGB}{150, 0, 0}    
\definecolor{statecolor}{RGB}{0, 0, 150}     
\usepackage{booktabs}

\newcommand{\AttackAction}[1]{\textcolor{actioncolor}{\textbf{#1}}}
\newcommand{\StateUpdate}[1]{\textcolor{statecolor}{#1}}
\newcommand{\MyComment}[1]{\hfill \textcolor{commentcolor}{\(\triangleright\) \textit{#1}}}

\newcommand{\heading}[1]{{\vspace{0pt}\noindent{\textbf{#1}}}}

\newcommand{\tabref}[1]{\mbox{Table~\ref{#1}}}

\usepackage{tikz}
\usepackage{filecontents}

\begin{document}

\date{}

\title{\Large \bf PDoS: A Profitable Denial-of-Service Attack against Proof-of-Work Blockchain Liveness}

\author{
{\rm Junjie Hu}\\
Shanghai Jiao Tong University\\
nakamoto@sjtu.edu.cn
\and
{\rm Tianzhu Han}\\
Shanghai Jiao Tong University\\
v.viktor@sjtu.edu.cn
\and
{\rm Na Ruan}\\
Shanghai Jiao Tong University\\
naruan@sjtu.edu.cn
} 

\maketitle

\begin{abstract}
The security and liveness of Proof-of-Work (PoW) blockchains fundamentally depend on the economic rationality of miners. Existing incentive-driven denial-of-service attacks, such as BDoS, can deter rational miners from participating, but require the attacker to continuously absorb substantial economic losses and are therefore difficult to sustain in high-value networks. Meanwhile, prior infiltration-based withholding attacks are designed to extract revenue rather than to directly disrupt chain liveness.

We present PDoS, a hybrid attack that combines block header signal deterrence with parasitic revenue extraction. PDoS disrupts blockchain liveness while exploiting the victim pool's share-reward mechanism to subsidize the attack cost, thereby lowering the adversarial hash-power threshold required to induce rational miners to shut down. We further show a counterintuitive result: in high-fee or high-MEV environments, higher block value can make PoW systems less secure by increasing the attacker's parasitic revenue and pushing the attack across the break-even point into a self-sustaining, or even profitable, regime. To the best of our knowledge, PDoS is the first attack to demonstrate that disrupting PoW blockchain liveness can be economically self-sustaining and even profitable.
\end{abstract}

\section{Introduction}

\heading{Nakamoto Consensus.}
The security and liveness of Nakamoto consensus fundamentally rely on the economic rationality of miners. Prior work has extensively studied the safety, consistency, and confirmation properties of longest-chain PoW protocols under synchronous, partially synchronous, and bounded-delay models~\cite{nakamoto2008bitcoin,garay2015backbone,pass2017analysis,pass2017sleepy,kiffer2018better,gazi2020tight,li2021close,kiffer2024bounded}. Yet the incentive landscape of PoW blockchains has changed substantially. Mining-pool industrialization, rising transaction-fee share, and the rapid growth of \emph{miner-extractable value} (MEV) have made block rewards increasingly heterogeneous. Fee volatility, bursty high-value transactions, and MEV opportunities can materially reshape the profitability boundary of deviations from honest mining~\cite{daian2020flashboys,carlsten2016instability,qin2022darkforest,barzur2023werlman,sarenche2025volatile}. Conventional wisdom suggests that higher block value improves security by raising the cost of brute-force attacks. In reality, the same high-value environment may also enable subtler and more damaging forms of incentive manipulation.

Existing attacks largely fall into two lines of work. The first focuses on \emph{economic theft}: \emph{Selfish Mining}, \emph{Stubborn Mining}, and related variants increase adversarial revenue through strategic withholding and release~\cite{eyal2014majority,nayak2016stubborn,sapirshtein2016optimal,bai2023multiple}, while \emph{BWH}, \emph{FAW}, \emph{PAW}, and their extensions exploit mining-pool payout rules to continuously extract victim-funded revenue~\cite{rosenfeld2011pooled,eyal2015miners,kwon2017faw,gao2019paw,yang2022fwap}. These attacks can be profitable, but they do not aim to directly disrupt chain liveness. The second line targets \emph{liveness degradation}. In \emph{Blockchain DoS (BDoS)}, for example, the attacker releases only a block header while withholding the body, creating a credible threat of branch competition and inducing rational miners to reconsider whether continued mining is worthwhile~\cite{mirkin2020bdos}. More broadly, recent attacks on the mempool, execution load, storage, bandwidth, and sequencing layers show that blockchain systems can be disrupted along multiple resource dimensions~\cite{li2021deter,yaish2024speculative,he2024nurgle,tsuchiya2025amplification,li2025denialsequencing}. We summarize these attacks in \tabref{tab:attack_comparison}.

\heading{Motivation.}
Prior liveness attacks are generally considered cost-prohibitive. The original BDoS model, for instance, assumes an adversary dedicated to disrupting system liveness at any cost, concluding that the attack inevitably incurs an economic loss compared to honest mining. Despite this severe opportunity cost, our comprehensive utility modeling refines this view by shifting to an \emph{accounting profit} perspective: when properly incorporating race-winning block rewards and MEV, we demonstrate that BDoS can theoretically offset its explicit operational costs in high-value networks. However, this demands an unrealistically high hash-power threshold and extreme market conditions. Following the April 2024 reward halving, the declining profitability factor has rendered BDoS operationally unsustainable in current environments.

This leaves a key question unanswered: \emph{Can a PoW adversary simultaneously degrade chain liveness and finance the attack through victim-funded revenue, eventually turning denial of service into a self-sustaining or even profitable strategy?}

\heading{Our Approach.}
We answer this question affirmatively with \emph{Profitable Denial-of-Service} (PDoS), a hybrid attack that combines \emph{header signal deterrence} with \emph{parasitic revenue extraction}. Unlike conventional network-layer DoS attacks, PDoS directly targets the incentive structure of the consensus layer by exploiting two distinct entities: an oblivious \emph{victim pool} serving as the funding source, and the remaining rational network, defined as the \emph{target miners} subjected to liveness degradation. The attacker infiltrates the victim pool and coordinates two mechanisms. First, when it finds a block, it releases only the block header and withholds the body, thereby signaling the possible existence of a competing branch and increasing the perceived orphaning risk of the target miners. Once the expected loss exceeds the expected gain from continued mining, these rational targets are induced to shut down. Second, to make the attack operationally sustainable, the attacker dynamically reallocates hash power between private mining and pool infiltration. By continuously submitting shares while withholding full blocks, it exploits the victim pool's reward-allocation mechanism to extract revenue that offsets, or even exceeds, the explicit operational costs of the attack.

We further develop an MEV-aware dynamic incentive model that jointly captures the coinbase reward, time-varying transaction fees, and bursty high-value rewards. Our analysis reveals a counterintuitive result: high MEV does not necessarily strengthen PoW security. Instead, by amplifying the attacker's infiltration revenue, it can reduce the net cost of liveness disruption and make the attack economically self-sustaining. We refer to this phenomenon as \emph{attack subsidization}: the infiltrated victim pool inadvertently provides both the financial subsidy and the hash-power leverage required to suppress the target miners.

\begin{table}[t]
\caption{Comparison of PDoS with representative prior mining attacks. Prior mining attacks typically achieve either profitability or liveness disruption, but not both. \textbf{PDoS} is the first attack that simultaneously targets PoW blockchain liveness while remaining operationally self-sustaining.}
\label{tab:attack_comparison}
\footnotesize
\centering
\begin{tabular}{@{}|C{4.7cm}|C{1.1cm}|C{1.1cm}|@{}}
\toprule
\textbf{Attack Type} &
\textbf{Profitable} &
\textbf{Liveness} \\ \midrule

\shortstack[c]{Withholding attacks\\
{\footnotesize \cite{eyal2014majority,nayak2016stubborn,marmolejo2019competing,li2021semiselfish,mccorry2018smart,hu2023greedymine,gao2019paw,kwon2017faw,eyal2015miners,rosenfeld2011pooled,yang2022fwap,hu2023bribery,zhang2024maxattestation}}}
& \Checkmark & \ding{55} \\ \midrule

\shortstack[c]{DoS attacks\\
{\footnotesize \cite{yaish2024speculative,li2021deter,tsuchiya2025amplification,he2024nurgle,li2025denialsequencing}}}
& \ding{55} & \Checkmark \\ \midrule

BDoS~\cite{mirkin2020bdos}
& \ding{55} & \Checkmark \\ \midrule

\rowcolor{gray!30}
\textbf{PDoS [Ours]}
& \Checkmark
& \Checkmark \\ \bottomrule
\end{tabular}
\end{table}

\begin{center}
\setlength{\fboxsep}{3pt}
\fbox{
\parbox{0.96\linewidth}{
\noindent\textbf{Responsible disclosure}. For ethical reasons, we responsibly disclosed our findings to the Bitcoin Core Security Team in April 2026. To prevent potential exploitation, the detailed disclosure records are temporarily withheld and will be made publicly available upon the conclusion of the vulnerability handling process.
}}
\end{center}

\heading{Our Contributions.}
Our main contributions are as follows:
\begin{enumerate}
    \item[$\bullet$] We present PDoS, the first denial-of-service attack that targets PoW blockchain liveness while being economically self-sustaining and potentially generating a positive accounting profit in real-world settings.
    \item[$\bullet$] We formalize system evolution under three rational target responses, $S \in \{\mathrm{Mine}, \mathrm{SPV}, \mathrm{Stop}\}$, and derive a unified analytical framework to capture the complex attacker--victim--target interactions.
    \item[$\bullet$] We develop a dynamic incentive model incorporating coinbase rewards, time-varying fees, and bursty MEV-like rewards. By explicitly decoupling the revenue streams into an MEV inflation coefficient ($e^S$), effective block-generation rate ($v_{\mathcal{M}}^S$), share-extraction rate ($s_{\mathcal{M}}^S  $), and active-time ratio ($\theta_{\mathcal{M}}^S$), we analytically characterize the attack's operating break-even point and formally prove that the target miners' subgame converges to a unique strict Nash equilibrium in which all target miners stop mining.
    \item[$\bullet$] We validate PDoS through trace-driven evaluations spanning the 2024 Bitcoin reward halving. Compared to BDoS, PDoS substantially lowers the attacker hash-power threshold and reduces net attack costs, thereby more easily achieving infinite attack endurance. It expands the feasible attack region by establishing a lower operating break-even bound and a higher validity upper bound. Finally, we quantify the effectiveness of PPLNS and delayed-payout defenses against the no-hardening baseline.
\end{enumerate}

\section{Model}\label{Model}
We formalize the PoW blockchain, the attacker capabilities, and the notation used in the subsequent analysis.

\subsection{Mining Model}

\heading{Participants.}
We consider four types of entities:
\begin{itemize}
    \item \textbf{Attacker ($\mathcal{A}$):} controls hash power $\alpha$ and aims to maximize its utility while degrading chain liveness.
    \item \textbf{Victim Pool ($\mathcal{V}$):} controls hash power $\beta$, distributes share rewards proportionally to submitted PPoWs, and always performs full-block validation.
    \item \textbf{Target Miner Population ($\mathcal{T}$):} controls total hash power $\eta$ and is modeled as a continuum of rational miners $\mathcal{T}_i$ with $\eta_i \to 0$ and $\sum_i \eta_i=\eta$.
    \item \textbf{Other Miners ($\mathcal{O}$):} control hash power $\delta$ and follow the Nakamoto consensus.
\end{itemize}
The total network hash power is normalized to one:
$
\alpha+\beta+\eta+\delta=1$.
Let $\mathcal{M}\in\{\mathcal{A},\mathcal{V},\mathcal{T},\mathcal{O}\}$ denote an arbitrary entity. Define the external miner set as $\mathcal{E}=\mathcal{T}\cup\mathcal{O}$ and the honest miner set as $\mathcal{H}=\mathcal{V}\cup\mathcal{T}\cup\mathcal{O}$.

\heading{Network Propagation.}
Following prior analyses of withholding-based attacks~\cite{rosenfeld2011pooled,eyal2015miners,kwon2017faw,gao2019paw}, we ignore ordinary propagation delay and focus on fork-time competition. We use a \emph{rushing ability} parameter $\gamma\in[0,1]$ to capture $\mathcal{A}$'s propagation advantage: when two competing blocks at the same height are released, a neutral miner supports $\mathcal{A}$'s branch with probability $\gamma$ and the honest branch with probability $1-\gamma$. The case $\gamma=0.5$ corresponds to symmetric propagation power.

\heading{Block Generation.}
We assume the attack horizon is shorter than the difficulty-adjustment period and thus ignore difficulty retargeting. Block generation follows a Poisson process. Let $\lambda$ denote the baseline block generation rate, and let $\alpha_{\mathrm{act}}$ denote the active hash power producing FPoWs. Then inter-block time is exponentially distributed with rate $\lambda\alpha_{\mathrm{act}}$. For any entity $\mathcal{M}$ with active hash power $\alpha_{\mathcal{M}}$, its block-generation rate is $\lambda\alpha_{\mathcal{M}}$, and its probability of finding the next block first is $\frac{\alpha_{\mathcal{M}}}{\alpha_{\mathrm{act}}}$.

\heading{Dynamic Economic Incentives.}
We model the expected reward of a valid block as
$
\mathcal{R}_b \triangleq K + (\lambda_t+\lambda_w\mathbb{E}[\mathcal{R}_w])\mathbb{E}[\tau_b]$,
where $K$ is the fixed coinbase reward, $\lambda_t$ is the transaction-fee accumulation rate, and $\mathcal{R}_w$ is a bursty whale reward whose arrivals follow a Poisson process with rate $\lambda_w$ and whose value follows $\mathrm{LogNorm}(\mu,\sigma^2)$, so that
$
\mathbb{E}[\mathcal{R}_w]=e^{\mu+\frac{\sigma^2}{2}}$.
Under the no-attack baseline, $\alpha_{\mathrm{act}}=1$ and $\tau_b\sim\mathrm{Exp}(\lambda)$, hence
$
\mathbb{E}[\tau_b]=\frac{1}{\lambda} ~\text{and}~
\mathcal{R}_b
=
K+\frac{\lambda_t+\lambda_w\mathbb{E}[\mathcal{R}_w]}{\lambda}$.

In addition to block rewards, a pool participant may earn a \emph{share reward} $\mathcal{R}_s$. If its PPoW share is $w$, then
$\mathcal{R}_s = w\cdot \mathcal{R}_{\mathrm{pool}}$,
where $\mathcal{R}_{\mathrm{pool}}$ is the pool's expected total revenue. 

We assume a constant operating cost (OPEX) $c$ per unit hash power per unit time~\cite{ccaf_electricity}. We explicitly ignore the one-time capital expenditure (CAPEX) for hardware acquisition, as such sunk costs do not affect marginal strategic decisions.

\heading{Baseline Utility and Profitability.}
Let $R_{\mathcal{M}}(t)$ and $C_{\mathcal{M}}(t)$ denote the cumulative revenue and cost of entity $\mathcal{M}$ over $[0,t]$, where
$C_{\mathcal{M}}(t)=\alpha_{\mathcal{M}}ct$.
We define the baseline normalized utility as
\begin{align}
U_{\mathcal{M}}^b
\triangleq
\lim_{t\to\infty}
\frac{1}{\alpha_{\mathcal{M}}}
\left(
\frac{R_{\mathcal{M}}(t)}{t}
-
\frac{C_{\mathcal{M}}(t)}{t}
\right)
=
\lambda\mathcal{R}_b-c.
\end{align}
The baseline profitability factor is
$\omega_b \triangleq \frac{\lambda\mathcal{R}_b}{c}$.
Honest mining is profitable iff $\omega_b>1$, equivalently $U_{\mathcal{M}}^b>0$.

\subsection{Threat Model}

\heading{Attacker Capabilities.}
We assume that $\mathcal{A}$ has the following pool- and network-level capabilities: (1) \textit{Dynamic power allocation:} $\mathcal{A}$ can split its total hash power $\alpha$ into private hash power $\alpha_{\mathrm{pri}}$ and infiltration hash power $\alpha_{\mathrm{inf}}$, used for private mining and pool infiltration, respectively. Define $r=\frac{\alpha_{\mathrm{inf}}}{\alpha} ~\text{and}~
    r'=\frac{\alpha_{\mathrm{pri}}}{\alpha}$.
    We allow $r+r'\le 1$ to model partial shutdown. (2) \textit{Withholdable pool work:} As in BWH/FAW, an infiltrating worker can retain an FPoW and submit it later while continuing to submit ordinary PPoWs~\cite{eyal2015miners,kwon2017faw}. A standard mining job exposes the fields needed to construct and verify the candidate header, and a later share submission lets the pool reconstruct and publish the associated block~\cite{stratumv2mining}. (3) \textit{Asymmetric publication:} $\mathcal{A}$ may publish a block header $h$ while withholding the corresponding body $b$, thereby emitting a credible deterrence signal that a competing branch may already exist. Bitcoin already treats headers and missing transaction data as separable relay objects~\cite{bip152}. (4) \textit{Race observation:} $\mathcal{A}$ monitors public block announcements and immediately decides whether to discard or settle a retained FPoW. As in FAW, success is therefore sensitive to stale-share acceptance and propagation delay; we expose this dependence through $\gamma$ and discuss stricter pool policies in Section~\ref{sec:defenses}.

\heading{Target Strategy Space.}
The target miner population $\mathcal{T}$ chooses among
$
S\in\{\mathrm{Mine},\mathrm{SPV},\mathrm{Stop}\}$
according to expected normalized utility $\mathbb{E}[U_{\mathcal{T}}^S]$: (1) $S=\mathrm{Mine}$: continue normal mining on the current branch; (2) $S=\mathrm{SPV}$: extend the published header via SPV mining to save validation time; (3) $S=\mathrm{Stop}$: shut down to avoid further loss. We summarize the notation used in this paper in Table~\ref{tab:notation}.

\begin{table}[t]
\centering
\small
\caption{Core notation}
\label{tab:notation}
\renewcommand{\arraystretch}{1}
\setlength{\tabcolsep}{3pt}
\resizebox{\columnwidth}{!}{%
\begin{tabular}{l p{7.5cm}}
\toprule
\textbf{Notation} & \textbf{Description} \\
\midrule
\multicolumn{2}{l}{\textit{Participants}} \\
$\mathcal{A}, \mathcal{V}, \mathcal{T}, \mathcal{O}$ & Attacker, victim pool, target miner population, and other miners \\
$\alpha, \beta, \eta, \delta$ & Hash power of $\mathcal{A}, \mathcal{V}, \mathcal{T}, \mathcal{O}$, respectively \\
$\mathcal{E}, \mathcal{H}$ & External miner set; honest miner set \\

\midrule
\multicolumn{2}{l}{\textit{Mining System \& Economic Parameters}} \\
$\lambda, \gamma$ & Baseline block generation rate; rushing ability \\
$K, M$ & Coinbase reward; MEV ratio \\
$\lambda_t, \lambda_w$ & Transaction-fee accumulation rate; whale-reward arrival rate \\
$\mathcal{R}_b, \mathcal{R}_w, \mathcal{R}_s$ & Expected block reward, whale reward, and share reward \\
$c, \omega_b$ & Normalized operating cost; baseline profitability factor \\

\midrule
\multicolumn{2}{l}{\textit{Strategies \& Metrics}} \\
$r_1, r_2$ & Infiltration ratios in the initial and deterring states \\
$S$ & Target strategy, $S \in \{\mathrm{Mine}, \mathrm{SPV}, \mathrm{Stop}\}$ \\
$\pi_i^S, D_S$ & Steady-state probability; partition function \\
$p_{\mathcal{M}}^i, \theta_{\mathcal{M}}^S$ & Winning probability in racing state $i$; active-time ratio \\
$v_{\mathcal{M}}^S, s_{\mathcal{M}}^S$ & Effective block-generation rate; share-extraction rate \\
$e^S, U_{\mathcal{M}}^S$ & Reward-inflation coefficient; normalized utility \\
\bottomrule
\end{tabular}%
}
\end{table}

\section{The PDoS Attack}

\subsection{Overview}
The core idea of PDoS is to combine \emph{header signal deterrence} for liveness disruption with \emph{parasitic revenue extraction} from infiltration-based withholding. On the one hand, the attacker creates chain-state uncertainty by publishing only block headers, thereby reducing the expected payoff of rational miners that continue mining. On the other hand, the attacker continuously extracts share rewards from the victim pool to subsidize the attack cost. Under sufficiently profitable conditions, PDoS can therefore evolve into an economically sustainable and even profitable liveness attack.

\subsection{Method}
PDoS can be modeled as a \emph{Continuous Time Markov Chain} (CTMC) \cite{Continuous-time_Markov_chain} driven by \emph{signal injection}, \emph{dynamic power allocation}, and \emph{rational target response}. Algorithm~\ref{alg:pdos_controller} gives the controller logic.

\begin{algorithm}[t]
    \caption{PDoS Controller}
    \label{alg:pdos_controller}
    \small
    \begin{algorithmic}[1]
    \setlength{\itemsep}{0pt}
    \setlength{\parsep}{0pt}
    \setlength{\topsep}{0pt}
    \setstretch{0.1}
        \Require Hash power $\alpha,\beta$; Ratios $r_1, r_2$.
        \Ensure Maximize attacker's utility \& Minimize target's utility.
        \Procedure{Init}{} \MyComment{Initialize or reset system state}
            \State \StateUpdate{$\mathcal{S} \leftarrow 0$}, $B_{\mathrm{held}} \leftarrow \text{null}$, $\alpha_{\mathrm{inf}} \leftarrow r_1\alpha$, $\alpha_{\mathrm{pri}} \leftarrow (1-r_1)\alpha$
        \EndProcedure
        
        \State \Call{Init}{}
        \While{Mining Process Active}
            \State \textbf{Event:} New block $B_\mathrm{new}$ detected.
            \If{$\mathcal{S}=0$}
                \If{$B_\mathrm{new}$ found by $\alpha$}
                    \State $B_{\mathrm{held}} \leftarrow B_\mathrm{new}$
                    \State \AttackAction{Withhold} $\mathrm{Body}(B_{\mathrm{held}})$
                    \State \AttackAction{Broadcast} $\mathrm{Header}(B_{\mathrm{held}})$\MyComment{Deterrence}
                    \State \AttackAction{Adjust:} $\alpha_{\mathrm{inf}} \leftarrow r_2\alpha$, $\alpha_{\mathrm{pri}} \leftarrow 0$.
                    \State \StateUpdate{$\mathcal{S} \leftarrow 1$} \textbf{if} $B_{\mathrm{held}}$ found by $\alpha_{\mathrm{pri}}$ \textbf{else} \StateUpdate{$\mathcal{S} \leftarrow 2$}
                \Else \MyComment{Honest growth}
                    \State \textbf{continue}
                \EndIf
            \ElsIf{$\mathcal{S}\in\{1,2\}$}
                \If{$B_\mathrm{new}$ extends Header($B_{\mathrm{held}}$)} \MyComment{\textbf{SPV Trap}}
                \State \AttackAction{Discard} $B_{\mathrm{held}}$
                \State \Call{Init}{}
            \ElsIf{$\mathcal{S}=1$}
                \State \AttackAction{Adjust:} $\alpha_{\mathrm{inf}} \leftarrow 0$, $\alpha_{\mathrm{pri}} \leftarrow \alpha$.\MyComment{Max pri. win}
                \State \AttackAction{Release} $B_{\mathrm{held}}$ \MyComment{Trigger pri. race}
                \State \StateUpdate{$\mathcal{S} \leftarrow 3$} \textbf{if} $B_{\mathrm{new}}$ found by $\beta$ \textbf{else} \StateUpdate{$\mathcal{S} \leftarrow 4$}
            \Else \MyComment{$\mathcal{S}=2$}
                \If{$B_\mathrm{new}$ found by $\beta$}
                    \State \AttackAction{Discard} $B_{\mathrm{held}}$
                    \State \Call{Init}{}
                \Else
                    \State \AttackAction{Adjust:} $\alpha_{\mathrm{inf}} \leftarrow \alpha$, $\alpha_{\mathrm{pri}} \leftarrow 0$.\MyComment{Max inf. win}
                    \State \AttackAction{Submit} $B_{\mathrm{held}}$ \MyComment{Trigger inf. race}
                    \State \StateUpdate{$\mathcal{S} \leftarrow 5$}
                \EndIf
            \EndIf
            \Else \MyComment{$\mathcal{S} \in \{3,4,5\}$}
                \State \Call{Init}{}\MyComment{Network resolves race via $p_{\mathcal{A}}^i$}
            \EndIf
        \EndWhile
    \end{algorithmic}
\end{algorithm}

The lifecycle of PDoS consists of three phases: an \textit{initial phase}, a \textit{deterring phase}, and a \textit{racing phase}. In the initial phase, the attacker infiltrates the victim pool with part of its hash power to extract share rewards. Once the attacker gains a block lead, the system enters the deterring phase. At this point, the attacker exploits asymmetric publication to broadcast only the block header, injecting a credible yet incomplete state signal into the network and forcing target miners to reassess their expected payoff. The subsequent racing phase depends on the target miners' response strategy, and the attacker then chooses state-dependent actions to maximize liveness disruption.

Concretely, if the target miners continue normal mining, the attacker releases the previously withheld block body to trigger a branch race and thereby reduce the targets' expected payoff. If the targets attempt SPV mining on top of the published header, the attacker simply discards the block body, causing their output to become invalid because the parent block is unavailable. The combination of branch racing and the SPV trap creates a two-layer deterrent that drives rational target miners to shut down. The withdrawal of target hash power further reduces the network's effective block-production rate, prolongs the time spent in deterring states, and amplifies the attack's impact on chain liveness.

\subsection{Markov State Space Definition}\label{markov_state_space}
We model the evolution of PDoS as a CTMC. As shown in Figure~\ref{fig:states}, the state space $\mathcal{S}$ is partitioned into three subsets: the \emph{initial state} $\mathcal{S}_{\mathrm{ini}}=\{0\}$, the \emph{deterring states} $\mathcal{S}_{\mathrm{det}}=\{1,2\}$ in which $\mathcal{A}$ has obtained a lead, and the \emph{racing states} $\mathcal{S}_{\mathrm{rac}}=\{3,4,5\}$ in which branch competition is resolved through network propagation.

\begin{figure}[t]
  \centering
  \includegraphics[width=\linewidth]{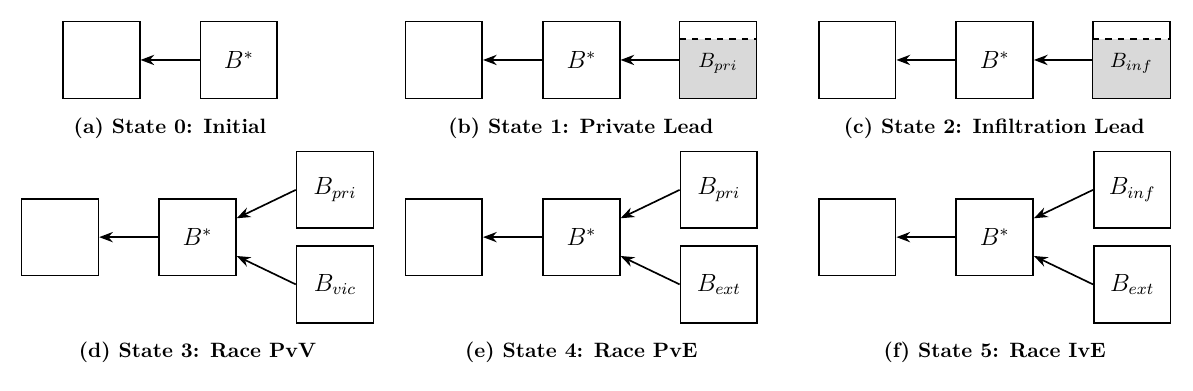}
  \Description{Markov states.}
  \caption{Markov-state topology of the PDoS attack. (a) Initial state. (b)--(c) Deterring states, in which the attacker broadcasts only the block header as a deterrence signal. (d)--(f) Racing states, in which the attacker releases the full block to compete.}
  \label{fig:states}
\end{figure}

\subsubsection{State Definitions}
\begin{itemize}
    \item \textbf{State 0 (Initial):} The baseline state in which all miners extend the public main chain and wait for the next block discovery.
    \item \textbf{State 1 (Private Lead):} $\mathcal{A}$'s private hash power finds a new block $B_{\mathrm{pri}}$ first. The attacker withholds the block body and broadcasts only the header, thereby injecting a deterrence signal into the network.
    \item \textbf{State 2 (Infiltration Lead):} $\mathcal{A}$'s infiltration hash power finds a new block $B_{\mathrm{inf}}$ inside $\mathcal{V}$. The attacker performs double withholding: it neither submits the FPoW to $\mathcal{V}$ nor broadcasts the block body to the network, and releases only the header to maintain deterrence.
    \item \textbf{States 3 and 4 (Race PvV/PvE):} During State~1, if $\mathcal{V}$ or $\mathcal{E}$ finds a new block, $\mathcal{A}$ immediately releases $B_{\mathrm{pri}}$ to trigger a race. State~3 corresponds to a race between the private block and the victim's block (PvV), whereas State~4 corresponds to a race between the private block and an external block (PvE).
    \item \textbf{State 5 (Race IvE):} During State~2, if $\mathcal{E}$ finds a new block $B_{\mathrm{ext}}$, $\mathcal{A}$ immediately submits $B_{\mathrm{inf}}$ to $\mathcal{V}$, thereby triggering a substantive race between the infiltration branch and the external branch.
\end{itemize}

\subsubsection{Dynamic Power Allocation}
$\mathcal{A}$ dynamically adjusts the split of its total hash power between private mining and infiltration mining according to the current Markov state, so as to balance parasitic revenue extraction against liveness degradation. The state-dependent allocation strategy is summarized in Table~\ref{tab:power_allocation}.

\begin{table}[t]
\centering
\small
\caption{Dynamic power-allocation strategies of $\mathcal{A}$.}
\label{tab:power_allocation}
\renewcommand{\arraystretch}{1.15}
\setlength{\tabcolsep}{5pt}
\resizebox{\columnwidth}{!}{%
\begin{tabular}{l c c c l}
\toprule
\textbf{Phase} & $\mathcal{S}$ & $r$ & $r'$ & \textbf{Objective} \\
\midrule
Initial & $0$ & $r_1$ & $1-r_1$ & Global optimization \\
Deterring & $1,2$ & $r_2$ & $0$ & Local optimization \\
PvV \& PvE & $3,4$ & $0$ & $1$ & Maximize private win rate \\
IvE & $5$ & $1$ & $0$ & Maximize infiltration win rate \\
\bottomrule
\end{tabular}%
}
\end{table}

In the initial state, $\mathcal{A}$ allocates a fraction $r_1$ of its hash power to infiltrate $\mathcal{V}$ and uses the remaining $1-r_1$ for private mining. Once the system enters a deterring state, $\mathcal{A}$ shuts down its private mining power to avoid competing with its own leading block and to reduce operating cost, while adjusting the infiltration ratio to $r_2$. The optimization of $r_1$ and $r_2$ is deferred to Appendix~\ref{app:optimization}.

In the racing states, $\mathcal{A}$ adopts extreme allocations to maximize branch-winning probability. In private-branch races (States~3 and~4), it withdraws all infiltration power ($r=0$) and devotes all attack power to the private branch. In the infiltration-branch race (State~5), it allocates all attack power to $\mathcal{V}$ ($r=1$). This not only lets the attacker combine the victim's hash power $\beta$ against external competition, but also maximizes its share payout if the infiltration branch wins.

\subsubsection{Race Winning Probabilities}
Let $p_{\mathcal{A}}^3$, $p_{\mathcal{A}}^4$, and $p_{\mathcal{A}}^5$ denote the winning probabilities of the $\mathcal{A}$-led branch in States~3, 4, and 5, respectively, as summarized in Table~\ref{tab:race_prob}. The honest branch probabilities are complementary, i.e.,
$
p_{\mathcal{V}}^3 = 1-p_{\mathcal{A}}^3, 
p_{\mathcal{E}}^4 = 1-p_{\mathcal{A}}^4, 
p_{\mathcal{E}}^5 = 1-p_{\mathcal{A}}^5$.
The derivation is given in Appendix~\ref{app:race_win_rate}.

\begin{table}[t]
\centering
\small
\caption{Race winning probabilities of the $\mathcal{A}$-led branch.}
\label{tab:race_prob}
\renewcommand{\arraystretch}{1.25}
\setlength{\tabcolsep}{6pt}
\resizebox{\columnwidth}{!}{%
\begin{tabular}{l c l l}
\toprule
\textbf{Phase} & $\mathcal{S}$ & \textbf{Win Rate} & \textbf{Tactical Advantage} \\
\midrule
PvV & $3$ & $p_{\mathcal{A}}^3 = \alpha + \gamma(\eta+\delta)$ & Full private utilization \\
PvE & $4$ & $p_{\mathcal{A}}^4 = \alpha + \gamma(\beta+\eta+\delta)$ & Full private utilization \\
IvE & $5$ & $p_{\mathcal{A}}^5 = \alpha + \beta + \gamma(\eta+\delta)$ & Victim-power hijacking \\
\bottomrule
\end{tabular}%
}
\end{table}

These winning probabilities reveal the key tactical asymmetry of PDoS. In PvV/PvE, $\mathcal{A}$ can fully deploy its own hash power $\alpha$ to support the private branch. In IvE, the victim pool is effectively forced to support the infiltration block. This hijacking effect gives the attacker a significant competitive advantage and correspondingly reduces the expected payoff of the honest branch.

\subsection{Evolution Process}
The system's evolution depends directly on the target response strategy $S$. Different decisions by $\mathcal{T}$ not only change the network's active hash power and therefore the transition rates, but also enable or disable specific evolution paths. The complete derivation of all microscopic transition rates is deferred to Appendix~\ref{app:state_transitions}.

Figure~\ref{fig:evolutions} abstracts four key block-evolution scenarios, while Figure~\ref{fig:markov_chains} shows how the three target strategies reshape the CTMC transition topology.

\begin{figure}[t]
  \centering
  \includegraphics[width=1.0\linewidth]{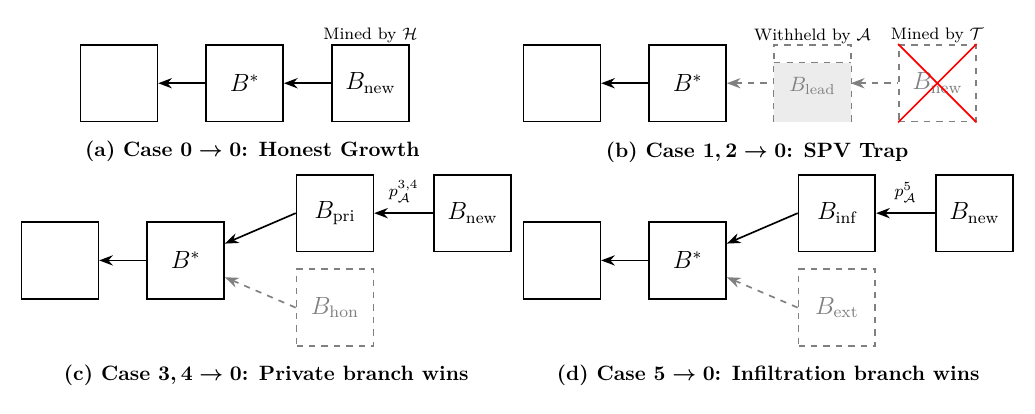}
  \caption{Block-evolution scenarios under PDoS. (a) Honest block growth. (b) SPV trap: the target mines on the attacker's header, but the output is invalidated because the parent body is withheld. (c) Private-race resolution in favor of the attacker. (d) Infiltration-race resolution in favor of the attacker through victim-power hijacking.}
  \Description{Markov evolutions.}
  \label{fig:evolutions}
\end{figure}

\begin{figure}[t]
    \centering
    \resizebox{\linewidth}{!}{\definecolor{minepath}{RGB}{31,119,180}
\definecolor{spvpath}{RGB}{106,61,154}
\definecolor{stoppath}{RGB}{230,97,1}

\begin{tikzpicture}[
    >=stealth,
    shorten >=1pt,
    thick,
    state_node/.style={
        circle,
        minimum size=0.68cm,
        inner sep=0pt,
        fill=white,
        draw=black,
        font=\normalsize\bfseries
    },
    rate/.style={
        font=\scriptsize,
        inner sep=0.8pt,
        fill=none
    },
    rate_mine/.style={rate, text=minepath},
    rate_spv/.style={rate, text=spvpath},
    rate_stop/.style={rate, text=stoppath},
    common_edge/.style={->, draw=black},
    mine_edge/.style={->, draw=minepath, line width=1.1pt},
    spv_edge/.style={->, draw=spvpath, dashed, line width=1.1pt},
    stop_edge/.style={
        ->,
        draw=stoppath,
        dash pattern=on 4pt off 2pt on 1pt off 2pt,
        line width=1.1pt
    },
    reset_edge/.style={->, draw=black},
    common_key/.style={->, draw=black, line width=1.1pt},
    mine_key/.style={mine_edge},
    spv_key/.style={spv_edge},
    stop_key/.style={stop_edge}
]

    \node[state_node] (s0) at (0, 0) {0};
    \node[state_node] (s1) at (2.6, 0.72) {1};
    \node[state_node] (s2) at (2.6, -0.72) {2};
    \node[state_node] (s3) at (5.2, 1.42) {3};
    \node[state_node] (s4) at (5.4, 0) {4};
    \node[state_node] (s5) at (5.2, -1.42) {5};

    \draw[reset_edge] (s0)
        edge[loop left, looseness=5, min distance=0.8cm]
        node[rate, above left, pos=0.5] {$\lambda(1-\alpha)$} (s0);

    \draw[common_edge] (s0)
        edge[bend left=12]
        node[rate, above, pos=0.48] {$\lambda(1-r_1)\alpha$} (s1);

    \draw[common_edge] (s0)
        edge[bend right=32]
        node[rate, below, pos=0.48] {$\lambda r_1\alpha$} (s2);

    \draw[common_edge] (s1)
        edge[bend left=9]
        node[rate, above, pos=0.52] {$\lambda\beta$} (s3);

    \draw[mine_edge] (s1)
        edge[bend left=11]
        node[rate_mine, above, pos=0.42] {$\lambda(\delta+\eta)$} (s4);
    \draw[spv_edge] (s1)
        edge node[rate_spv, above, pos=0.62] {$\lambda\delta$} (s4);
    \draw[stop_edge] (s1)
        edge[bend right=11]
        node[rate_stop, above, pos=0.78] {$\lambda\delta$} (s4);

    \draw[mine_edge] (s2)
        edge[bend right=11]
        node[rate_mine, below, pos=0.48] {$\lambda(\delta+\eta)$} (s5);
    \draw[spv_edge] (s2)
        edge node[rate_spv, above, pos=0.68] {$\lambda\delta$} (s5);
    \draw[stop_edge] (s2)
        edge[bend left=11]
        node[rate_stop, above, pos=0.38] {$\lambda\delta$} (s5);

    \draw[spv_edge] (s1) edge[bend left=18] (s0);
    \node[rate_spv] at (1.65, 0.43) {$\lambda\eta$};

    \draw[mine_edge] (s2) edge[bend right=14] (s0);
    \draw[spv_edge] (s2) edge[bend right=20] (s0);
    \draw[stop_edge] (s2) edge[bend right=26] (s0);
    \node[rate_stop] at (0.74, -0.17) {$\lambda\beta$};
    \node[rate_spv] at (1.36, -0.34) {$\lambda(\beta+\eta)$};
    \node[rate_mine] at (2.03, -0.38) {$\lambda\beta$};

    \draw[reset_edge] (s3)
        edge[bend right=38]
        node[rate, above, pos=0.50] {$\lambda$} (s0);

    \draw[reset_edge] (s4)
        edge node[rate, above, pos=0.50] {$\lambda$} (s0);

    \draw[reset_edge] (s5)
        edge[bend left=38]
        node[rate, below, pos=0.50] {$\lambda$} (s0);

    \draw[gray!85!black, densely dotted, rounded corners]
        (5.90, -0.76) rectangle (8.15, 0.76);
    \draw[common_key] (6.02, 0.43) -- (6.34, 0.43);
    \node[font=\scriptsize] at (6.45, 0.43) {$+$};
    \draw[mine_key] (6.56, 0.43) -- (6.88, 0.43);
    \node[anchor=west, font=\scriptsize] at (6.98, 0.43)
        {$S=\mathrm{Mine}$};

    \draw[common_key] (6.02, 0.00) -- (6.34, 0.00);
    \node[font=\scriptsize] at (6.45, 0.00) {$+$};
    \draw[spv_key] (6.56, 0.00) -- (6.88, 0.00);
    \node[anchor=west, font=\scriptsize] at (6.98, 0.00)
        {$S=\mathrm{SPV}$};

    \draw[common_key] (6.02, -0.43) -- (6.34, -0.43);
    \node[font=\scriptsize] at (6.45, -0.43) {$+$};
    \draw[stop_key] (6.56, -0.43) -- (6.88, -0.43);
    \node[anchor=west, font=\scriptsize] at (6.98, -0.43)
        {$S=\mathrm{Stop}$};

\end{tikzpicture}}
    \caption{Unified CTMC transition topology for $S\in\{\mathrm{Mine},\mathrm{SPV},\mathrm{Stop}\}$. Each right-hand key entry combines the black shared transitions with one strategy-conditioned family: Mine is blue solid, SPV is purple dashed, and Stop is orange dash--dot. Every colored curve carries its instantiated rate; equal rates remain labeled separately when they belong to distinct strategies. The $1\!\to\!0$ transition exists only for SPV. All black edges are shared by the three scenarios.}
    \Description{One CTMC with strategy-conditioned transition rates for Mine, SPV, and Stop.}
    \label{fig:markov_chains}
\end{figure}

\subsection{Steady-State Distribution}\label{sec:ctmc_steadystate}
Let $\boldsymbol{\pi}^S=[\pi_0^S,\dots,\pi_5^S]$ denote the CTMC steady-state distribution under target strategy $S$, where $\pi_i^S$ is the long-run time proportion spent in state $i$. This distribution is obtained by solving the global balance equations under the corresponding strategy. The full algebraic derivation for all three scenarios is deferred to Appendix~\ref{app:steady_state}.

The steady-state solutions under the three scenarios share a common denominator, which we extract and denote by the partition function $D_S$. This term captures the effective \emph{evolution damping} induced by a given target strategy. Specifically,
$
D_{\mathrm{Mine}} = 1 + \alpha(1-\alpha-r_1\beta), 
D_{\mathrm{SPV}} = D_{\mathrm{Mine}} - \alpha\eta,
~\text{and}~
D_{\mathrm{Stop}} = D_{\mathrm{SPV}} - \eta$.

\begin{lemma}[\textbf{Deterring-State Prolongation}]\label{lem:prob_order}
For the three target response strategies $S\in\{\mathrm{Mine},\mathrm{SPV},\mathrm{Stop}\}$, the total steady-state probability of remaining in the deterring states,
$
\pi_{\mathrm{det}}^S \triangleq \pi_1^S+\pi_2^S$,
satisfies the strict ordering
$
\pi_{\mathrm{det}}^{\mathrm{Mine}} < \pi_{\mathrm{det}}^{\mathrm{SPV}} < \pi_{\mathrm{det}}^{\mathrm{Stop}}$.
\end{lemma}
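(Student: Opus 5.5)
The plan is to reduce the ordering to a statement about the partition functions. I will show that for every $S$ the deterring mass has the closed form
\[
\pi_{\mathrm{det}}^S=\frac{\alpha}{D_S}.
\]
Since $D_{\mathrm{Mine}}>D_{\mathrm{SPV}}>D_{\mathrm{Stop}}>0$, the strict ordering then follows. The closed form comes from the global balance equations of the CTMC in Figure~\ref{fig:markov_chains}, using the transition rates listed there and in Appendix~\ref{app:state_transitions}.

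\textbf{Step 1 (deterring states relative to State~0).} Balance at State~1 and State~2 gives
\[
\pi_1^S q_S=\pi_0^S(1-r_1)\alpha,\qquad \pi_2^S q_S=\pi_0^S r_1\alpha,
\]
where $\lambda q_S$ is the total exit rate from a deterring state. Reading the exit rates off the figure:
\begin{itemize}
\item Under Mine, State~1 exits at $\lambda(\beta+\delta+\eta)$ and State~2 at $\lambda(\beta+\delta+\eta)$.
\item Under SPV, State~1 exits at $\lambda(\beta+\delta+\eta)$, including the $1\to0$ SPV-trap edge $\lambda\eta$, and State~2 at $\lambda(\beta+\eta+\delta)$.
\item Under Stop, both exit at $\lambda(\beta+\delta)$.
\end{itemize}
Hence $q_S=1-\alpha$ for Mine and SPV, and $q_S=1-\alpha-\eta$ for Stop. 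The rates out of States~1 and~2 coincide, so the split $r_1$ cancels and $\pi_{\mathrm{det}}^S=\pi_0^S\alpha/q_S$.

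\textbf{Step 2 (racing states).} Each racing state exits to State~0 at rate $\lambda$, so its probability equals the inflow:
\[
\pi_3^S=\beta\pi_1^S,\qquad \pi_4^S=x_S\pi_1^S,\qquad \pi_5^S=y_S\pi_2^S.
\]
Here $x_S,y_S\in\{\delta+\eta,\delta\}$ according to whether $\mathcal{T}$'s blocks feed the race. Normalizing $\sum_i\pi_i^S=1$ and multiplying through by $q_S$ gives $\pi_0^S=q_S/D_S$.

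I will check that this $D_S$ reproduces the stated partition functions. Under Mine, $(1-\alpha)+\alpha+\alpha(1-\alpha-r_1\beta)=D_{\mathrm{Mine}}$. Under SPV, the racing inflow drops by $\alpha\eta$. Under Stop, the $q_S$ shift contributes $-\eta$ and the racing terms $-\alpha\eta$, giving $D_{\mathrm{Stop}}=D_{\mathrm{Mine}}-\alpha\eta-\eta$. It follows that $\pi_{\mathrm{det}}^S=\alpha/D_S$.

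\textbf{Step 3 (ordering).} From the displayed relations,
\[
D_{\mathrm{Mine}}-D_{\mathrm{SPV}}=\alpha\eta>0,\qquad D_{\mathrm{SPV}}-D_{\mathrm{Stop}}=\eta>0,
\]
assuming $\alpha,\eta>0$. Positivity of $D_{\mathrm{Stop}}$ follows from $D_{\mathrm{Stop}}=q_{\mathrm{Stop}}+\alpha+\alpha(\beta+\delta-r_1\beta)>0$. Taking reciprocals of the three positive denominators gives the strict chain in Lemma~\ref{lem:prob_order}.

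The main obstacle is Step~1. I must read the strategy-dependent rates correctly, in particular that the $1\to0$ edge exists only under SPV and that the rates out of States~1 and~2 coincide, so that $r_1$ cancels from $\pi_{\mathrm{det}}$. I will cross-check the resulting normalizing constants against the given $D_S$ as a consistency test.
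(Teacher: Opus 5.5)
Your proposal is correct and follows essentially the same route as the paper. The paper likewise takes $\pi_{\mathrm{det}}^S=\alpha/D_S$ from the closed-form stationary distributions in Appendix~\ref{app:steady_state}, which you re-derive yourself using the equal exit rates $q_S$ from States~1 and~2, and then concludes from $D_{\mathrm{Mine}}>D_{\mathrm{SPV}}>D_{\mathrm{Stop}}>0$ and the fact that $\alpha/D$ is decreasing; your explicit check that $D_{\mathrm{Stop}}>0$ fills in a step the paper only asserts.
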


\begin{proof}
From the closed-form steady-state distribution derived in Appendix~\ref{app:steady_state}, the total probability of the deterring states can be written uniformly as
$
\pi_{\mathrm{det}}^S = \frac{\alpha}{D_S}$.
Because $\alpha,\eta>0$, the partition functions satisfy
$
D_{\mathrm{Mine}} > D_{\mathrm{SPV}} > D_{\mathrm{Stop}} > 0$.
Since $g(D)=\alpha/D$ is strictly decreasing for $D>0$, the claimed ordering follows immediately.
\end{proof}

\subsection{Expected Utility Analysis}\label{sec:expected_utility}
Using a \textit{Markov reward process} (MRP) \cite{markove_model}, we derive the expected utility of each party under different target response strategies. To make the reward flow explicit, we extract three normalized quantities with direct economic meaning. The complete derivation is given in Appendix~\ref{app:formal_extraction}.

\begin{itemize}
    \item \textbf{Effective block-generation rate ($v_{\mathcal{M}}^S$):} the expected main-chain-confirmed block output per unit hash power.
    \item \textbf{Share-extraction rate ($s_{\mathcal{M}}^S$):} the expected parasitic revenue obtained through PPoW-based infiltration of the victim pool.
    \item \textbf{Active-time ratio ($\theta_{\mathcal{M}}^S\in[0,1]$):} the fraction of time during which a miner remains online and incurs operating cost.
\end{itemize}

The normalized utility of any participant can then be written in the unified closed form
\begin{equation}\label{eq:u_m}
    U_{\mathcal{M}}^S = c \Big[ \omega_b \big( v_{\mathcal{M}}^S + s_{\mathcal{M}}^S \big) - \theta_{\mathcal{M}}^S \Big].
\end{equation}

\heading{State-Dependent MEV Inflation.}
Before evaluating utilities, we must account for reward inflation in the deterring states. During deterrence, partial shutdown or invalidation of hash power prolongs block intervals and thereby increases the amount of time-varying MEV accumulated in a single block.

Let $\Delta \alpha_{\mathrm{act}}^S$ denote the fraction of active-hash-power vacuum in the deterring states, and let $M\in[0,1)$ denote the share of MEV in total block value. We define the reward-inflation coefficient as
$
    e^S = 1 + M\left(\frac{\Delta \alpha_{\mathrm{act}}^S}{1-\Delta \alpha_{\mathrm{act}}^S}\right)\ge 1$,
with the full derivation deferred to Appendix~\ref{app:mev_inflation}. Because only target shutdown creates a more severe active-hash-power vacuum, we have $e^{\mathrm{Stop}} > e^{\mathrm{SPV}} = e^{\mathrm{Mine}}$. Thus, defensive retreat by the target endogenously increases the value of blocks generated during deterrence.

\subsubsection{Utility of the Target}
Because $\mathcal{T}$ has no share-extraction capability, $s_{\mathcal{T}}^S=0$, and its normalized utility reduces to
$
U_{\mathcal{T}}^S = c\left(\omega_b v_{\mathcal{T}}^S - \theta_{\mathcal{T}}^S\right)$.

\heading{Scenario 1: Normal Mining ($S=\mathrm{Mine}$).}
The target remains active throughout, so $\theta_{\mathcal{T}}^{\mathrm{Mine}}=1$. Its effective block-generation rate consists of two parts: (i) race-resolving blocks produced in non-deterring states and immediately confirmed on chain; and (ii) triggering blocks produced in deterring states that later win the race:
\begin{equation}\label{eq:v_t_mine}
    v_{\mathcal{T}}^{\mathrm{Mine}} = \sum_{i\in\{0,3,4,5\}} \pi_i^{\mathrm{Mine}} + e^{\mathrm{Mine}}\Big(\pi_1^{\mathrm{Mine}}p_{\mathcal{E}}^4 + \pi_2^{\mathrm{Mine}}p_{\mathcal{E}}^5\Big).
\end{equation}
Its expected utility is 
$
U_{\mathcal{T}}^{\mathrm{Mine}} = c\left(\omega_b v_{\mathcal{T}}^{\mathrm{Mine}} - 1\right)$.

\heading{Scenario 2: SPV Mining ($S=\mathrm{SPV}$).}
The target remains online, so $\theta_{\mathcal{T}}^{\mathrm{SPV}}=1$. However, any SPV-mined output produced during deterrence is necessarily invalid because the parent body is unavailable. Its effective block-generation rate becomes
$
v_{\mathcal{T}}^{\mathrm{SPV}} = \sum_{i\in\{0,3,4,5\}} \pi_i^{\mathrm{SPV}}$,
and its expected utility is
$
U_{\mathcal{T}}^{\mathrm{SPV}} = c\left(\omega_b v_{\mathcal{T}}^{\mathrm{SPV}} - 1\right)$.

Although the steady-state probabilities differ across the two scenarios, Appendix~\ref{app:spv_dominated} proves that
$
U_{\mathcal{T}}^{\mathrm{SPV}} < U_{\mathcal{T}}^{\mathrm{Mine}}$
under all parameter settings. Hence, in game-theoretic terms, $S=\mathrm{SPV}$ is a \emph{strictly dominated strategy}. By iterated elimination of strictly dominated strategies (IESDS), rational miners will abandon it after observing losses. In the remaining analysis, we therefore restrict the effective target strategy space to $S\in\{\mathrm{Mine},\mathrm{Stop}\}$.

\heading{Scenario 3: Shutdown ($S=\mathrm{Stop}$).}
The target disconnects during deterrence, so its active-time ratio shrinks to the non-deterring states. Because it does not participate in external branch competition while offline, its effective block-generation rate equals its active-time ratio:
$
v_{\mathcal{T}}^{\mathrm{Stop}} = \theta_{\mathcal{T}}^{\mathrm{Stop}}
= \sum_{i\in\{0,3,4,5\}} \pi_i^{\mathrm{Stop}}$.
Its expected utility becomes
$
    U_{\mathcal{T}}^{\mathrm{Stop}} = c\cdot \theta_{\mathcal{T}}^{\mathrm{Stop}}(\omega_b-1)
    = \theta_{\mathcal{T}}^{\mathrm{Stop}}\cdot U_{\mathcal{T}}^b$.
Thus, the payoff of shutdown is proportional to both the fraction of time spent outside deterrence and the baseline honest-mining utility.

\subsubsection{Utility of the Adversary}
The attacker's revenue stream combines direct block rewards and parasitic share extraction. To maximize capital efficiency, $\mathcal{A}$ shuts down only its private mining power during deterrence, while keeping infiltration power active throughout. Its global active-time ratio is therefore
$
\theta_{\mathcal{A}}^S = 1-(\pi_1^S+\pi_2^S)(1-r_2)$.
Substituting the two core revenue coefficients below into the unified utility form in Equation~\eqref{eq:u_m} yields the attacker's full theoretical utility under each state.

\heading{Effective Block-Generation Rate.}
The attacker's effective block-generation rate contains two terms: (i) race-resolving blocks confirmed on chain in private-branch racing states; and (ii) triggering blocks found earlier that later win the corresponding private race: $
    v_{\mathcal{A}}^S = \pi_3^S + \pi_4^S + \frac{1}{\alpha}\big(\pi_3^S p_{\mathcal{A}}^3 + \pi_4^S p_{\mathcal{A}}^4\big)$.

\heading{Share-Extraction Rate.}
Share extraction is the core mechanism that allows PDoS to externalize attack cost and sustain the attack lifecycle. Define the local time-weighted infiltration ratio by
\[
\bar{r}_i^S=\frac{r_1\pi_0^S+r_2\pi_i^S}{\pi_0^S+\pi_i^S},
\qquad i\in\{1,2\},
\]
and define the normalized share-extraction function by
$
f(r)=\frac{r}{\beta+r\alpha}$.
Then the share-extraction rate can be written as
\begin{equation*}
    s_{\mathcal{A}}^S =
    \pi_0^S \beta f(r_1)
    + e^S\Big[\pi_3^S p_{\mathcal{V}}^3 f(\bar{r}_1^S) + \pi_2^S \beta f(\bar{r}_2^S)\Big]
    + \pi_5^S\Big[p_{\mathcal{A}}^5 f(\bar{r}_2^S) + 1\Big].
\end{equation*}

\section{Cost-Benefit Analysis}
In this section, we show that PDoS significantly lowers the hash-power threshold required to induce network shutdown, increases the attacker's expected utility through infiltration, and thereby enables longer-lasting disruption under a fixed budget. We further reveal an asymmetric incentive effect induced by high-MEV environments.

\subsection{Attack Threshold Dominance}
To quantify and compare the deterrence power of different attacks, we first introduce the \emph{shutdown utility gap} of the target miner. Define the difference between the normalized utilities of $\mathcal{T}$ under $S=\mathrm{Mine}$ and $S=\mathrm{Stop}$ as
\begin{equation}\label{eq:u_t_diff}
    \Delta U_{\mathcal{T}}^\mathrm{Stop}
    \;\triangleq\;
    U_{\mathcal{T}}^{\mathrm{Mine}} - U_{\mathcal{T}}^{\mathrm{Stop}}
    \;=\;
    c \Big[ \omega_b \big( v_{\mathcal{T}}^{\mathrm{Mine}} - v_{\mathcal{T}}^{\mathrm{Stop}} \big) - \big( 1 - v_{\mathcal{T}}^{\mathrm{Stop}} \big) \Big].
\end{equation}

\begin{definition}[\textbf{Critical Attacker Hash Power $\alpha_{\mathcal{A}}^*$}]\label{def:alpha_crit}
The minimum attacker hash power required to flip the target's optimal response from $S=\mathrm{Mine}$ to $S=\mathrm{Stop}$ is defined as the infimum of $\alpha$ such that the shutdown utility gap becomes strictly negative:
\begin{equation*}
    \alpha_{\mathcal{A}}^*
    \;\triangleq\;
    \inf \left\{ \alpha \mid \Delta U_{\mathcal{T}}^\mathrm{Stop}(\alpha) < 0 \right\}.
\end{equation*}
\end{definition}

This quantity characterizes the security threshold against liveness-oriented deterrence. A smaller $\alpha_{\mathcal{A}}^*$ implies that the network can be paralyzed by a smaller fraction of malicious hash power. The following theorem establishes the threshold advantage of PDoS; the full proof is deferred to Appendix~\ref{app:threshold-proof}.

\begin{theorem}[\textbf{Threshold Dominance}]\label{thm:threshold_dominance}
Under the same network and economic parameters, the critical attacker hash power of PDoS is never larger than that of BDoS, i.e.,
$\alpha_{\mathcal{A}}^* \big|_{\mathrm{PDoS}}
    \;\le\;
    \alpha_{\mathcal{A}}^* \big|_{\mathrm{BDoS}}$.
\end{theorem}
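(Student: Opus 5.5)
The plan is to embed BDoS as a special case of the PDoS strategy family and then use the fact that the PDoS critical hash power is an infimum over a larger set. Concretely, I will view $\alpha_{\mathcal{A}}^*|_{\mathrm{PDoS}}$ as the threshold obtained when the attacker optimizes the infiltration ratios $(r_1,r_2)$ (Appendix~\ref{app:optimization}), so that PDoS succeeds at $\alpha$ whenever \emph{some} admissible $(r_1,r_2)$ makes $\Delta U_{\mathcal{T}}^{\mathrm{Stop}}(\alpha;r_1,r_2)<0$. The first step is to check that setting $r_1=r_2=0$ (no infiltration, so State~2 and State~5 are unreachable) reproduces the BDoS dynamics. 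With $r_1=0$, the rate $\lambda r_1\alpha$ into State~2 vanishes, the partition functions become those of BDoS (e.g.\ $D_{\mathrm{Mine}}=1+\alpha(1-\alpha)$), and the only remaining paths are $0\to1\to\{0,3,4\}\to0$. This is exactly the header-only deterrence with a private race from BDoS, with $p_{\mathcal{A}}^3,p_{\mathcal{A}}^4$ as in Table~\ref{tab:race_prob}.

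The second step is to verify that under this restriction the target's quantities $v_{\mathcal{T}}^{\mathrm{Mine}}$, $v_{\mathcal{T}}^{\mathrm{Stop}}$ and $e^S$ coincide with the BDoS target utilities. Substituting $\pi_2^S=\pi_5^S=0$ into Equation~\eqref{eq:v_t_mine} and into the Stop expression gives a shutdown gap \eqref{eq:u_t_diff} that equals the BDoS gap pointwise in $\alpha$. From this, for every $\alpha$,
\[
\{\alpha \mid \Delta U_{\mathcal{T}}^{\mathrm{Stop}}|_{\mathrm{BDoS}}(\alpha)<0\}\subseteq\{\alpha \mid \exists (r_1,r_2):\ \Delta U_{\mathcal{T}}^{\mathrm{Stop}}|_{\mathrm{PDoS}}(\alpha;r_1,r_2)<0\}.
\]
Since an infimum over a superset is no larger, the stated inequality $\alpha_{\mathcal{A}}^*|_{\mathrm{PDoS}}\le\alpha_{\mathcal{A}}^*|_{\mathrm{BDoS}}$ follows.

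To make the argument robust, I also want to show that infiltration weakly decreases the gap, so that the threshold comparison does not rest solely on the degenerate choice. A natural route is to differentiate $\Delta U_{\mathcal{T}}^{\mathrm{Stop}}$ in $r_1$. Infiltration blocks found in State~2 lead to race~5, where $p_{\mathcal{E}}^5=1-\alpha-\beta-\gamma(\eta+\delta)\le p_{\mathcal{E}}^4$ because the victim's power is hijacked. So, relative to State~1, each deterring episode started by infiltration yields the target a smaller expected triggering reward in the Mine scenario, while the Stop payoff depends on $\pi_{\mathrm{det}}^S=\alpha/D_S$ (Lemma~\ref{lem:prob_order}). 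This monotonicity is not needed for the inequality itself, but it indicates where any strict improvement would come from.

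The main obstacle is the second step. I must confirm that the paper's PDoS formulas, in particular $e^S$ and the common denominators $D_S$, specialize \emph{exactly} to the BDoS model's target utility at $r_1=0$. The BDoS model could differ in modeling details, for example in how MEV inflation or the victim pool's separate role enters. If they do not match literally, I would instead argue by coupling: BDoS treats $\mathcal{V}$ as ordinary honest power. I would then check that at $r_1=r_2=0$ the PDoS chain differs from BDoS only by relabeling $\beta$ as part of the honest set, which leaves $\pi_1$, $\pi_3+\pi_4$ and the target's winning probability $p_{\mathcal{E}}^4$ unchanged. The target's utility, and hence the gap, would then be identical.
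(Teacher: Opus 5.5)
Your proposal is correct and follows the paper's proof. BDoS is identified with the zero-infiltration point $(r_1,r_2)=(0,0)$, so any $\alpha$ with $\Delta U_{\mathcal{T}}^{\mathrm{Stop}}|_{\mathrm{BDoS}}(\alpha)<0$ also admits a PDoS configuration (in the paper, the minimizing one) with negative gap; this gives the stopping-set inclusion and hence the infimum inequality. Your explicit check that the $r_1=0$ specialization reproduces the BDoS target utilities is a bit more careful than the paper, which simply asserts the identity. The monotonicity-in-$r_1$ digression, however, is unnecessary and not sign-definite in general: $\pi_{\mathrm{det}}^{\mathrm{Stop}}$ also grows with $r_1$ and, when $\omega_b>1$, pushes the gap upward.
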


\subsection{Net Cost Dominance}
To quantify the economic burden of the attack, we introduce the \emph{net cost rate}. When the target adopts strategy $S$, define the attacker's net cost rate $\mathcal{C}_{\mathcal{A}}^S$ as the negative of its normalized utility:
\begin{equation}\label{eq:net_cost}
    \mathcal{C}_{\mathcal{A}}^S
    \;\triangleq\;
    -U_{\mathcal{A}}^S
    \;=\;
    c \Big[ \theta_{\mathcal{A}}^S - \omega_b \big( v_{\mathcal{A}}^S + s_{\mathcal{A}}^S \big) \Big].
\end{equation}

\begin{definition}[\textbf{Minimum Attack Net-Cost Rate $\mathcal{C}_{\mathcal{A}}^*$}]\label{def:min_cost}
The minimum normalized net expenditure that the attacker must bear while sustaining the attack is defined as
$\mathcal{C}_{\mathcal{A}}^*
    \;\triangleq\;
    \min \mathcal{C}_{\mathcal{A}}^S$.
\end{definition}

When $\mathcal{C}_{\mathcal{A}}^* \le 0$, the attacker has crossed the break-even point, and the attack becomes a victim-funded self-sustaining attack. The following theorem shows that PDoS dominates BDoS in terms of minimum attack net cost; the full proof is deferred to Appendix~\ref{app:net-cost-proof}.

\begin{theorem}[\textbf{Net Cost Dominance}]\label{thm:net_cost_dominance}
Under the same network and economic parameters, the minimum attack net-cost rate of PDoS is never higher than that of BDoS, i.e.,
$\mathcal{C}_{\mathcal{A}}^* \big|_{\mathrm{PDoS}}(\alpha)
    \;\le\;
    \mathcal{C}_{\mathcal{A}}^* \big|_{\mathrm{BDoS}}(\alpha)$.
\end{theorem}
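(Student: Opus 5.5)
The argument is an embedding: every BDoS strategy can be reproduced inside the PDoS strategy space. The minimum net cost over the larger family can then only be lower.

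\textbf{Step 1 (embedding).} Fix all network and economic parameters and the attacker power $\alpha$. Consider the PDoS controller of Algorithm~\ref{alg:pdos_controller} with infiltration ratios $r_1=r_2=0$. In that case:
\begin{itemize}
\item State~2 is entered at rate $\lambda r_1\alpha=0$, so States~2 and~5 are transient and carry zero steady-state mass, i.e.\ $\pi_2^S=\pi_5^S=0$.
\item State~3 (the PvV race) and State~4 (the PvE race) merge into a single private-versus-honest race, which is exactly the BDoS race.
\item The partition function reduces to $D_{\mathrm{Mine}}=1+\alpha(1-\alpha)$, with the corresponding $D_{\mathrm{SPV}}$ and $D_{\mathrm{Stop}}$.
\item Since $f(0)=0$, every term of $s_{\mathcal{A}}^S$ vanishes except possibly $\pi_3^S p_{\mathcal{V}}^3 f(\bar r_1^S)$. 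Here $\bar r_1^S=0$, so this term vanishes too, and $s_{\mathcal{A}}^S=0$.
\end{itemize}
I will check from the closed forms of Appendix~\ref{app:steady_state} that, under this specialization, $\boldsymbol{\pi}^S$, $v_{\mathcal{A}}^S$, $\theta_{\mathcal{A}}^S$ and $e^S$ coincide with the BDoS quantities for every target response $S$. For $\theta_{\mathcal{A}}^S$ this reads $\theta_{\mathcal{A}}^S=1-\pi_1^S$, i.e.\ private mining is idle during deterrence. This gives $\mathcal{C}_{\mathcal{A}}^S|_{\mathrm{PDoS}}(r_1=r_2=0)=\mathcal{C}_{\mathcal{A}}^S|_{\mathrm{BDoS}}$.

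\textbf{Step 2 (minimization over a superset).} By Definition~\ref{def:min_cost}, $\mathcal{C}_{\mathcal{A}}^*|_{\mathrm{PDoS}}$ is a minimum over the target responses $S$ and over the attacker's admissible $(r_1,r_2)\in[0,1]^2$. This set contains the point $(0,0)$. Together with Step~1, this yields
\[
\mathcal{C}_{\mathcal{A}}^*|_{\mathrm{PDoS}}\le\min_S \mathcal{C}_{\mathcal{A}}^S|_{\mathrm{PDoS}}(0,0)=\mathcal{C}_{\mathcal{A}}^*|_{\mathrm{BDoS}}.
\]

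\textbf{Consistency of the minimizing response.} If the minimum is meant to be taken at the target's equilibrium response rather than over all $S$, I will use Theorem~\ref{thm:threshold_dominance}. Its proof is presumably the same embedding applied to $\Delta U_{\mathcal{T}}^{\mathrm{Stop}}$, and it shows that whenever BDoS induces $S=\mathrm{Stop}$, the PDoS parameters achieving at most the BDoS cost can be chosen so that they still induce $\mathrm{Stop}$. The point $(0,0)$ itself already does so, because it reproduces BDoS exactly.

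\textbf{Main obstacle.} The delicate step is the exact matching in Step~1. The paper's BDoS benchmark may be defined with its own accounting, for example whether the attacker's private power stays active during deterrence, and with its own $e^S$ and race probabilities ($p_{\mathcal{A}}^3$ versus $p_{\mathcal{A}}^4$ when $\beta$ is honest). With $r_1=r_2=0$ the victim pool is an ordinary honest miner, so States~3 and~4 should aggregate to the BDoS race with probability $\alpha+\gamma(1-\alpha)$. The mismatch is that $p_{\mathcal{A}}^3=\alpha+\gamma(\eta+\delta)$ excludes $\beta$ while $p_{\mathcal{A}}^4$ includes it.

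I would first verify this aggregation directly. If it does not match exactly, I would instead show that the PDoS cost at $(0,0)$ is at most the BDoS cost, term by term: larger or equal $v_{\mathcal{A}}$, smaller or equal $\theta_{\mathcal{A}}$, and nonnegative $s_{\mathcal{A}}$. Monotonicity in each term then suffices for the inequality.
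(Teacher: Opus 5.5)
Your proposal is correct and follows the paper's proof. The paper takes BDoS to be the zero-infiltration point of the same model, $\mathcal{C}_{\mathcal{A}}^*|_{\mathrm{BDoS}}(\alpha)\equiv\mathcal{C}_{\mathcal{A}}^S(\alpha,0,0)$, and then bounds the minimum over $[0,1]^2$ by the value at $(0,0)$, which is exactly your Steps~1--2. Because BDoS is defined this way, your ``main obstacle'' never arises, and that is fortunate: your term-by-term fallback would fail against a neutral-pool benchmark, since at $(0,0)$ one has $p_{\mathcal{A}}^3=\alpha+\gamma(\eta+\delta)<\alpha+\gamma(1-\alpha)$ whenever $\gamma\beta>0$, which makes $v_{\mathcal{A}}^S$ strictly smaller, not larger.
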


\subsection{Attack Endurance}
Theorems~\ref{thm:threshold_dominance} and~\ref{thm:net_cost_dominance} establish the advantage of PDoS in both attack threshold and attack cost. In practice, however, rational target miners typically do not shut down immediately; instead, they undergo a period of observation and strategic adjustment. As a result, the eventual success of the attack depends critically on whether the attacker can sustain deterrence before exhausting its budget. We therefore introduce a budget-constrained notion of attack endurance to quantify how cost advantage translates into temporal persistence.

\begin{definition}[\textbf{Attack Endurance $T_{\mathcal{A}}^*$}]\label{def:attack_endurance}
Given an initial budget $W$, the maximum time window for which the attacker can sustain deterrence is defined by
\begin{equation*}
    T_{\mathcal{A}}^*
    \;\triangleq\;
    \begin{cases}
      \frac{W}{\mathcal{C}_{\mathcal{A}}^*}, & \text{if } \mathcal{C}_{\mathcal{A}}^* > 0,\\
      \infty, & \text{if } \mathcal{C}_{\mathcal{A}}^* \le 0.
    \end{cases}
\end{equation*}
\end{definition}

When $\mathcal{C}_{\mathcal{A}}^* > 0$, the attack is a conventional budget-burning attritional strategy. By Theorem~\ref{thm:net_cost_dominance}, since
$
\mathcal{C}_{\mathcal{A}}^* \big|_{\mathrm{PDoS}}
\le
\mathcal{C}_{\mathcal{A}}^* \big|_{\mathrm{BDoS}}$,
it follows that
$
T_{\mathcal{A}}^* \big|_{\mathrm{PDoS}}
\ge
T_{\mathcal{A}}^* \big|_{\mathrm{BDoS}}$.
Thus, under the same budget, PDoS yields a longer attack window.

Moreover, holding one deployed policy $(r_1,r_2)$ and the reward-composition parameter $M$ fixed, differentiating Equation~\eqref{eq:net_cost} with respect to the baseline profitability factor $\omega_b$ gives
$
\frac{\partial \mathcal{C}_{\mathcal{A}}^S}{\partial \omega_b}
=
-c\big(v_{\mathcal{A}}^S+s_{\mathcal{A}}^S\big)
<0$.
Thus, for a fixed feasible policy, higher profitability reduces the attacker's net cost. For classical BDoS, where $s_{\mathcal{A}}^S=0$, this reduction is relatively mild. By contrast, PDoS can capture a proportional share of booming-market rewards through infiltration, yielding a steeper cost-reduction slope. We refer to this effect as the \emph{asymmetric bull-market subsidy effect}. When an economic or defense parameter changes, the feasible set can shrink and the maximizing policy can switch even though every fixed-policy cost curve remains monotone. Figure~\ref{fig:defenses}(c) exposes this active-policy switch under conditional payout. PDoS nevertheless crosses its first break-even point earlier under a milder profitability regime and can obtain victim-funded persistence more easily than classical liveness attacks.

\subsection{Self-Sustaining Attack}
Crossing the break-even point is the key condition for self-sustainability. To characterize the boundary of self-sustaining PDoS, we introduce the following definition and theorem; the full derivation and proof are deferred to Appendix~\ref{app:self-sustain-proof}.

\begin{definition}[\textbf{Joint Self-Sustaining Threshold $\Omega_{\mathcal{A}}^*$}]\label{def:self_sustain_bound}
The minimum profitability factor at which one \emph{same} configuration both deters and breaks even is
\begin{equation*}
    \Omega_{\mathcal{A}}^*
    \triangleq
    \inf\{\omega_b:\mathcal{F}_{D}(\omega_b)\neq\emptyset
    \ \wedge\ \mathcal{C}_{\mathcal{A}}^*(\omega_b)\le0\}.
\end{equation*}
\end{definition}

\begin{theorem}[\textbf{Self-Sustaining Dominance}]\label{thm:self-sustain}
Under the same network environment, the joint profitability threshold required for a deterrence-effective, self-sustaining PDoS is never higher than for BDoS. It is strictly lower whenever $G^{\mathrm{Stop}}(0,0)>0$ at the BDoS break-even point and the constructed perturbed policy remains deterrence-feasible, i.e.,
$\Omega_{\mathcal{A}}^* \big|_{\mathrm{PDoS}}
    \;\le\;
    \Omega_{\mathcal{A}}^* \big|_{\mathrm{BDoS}}$.
\end{theorem}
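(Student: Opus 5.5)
The plan is to treat BDoS as the special case $r_1=r_2=0$ of the PDoS policy family. In that case no hash power infiltrates the victim pool, State~2 and State~5 are never reached, and $s_{\mathcal{A}}^S=0$. The utilities $U_{\mathcal{T}}^S$ and $U_{\mathcal{A}}^S$ then reduce to the BDoS expressions under the same network and economic parameters. I will write $\mathcal{F}_D(\omega_b)$ for the set of deterrence-feasible policies $(r_1,r_2)$ at profitability $\omega_b$, i.e. those with $\Delta U_{\mathcal{T}}^{\mathrm{Stop}}<0$. I will write $G^{\mathrm{Stop}}(r_1,r_2)\triangleq -\mathcal{C}_{\mathcal{A}}^{\mathrm{Stop}}(r_1,r_2)$ for the attacker's utility under the Stop response. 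The BDoS joint threshold is then the infimum over $\omega_b$ at which the single point $(0,0)$ is both in $\mathcal{F}_D(\omega_b)$ and satisfies $G^{\mathrm{Stop}}(0,0)\ge 0$. The PDoS threshold takes the infimum over a strictly larger feasible set of policies, so the two thresholds are compared as infima over nested sets.

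\textbf{Weak inequality.} Fix any $\omega_b$ in the BDoS admissible set. Then $(0,0)\in\mathcal{F}_D(\omega_b)$ for PDoS, since the target's utilities coincide at $(0,0)$. Because $\mathcal{C}_{\mathcal{A}}^*$ is a minimum over PDoS policies, $\mathcal{C}_{\mathcal{A}}^*|_{\mathrm{PDoS}}(\omega_b)\le \mathcal{C}_{\mathcal{A}}^{\mathrm{Stop}}(0,0)\le 0$; this is the reasoning of Theorem~\ref{thm:net_cost_dominance}, applied at each fixed $\omega_b$. One subtlety is that the definition asks for a \emph{same} configuration that both deters and breaks even. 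The witness $(0,0)$ satisfies both conditions, so the requirement is met. Hence the BDoS admissible set is contained in the PDoS admissible set, and taking infima gives $\Omega^*|_{\mathrm{PDoS}}\le\Omega^*|_{\mathrm{BDoS}}$.

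\textbf{Strict inequality.} Let $\omega^\dagger=\Omega^*|_{\mathrm{BDoS}}$ be the break-even point. Here I read the hypothesis ``$G^{\mathrm{Stop}}(0,0)>0$'' as a positive directional derivative, $\partial G^{\mathrm{Stop}}/\partial r>0$ at $(0,0)$ in some direction $(r_1,r_2)=t(d_1,d_2)$, with $d_1,d_2\ge 0$ and $d\ne 0$. This derivative should come from the share term $\pi_0\beta f(r_1)$ and the related share terms, which are of order $t\,\omega_b/\beta$ to first order. Set $\omega^\dagger-\epsilon$ for small $\epsilon>0$. By continuity of $\pi^S$, $e^S$, $v$, $s$ and $\theta$ in $(\omega_b,r_1,r_2)$, the BDoS utility at $(0,0)$ is $-O(\epsilon)$. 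A perturbed policy $t d$ with $t$ of order $\epsilon$, but large enough, gives $G^{\mathrm{Stop}}(td;\omega^\dagger-\epsilon)\ge 0$ by a first-order Taylor expansion. The stated hypothesis that this perturbed policy remains in $\mathcal{F}_D(\omega^\dagger-\epsilon)$ then puts $\omega^\dagger-\epsilon$ in the PDoS admissible set, which gives strict inequality.

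\textbf{Main obstacle.} The hard step is the first-order expansion: showing that $\partial_t G^{\mathrm{Stop}}(td)|_{t=0}$ is well defined and positive, and that the $O(\epsilon)$ loss is uniform. This requires differentiating $D_{\mathrm{Stop}}=1+\alpha(1-\alpha-r_1\beta)-\alpha\eta-\eta$ and the local ratios $\bar r_i^S$. The term $\theta_{\mathcal{A}}$ also rises with $r_2$. I would first try the direction $d=(1,0)$, i.e. infiltration only in State~0. There the dominant gain is $\omega_b\pi_0\beta f'(0)=\omega_b\pi_0$, the loss from reduced private block output is $\omega_b\pi_0$ times a factor involving $p_{\mathcal{A}}$ races, and the cost change is zero. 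Because the trade is exchanging private expected block reward for share reward plus the victim's hijacked race power in State~5, the net sign should be positive, and I would verify this explicitly from Table~\ref{tab:race_prob}.
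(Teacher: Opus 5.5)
Your weak inequality uses the same core idea as the paper: BDoS is the policy $(r_1,r_2)=(0,0)$ inside the PDoS square, and enlarging the admissible policy set can only lower an infimum. The packaging differs, and yours is the more faithful one. The paper divides out $c$, rewrites break-even as $\omega_b\ge\theta_{\mathcal{A}}^S/(v_{\mathcal{A}}^S+s_{\mathcal{A}}^S)$, and takes the minimum of this ratio over $[0,1]^2$ as the threshold. That is an $\omega_b$-free constant, convenient for computation, which it then compares with the ratio at $(0,0)$. It never invokes $\mathcal{F}_D(\omega_b)\neq\emptyset$, so it really compares pure break-even thresholds. Your inclusion of admissible sets of $\omega_b$, with $(0,0)$ as a common witness for deterrence and break-even, proves the inequality for the joint ``same configuration'' threshold as actually defined. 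You can recover the paper's ratio form by minimizing only over policies whose break-even ratio lies below their own bound $(1-v_{\mathcal{T}}^{\mathrm{Stop}})/(v_{\mathcal{T}}^{\mathrm{Mine}}-v_{\mathcal{T}}^{\mathrm{Stop}})$.

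The paper's proof says nothing about the strict claim, so your perturbation paragraph goes beyond it. It is sound in outline, but you have misplaced the difficulty.

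\textbf{Notation.} Do not set $G^{\mathrm{Stop}}\triangleq-\mathcal{C}_{\mathcal{A}}^{\mathrm{Stop}}$. With that definition the hypothesis ``$G^{\mathrm{Stop}}(0,0)>0$ at the BDoS break-even point'' is self-contradictory, because the attacker's utility at $(0,0)$ is exactly zero there. In the paper $G^{\mathrm{Stop}}$ is a settlement-weighted \emph{marginal} utility (in the ablation it is what makes positive $r_2$ wasteful), which is your second reading.

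\textbf{Direction.} Its positivity is the hypothesis, not something to verify. So perturb in the direction it controls rather than trying to establish positivity along $d=(1,0)$. Incidentally, along $(1,0)$ the cost change is not zero: at $r_2=0$ you have $\theta_{\mathcal{A}}^{\mathrm{Stop}}=1-\alpha/D_{\mathrm{Stop}}$, and $D_{\mathrm{Stop}}$ decreases in $r_1$.

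\textbf{No $\epsilon$-balancing.} You need neither $t$ of order $\epsilon$ nor a uniform remainder estimate. With $M$ fixed, $\theta_{\mathcal{A}}$, $v_{\mathcal{A}}$, $s_{\mathcal{A}}$ do not depend on $\omega_b$. They are rational in $(r_1,r_2)$ with denominators bounded away from zero when $\beta>0$, and $U_{\mathcal{A}}^{\mathrm{Stop}}$ is affine in $\omega_b$. So a positive derivative at $\omega^\dagger$ gives one fixed policy $r^\dagger=t_0d$ with $U_{\mathcal{A}}^{\mathrm{Stop}}(r^\dagger;\omega^\dagger)>0$. Equivalently, $\theta_{\mathcal{A}}/(v_{\mathcal{A}}+s_{\mathcal{A}})$ evaluated at $r^\dagger$ lies strictly below $\omega^\dagger$.

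\textbf{Feasibility.} $\Delta U_{\mathcal{T}}^{\mathrm{Stop}}$ is increasing in $\omega_b$, since $v_{\mathcal{T}}^{\mathrm{Mine}}>v_{\mathcal{T}}^{\mathrm{Stop}}$. Hence deterrence-feasibility of $r^\dagger$ at $\omega^\dagger$ persists at every smaller $\omega_b$. For small $t_0$ this feasibility is even automatic by continuity, because $(0,0)$ deters strictly at $\omega^\dagger$. This gives $\Omega_{\mathcal{A}}^*|_{\mathrm{PDoS}}<\omega^\dagger$ directly.
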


This theorem identifies the attacker's survival threshold. By lowering the profitability threshold for self-sustainability, PDoS can achieve victim-funded zero-cost disruption in market regimes where BDoS is still engaged in budget burning.

\subsection{Deterrence Boundary}
Although PDoS may be persistent, the ultimate goal of deterrence is to force the target to shut down. High-profitability environments are therefore a double-edged sword: they subsidize the attacker, but they also strengthen the target miners' incentive to remain online. To characterize the validity boundary of the attack, we introduce the following definition and theorem; the full derivation and proof are deferred to Appendix~\ref{app:deterrence-proof}.

\begin{definition}[\textbf{Deterrence-Validity Upper Bound $\Omega_{\mathcal{T}}^*$}]\label{def:deterrence_bound}
The maximum baseline profitability factor under which the attacker can still force the target miner to choose $S=\mathrm{Stop}$ as its optimal response is defined as
\begin{equation*}
    \Omega_{\mathcal{T}}^*
    \;\triangleq\;
    \max_{r_1,r_2}
    \frac{1-v_{\mathcal{T}}^{\mathrm{Stop}}(r_1,r_2)}
         {v_{\mathcal{T}}^{\mathrm{Mine}}(r_1,r_2)-v_{\mathcal{T}}^{\mathrm{Stop}}(r_1,r_2)}.
\end{equation*}
A necessary and sufficient condition for PDoS to remain effective is
$\omega_b < \Omega_{\mathcal{T}}^*$.
\end{definition}

\begin{theorem}[\textbf{Deterrence Resilience}]\label{thm:deterrence_resilience}
Under the same network environment, the profitability upper bound under which PDoS remains an effective deterrent is never lower than that of classical BDoS. If $\beta>0$ and some $r_1>0$ satisfies $\Gamma(r_1)>0$ at the BDoS validity boundary, the inequality is strict, i.e.,
$\Omega_{\mathcal{T}}^* \big|_{\mathrm{PDoS}}
    \;\ge\;
    \Omega_{\mathcal{T}}^* \big|_{\mathrm{BDoS}}$.
\end{theorem}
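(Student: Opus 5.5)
The plan is to show that BDoS embeds as a special case of PDoS, namely the policy $(r_1,r_2)=(0,0)$. Then the maximization defining $\Omega_{\mathcal{T}}^*$ over the PDoS policy set dominates its value at that single point.

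\textbf{Embedding step.} Set $r_1=r_2=0$ in the CTMC of Figure~\ref{fig:markov_chains}.
\begin{itemize}
\item The edge $0\to2$ has rate $\lambda r_1\alpha=0$, so State~2 and hence State~5 are unreachable, and $\pi_2^S=\pi_5^S=0$.
\item The partition function becomes $D_{\mathrm{Mine}}=1+\alpha(1-\alpha)$, with $D_{\mathrm{SPV}}$ and $D_{\mathrm{Stop}}$ shifted accordingly.
\item The only remaining lead state is the private one. It is resolved by the races PvV and PvE with $p_{\mathcal{A}}^3$ and $p_{\mathcal{A}}^4$, which is exactly the BDoS header-withholding chain in which the victim pool is just another honest miner.
\end{itemize}
I would verify that the closed forms from Appendix~\ref{app:steady_state} reduce to the BDoS steady state. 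Then $v_{\mathcal{T}}^{\mathrm{Mine}}(0,0)$ and $v_{\mathcal{T}}^{\mathrm{Stop}}(0,0)$ from \eqref{eq:v_t_mine} and Scenario~3 coincide with the BDoS target output rates. It follows that the ratio
\[
\Phi(r_1,r_2)=\frac{1-v_{\mathcal{T}}^{\mathrm{Stop}}}{v_{\mathcal{T}}^{\mathrm{Mine}}-v_{\mathcal{T}}^{\mathrm{Stop}}}
\]
at $(0,0)$ equals $\Omega_{\mathcal{T}}^*|_{\mathrm{BDoS}}$. Since $(0,0)$ lies in the PDoS feasible set, $\Omega_{\mathcal{T}}^*|_{\mathrm{PDoS}}=\max\Phi\ge\Phi(0,0)$, which gives the weak inequality.

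\textbf{Strict part.} I read $\Gamma(r_1)$ as the sign-determining quantity of the directional change of $\Phi$ when infiltration is switched on, evaluated at the BDoS boundary $\omega_b=\Phi(0,0)$. Concretely, I would set
\[
\Gamma(r_1)\propto \bigl(1-v^{\mathrm{Stop}}(r_1,\cdot)\bigr)-\Phi(0,0)\bigl(v^{\mathrm{Mine}}(r_1,\cdot)-v^{\mathrm{Stop}}(r_1,\cdot)\bigr).
\]
Its positivity is equivalent to $\Phi(r_1,\cdot)>\Phi(0,0)$, provided the denominator $v^{\mathrm{Mine}}-v^{\mathrm{Stop}}$ stays positive. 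Equivalently, at that $\omega_b$ we have $\Delta U_{\mathcal{T}}^{\mathrm{Stop}}<0$ under the perturbed policy, while it equals $0$ under BDoS.

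So the strict step reduces to two checks.
\begin{itemize}
\item The denominator is positive. By Lemma~\ref{lem:prob_order}, $\pi_{\mathrm{det}}^{\mathrm{Stop}}>\pi_{\mathrm{det}}^{\mathrm{Mine}}$, so $v^{\mathrm{Stop}}$ is bounded away from $1$. I expect $v^{\mathrm{Mine}}>v^{\mathrm{Stop}}$ to come from the $e^{\mathrm{Mine}}$-weighted race terms.
\item $\beta>0$ is needed. Otherwise State~5 degenerates: $p_{\mathcal{A}}^5$ loses its victim-hijacking advantage, and the infiltration lead is indistinguishable from a private lead, so $\Gamma$ vanishes identically.
\end{itemize}
Given $\Gamma(r_1)>0$ for some $r_1>0$, the policy $(r_1,r_2)$ achieves a strictly larger ratio, and strictness follows.

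\textbf{Main obstacle.} The hard part is the bookkeeping that the $(0,0)$ specialization of the PDoS formulas equals the BDoS quantities exactly. In particular, the MEV coefficient $e^S$ and the $\mathcal{V}$-versus-$\mathcal{E}$ split in the rates must match BDoS's modeling, with $\beta$ merged into honest power. If BDoS in the paper's comparison is defined precisely as this specialization, the step is immediate. Otherwise I would prove equality of the reduced balance equations directly. A secondary concern is that $\Phi(0,0)$ must be well defined, i.e. the denominator must be positive; otherwise $\Omega_{\mathcal{T}}^*|_{\mathrm{BDoS}}$ is degenerate and the inequality holds trivially.
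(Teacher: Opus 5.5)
Your argument for the weak inequality is essentially the paper's own proof: identify BDoS with the policy $(r_1,r_2)=(0,0)$ and take $\Omega_{\mathcal{T}}^*|_{\mathrm{PDoS}}$ as the maximum of the same ratio over $[0,1]^2$. As in the paper, positivity of $v_{\mathcal{T}}^{\mathrm{Mine}}-v_{\mathcal{T}}^{\mathrm{Stop}}=(\pi_{\mathrm{det}}^{\mathrm{Stop}}-\pi_{\mathrm{det}}^{\mathrm{Mine}})+e^{\mathrm{Mine}}(\pi_1^{\mathrm{Mine}}p_{\mathcal{E}}^4+\pi_2^{\mathrm{Mine}}p_{\mathcal{E}}^5)$ already follows from the Lemma~\ref{lem:prob_order} term, with the race terms only nonnegative. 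The paper never defines $\Gamma$ or proves the strict clause, so your reading of $\Gamma(r_1)>0$ as $\Delta U_{\mathcal{T}}^{\mathrm{Stop}}(r_1)<0$ at $\omega_b=\Omega_{\mathcal{T}}^*|_{\mathrm{BDoS}}$ goes beyond the paper and makes strictness immediate; it is consistent with your $\beta=0$ remark, since the ratio depends on $r_1$ only through $r_1\beta$ and not on $r_2$ at all.
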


This theorem identifies the upper boundary of attack validity. By raising the profitability threshold at which the target refuses to shut down, PDoS remains effective in market regimes where BDoS already fails because the target keeps mining. Even in the ineffective region $\omega_b \ge \Omega_{\mathcal{T}}^*$, PDoS still earns more through infiltration; once profitability falls back into the effective region, PDoS enters the subsequent war of attrition with a larger budget. This establishes a theoretical advantage of PDoS across the full market cycle.

\section{Game-Theoretic Analysis}\label{sec:game-theoretic_analysis}

In this section, we extend the preceding homogeneous-strategy analysis to a heterogeneous micro-level evolutionary game. We first establish the existence of a unique strict Nash equilibrium, then characterize the liveness-degradation process that ultimately drives the system to collapse, and finally analyze the robustness of PDoS in the presence of realistic frictional costs.

\subsection{Nash Equilibrium}

\subsubsection{Micro-Miner Strategy Space}
Under a non-atomic game formulation, the independent decisions of micro-miners $\mathcal{T}_i$ (with $\eta_i \to 0$) jointly determine the aggregate active-hash-power ratio of the target population. Because SPV mining is a strictly dominated strategy (Appendix~\ref{app:spv_dominated}), and this dominance is preserved under the linear payoff scaling of micro-miners, the effective strategy space of each $\mathcal{T}_i$ reduces to
$
S_{\mathcal{T}_i} \in \{\mathrm{Mine}, \mathrm{Stop}\}$.
We define a continuous variable $x \in [0,1]$ as the fraction of target hash power that currently chooses normal mining, i.e.,
$
x = \frac{1}{\eta}\sum_{S_{\mathcal{T}_i}=\mathrm{Mine}} \eta_i$.

\subsubsection{Generalized Steady State}
Under partial shutdown, when the active fraction of the target population is $x$, the effective hash power available to advance the blockchain becomes
$
\alpha_{\mathrm{act}}(x) = \alpha + \beta + x\eta + \delta = 1-(1-x)\eta$.
As $x$ decreases, the growing hash-power vacuum continuously shifts the CTMC transition rates. Consequently, the steady-state distribution of the system becomes fully parameterized by $x$; the full derivation is given in Appendix~\ref{app:scenario4_derivation}.

\subsubsection{Uniqueness of Nash Equilibrium}
Under the non-atomic game assumption, all micro-miners are payoff-symmetric. We therefore characterize the stability of the game by the shutdown utility gap of an individual micro-miner,
$
\Delta U_{\mathcal{T}_i}^{\mathrm{Stop}}(x)
\triangleq
U_{\mathcal{T}_i}^{\mathrm{Mine}}(x) - U_{\mathcal{T}_i}^{\mathrm{Stop}}(x)$.

\begin{lemma}[\textbf{Monotonicity of the Micro-Miner Shutdown Utility Gap}]
\label{lemma:mono-utility}
Once the attacker hash power exceeds the critical threshold, i.e., $\alpha > \alpha_{\mathcal{A}}^*$, the shutdown utility gap of a micro-miner is strictly increasing in the active ratio $x$, namely,
$
\frac{d \Delta U_{\mathcal{T}_i}^{\mathrm{Stop}}(x)}{dx} > 0$.
\end{lemma}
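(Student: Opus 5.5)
We parametrize the CTMC by the active target fraction $x$ and follow the same route as for the homogeneous scenarios. Under partial shutdown, the transitions $1\to4$ and $2\to5$ carry rates $\lambda(\delta+x\eta)$ instead of $\lambda(\delta+\eta)$. The resets from the racing states still occur at rate $\lambda$. Solving the global balance equations as in Appendix~\ref{app:steady_state} then gives $\pi_{\mathrm{det}}(x)=\alpha/D(x)$ with the interpolated partition function
\[
D(x)=D_{\mathrm{Stop}}+x\,(D_{\mathrm{Mine}}-D_{\mathrm{Stop}})=D_{\mathrm{Stop}}+x\,\eta(1+\alpha).
\]
This is affine and strictly increasing in $x$, and it recovers $D_{\mathrm{Mine}}$ at $x=1$ and $D_{\mathrm{Stop}}$ at $x=0$. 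The same holds for the individual components $\pi_i(x)$: each is a ratio of affine functions of $x$ over the common denominator $D(x)$.

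Next we write the micro-miner gap by linear scaling of the target utilities. A miner of size $\eta_i\to0$ does not move $x$ by itself. Mining earns, per unit hash, the non-deterring mass plus the deterring-state triggering blocks:
\[
U^{\mathrm{Mine}}_{\mathcal{T}_i}(x)=c\big[\omega_b\big(\textstyle\sum_{i\in\{0,3,4,5\}}\pi_i(x)+e(x)(\pi_1(x)p^4_{\mathcal{E}}+\pi_2(x)p^5_{\mathcal{E}})\big)-1\big].
\]
Stopping gives $U^{\mathrm{Stop}}_{\mathcal{T}_i}(x)=c\,(1-\pi_{\mathrm{det}}(x))(\omega_b-1)$. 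Their difference collapses to
\[
\Delta U^{\mathrm{Stop}}_{\mathcal{T}_i}(x)=c\,\pi_{\mathrm{det}}(x)\Big[\omega_b\,e(x)\,\bar p_{\mathcal{E}}(x)-1\Big],
\qquad \bar p_{\mathcal{E}}(x)=\frac{\pi_1 p^4_{\mathcal{E}}+\pi_2 p^5_{\mathcal{E}}}{\pi_{\mathrm{det}}}.
\]
So the gap equals the time spent in deterrence times the net per-unit-time value of staying online there.

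We then differentiate this product with respect to $x$. Three monotonicity facts are needed. First, $\pi_{\mathrm{det}}(x)=\alpha/D(x)$ is strictly decreasing, by the computation above (the continuous analogue of Lemma~\ref{lem:prob_order}). Second, the race-win probabilities $p^4_{\mathcal{E}}$ and $p^5_{\mathcal{E}}$ in Table~\ref{tab:race_prob}, re-evaluated with active target power $x\eta$, are nondecreasing in $x$. Third, the inflation coefficient $e(x)$ is decreasing in $x$, because $\Delta\alpha_{\mathrm{act}}$ shrinks as $x$ grows. The hypothesis $\alpha>\alpha_{\mathcal{A}}^*$ together with Definition~\ref{def:alpha_crit} gives $\Delta U^{\mathrm{Stop}}_{\mathcal{T}}(1)<0$, so the bracket $\omega_b e\bar p_{\mathcal{E}}-1$ is negative at $x=1$. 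The sign mechanism is then as follows. Since the bracket is negative, a larger $x$ shrinks $\pi_{\mathrm{det}}$ and thus shrinks the magnitude of a negative quantity, which pushes the gap up.

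The main obstacle is the bracket's own derivative. The factor $e(x)$ decreases with $x$ while $\bar p_{\mathcal{E}}(x)$ increases, so its sign is not immediate. My first attempt is to show the bracket stays negative on all of $[0,1]$ and that $\frac{d}{dx}\big(\pi_{\mathrm{det}}\,\omega_b e\bar p_{\mathcal{E}}\big)\ge0$. To do this, I would substitute the explicit formula $e(x)=1+M\frac{(1-x)\eta+\cdots}{\cdots}$ from Appendix~\ref{app:mev_inflation}. The product $\pi_{\mathrm{det}}e$ should then reduce to a ratio of affine functions in $x$, since the vacuum terms share the $(1-x)\eta$ structure with $D(x)$, and its sign can be read off from a single $2\times2$ determinant. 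If that determinant cannot be signed generally, I would fall back on bounding it using $M<1$ and $\gamma\le1$. The strict decrease of $\alpha/D(x)$, multiplied by the strictly negative bracket, then supplies the strictness of $\frac{d}{dx}\Delta U^{\mathrm{Stop}}_{\mathcal{T}_i}>0$.
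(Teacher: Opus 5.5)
Your reduction to $\Delta U^{\mathrm{Stop}}_{\mathcal{T}_i}(x)=c\,\pi_{\mathrm{det}}(x)\bigl[\omega_b e(x)\bar p_{\mathcal{E}}(x)-1\bigr]$, the affine $D(x)$, and the three monotonicity facts all agree with the paper. The gap is in the step you flag as the obstacle, which is the real content of the lemma, and your plan for it fails.

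The inequality you aim for, $\frac{d}{dx}\bigl(\pi_{\mathrm{det}}e\,\bar p_{\mathcal{E}}\bigr)\ge0$, does not hold for all admissible parameters. Take $r_1=0$ and set $s=\beta+\delta+x\eta$. Then $\pi_{\mathrm{det}}e\,\bar p_{\mathcal{E}}=\alpha(1-\gamma)\frac{(1-M)s+M}{(1+\alpha)s+\alpha}$, and its derivative has the sign of $\alpha-M(1+2\alpha)$. This is negative as soon as $M>\alpha/(1+2\alpha)$, and by continuity also for small $r_1>0$.

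This quantity does not involve $\omega_b$, so you can choose $\omega_b$ to make the bracket negative. For example, $\alpha=0.15$, $\beta=0.2$, $\eta=0.1$, $\gamma=0.5$, $M=0.5$, $\omega_b=1.6$ works, so the failure is compatible with $\alpha>\alpha_{\mathcal{A}}^*$. In fact your inequality would make $\Delta U^{\mathrm{Stop}}_{\mathcal{T}_i}$ increasing for every $\omega_b$. In this regime the gap is decreasing in $x$ once $\omega_b$ is large, so no bound via $M<1$, $\gamma\le1$ can rescue it.

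The structural claim behind your plan is also wrong. Since $D(x)=(1+\alpha)(\beta+\delta+x\eta)+\alpha(1-r_1\beta)$, it is never proportional to $\beta+\delta+x\eta=1-\Delta\alpha_{\mathrm{act}}(x)$. So $\pi_{\mathrm{det}}e$ has a quadratic denominator, and no single $2\times2$ determinant signs it. Finally, you assert that the bracket is negative on all of $[0,1]$ but do not derive it. Knowing it at $x=1$ is not enough without monotonicity of $e\,\bar p_{\mathcal{E}}$.

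The missing idea is to pair $e$ with $\bar p_{\mathcal{E}}$ rather than with $\pi_{\mathrm{det}}$. We have $e(x)=(1-M)+\frac{M}{\beta+\delta+x\eta}$ and $\bar p_{\mathcal{E}}(x)=(1-\gamma)(\beta+\delta+x\eta-r_1\beta)$. The honest-branch support and the inflation denominator therefore differ only by the constant $r_1\beta$, so
\[
H(x)\triangleq e(x)\bar p_{\mathcal{E}}(x)=(1-M)\bar p_{\mathcal{E}}(x)+M(1-\gamma)\Bigl[1-\frac{r_1\beta}{\beta+\delta+x\eta}\Bigr]
\]
is nondecreasing in $x$. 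This is the paper's computation $BC-A\eta=\eta(1-\gamma)r_1\beta\ge0$ in the limit $\eta_i\to0$.

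That single fact does both jobs. First, $H(x)\le H(1)<1/\omega_b$ makes the bracket negative on all of $[0,1]$. Second, $\frac{d}{dx}\Delta U^{\mathrm{Stop}}_{\mathcal{T}_i}=c\bigl[\pi_{\mathrm{det}}'(\omega_bH-1)+\pi_{\mathrm{det}}\,\omega_bH'\bigr]$ is then a strictly positive term plus a nonnegative term. This is exactly the paper's argument.
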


This lemma reveals a positive feedback effect: as more miners shut down due to losses (i.e., as $x$ decreases), the enlarged hash-power vacuum prolongs the time spent in deterring states, which in turn further worsens the relative payoff of the miners that remain active. The proof is deferred to Appendix~\ref{app:mono-utility-proof}. We can therefore state the equilibrium characterization of the game; the full proof is given in Appendix~\ref{app:unique-ne-proof}.

\begin{theorem}[\textbf{Unique Strict Nash Equilibrium}]
\label{thm:unique-ne}
If the attacker hash power exceeds the critical threshold, i.e., $\alpha > \alpha_{\mathcal{A}}^*$, then $\mathrm{Stop}$ is a strictly dominant strategy for every micro-miner $\mathcal{T}_i$. Consequently, the strategy profile
$
\mathbf{S}^* = (\mathrm{Stop},\dots,\mathrm{Stop})$
is the unique strict Nash equilibrium of the game.
\end{theorem}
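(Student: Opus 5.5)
The plan is to combine Lemma~\ref{lemma:mono-utility} with the definition of the critical threshold $\alpha_{\mathcal{A}}^*$. The argument reduces to showing that the micro-miner shutdown gap $\Delta U_{\mathcal{T}_i}^{\mathrm{Stop}}(x)$ is strictly negative on all of $[0,1]$. Here $x$ is the fraction of target hash power that keeps mining.

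\textbf{Endpoint $x=1$.} I start from the homogeneous configuration in which every other target mines. In the non-atomic limit $\eta_i\to 0$, a single miner's deviation does not move the aggregate steady state $\boldsymbol{\pi}(x)$. Its payoff is therefore the homogeneous per-unit payoff evaluated at the prevailing $x$. Under linear payoff scaling, the micro-miner gap at $x=1$ is the population gap $\Delta U_{\mathcal{T}}^{\mathrm{Stop}}$ of Equation~\eqref{eq:u_t_diff}, computed with the Mine steady state. Since $\alpha>\alpha_{\mathcal{A}}^*$, Definition~\ref{def:alpha_crit} gives $\Delta U_{\mathcal{T}}^{\mathrm{Stop}}(\alpha)<0$.

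Strictly, the infimum definition only yields some $\alpha'$ below $\alpha$ with a negative gap. I would close this either by noting that the gap is monotone in $\alpha$, since larger $\alpha$ raises $\pi_{\mathrm{det}}$ and the attacker's race win probabilities in Table~\ref{tab:race_prob}, or by reading the threshold as a crossing point as the paper does. Either way, $\Delta U_{\mathcal{T}_i}^{\mathrm{Stop}}(1)<0$.

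\textbf{All $x\in[0,1]$.} By Lemma~\ref{lemma:mono-utility}, $\Delta U_{\mathcal{T}_i}^{\mathrm{Stop}}$ is strictly increasing in $x$ when $\alpha>\alpha_{\mathcal{A}}^*$. Hence
\[
\Delta U_{\mathcal{T}_i}^{\mathrm{Stop}}(x)\le\Delta U_{\mathcal{T}_i}^{\mathrm{Stop}}(1)<0 \quad \text{for every } x\in[0,1].
\]
So whatever fraction of other miners is active, each micro-miner strictly prefers Stop to Mine.

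\textbf{Removing SPV.} SPV is strictly dominated by Mine (Appendix~\ref{app:spv_dominated}), and the paper states this dominance is preserved under micro-miner scaling. Hence Stop strictly dominates SPV as well, and Stop is strictly dominant in the full strategy space.

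\textbf{Equilibrium and uniqueness.} The profile $\mathbf{S}^*$ with all miners choosing Stop (that is, $x=0$) is then a strict Nash equilibrium, because every unilateral deviation lowers the deviator's payoff. For uniqueness, take any profile in which a positive mass of miners mines or runs SPV, i.e.\ $x>0$ or some SPV mass. Each such miner would strictly gain by switching to Stop, so that profile is not an equilibrium. Since strict dominance holds pointwise, no mixed or partial profile survives either.

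\textbf{Main obstacle.} I expect the hard step to be the first one: rigorously identifying the micro-miner gap at the $x=1$ endpoint with the aggregate gap in Definition~\ref{def:alpha_crit}. This requires the non-atomic argument that an individual's switch leaves $\boldsymbol{\pi}(x)$ unchanged. It also requires care with how $e^S$ and $\theta$ enter when other targets are still mining. I would handle it by writing the generalized steady state from Appendix~\ref{app:scenario4_derivation} at $x=1$ and checking that it reduces to the Mine steady state.
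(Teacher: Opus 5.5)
Your skeleton matches the paper's proof. You get negativity of the micro-miner gap at $x=1$ from $\alpha>\alpha_{\mathcal{A}}^*$, extend it to all $x\in[0,1]$ via Lemma~\ref{lemma:mono-utility}, and conclude strict dominance and uniqueness of the all-Stop profile; the paper phrases the last step through IESDS.

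The step that fails as written is the one you flagged yourself. The micro-miner gap at $x=1$ is \emph{not} equal to the gap $\Delta U_{\mathcal{T}}^{\mathrm{Stop}}$ of Equation~\eqref{eq:u_t_diff}. That equation evaluates the Stop branch at the all-Stop steady state, $v_{\mathcal{T}}^{\mathrm{Stop}}=1-\pi_{\mathrm{det}}^{\mathrm{Stop}}$. A lone deviator with $\eta_i\to0$ instead faces $\boldsymbol{\pi}(1)=\boldsymbol{\pi}^{\mathrm{Mine}}$ under both of its options. Carrying out the check you propose gives
\[
\Delta U_{\mathcal{T}_i}^{\mathrm{Stop}}(1)
= c\,\pi_{\mathrm{det}}^{\mathrm{Mine}}\bigl[\omega_b e^{\mathrm{Mine}}\bar{p}-1\bigr]
= \Delta U_{\mathcal{T}}^{\mathrm{Stop}}
- c(\omega_b-1)\bigl(\pi_{\mathrm{det}}^{\mathrm{Stop}}-\pi_{\mathrm{det}}^{\mathrm{Mine}}\bigr),
\qquad
\bar{p}=(1-\gamma)(1-\alpha-r_1\beta).
\]
By Lemma~\ref{lem:prob_order}, the correction term is nonnegative whenever $\omega_b\ge1$. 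So negativity still transfers in the direction you need: a lone stopper does not lengthen the deterring phase, so stopping alone is at least as attractive as stopping collectively. For $\omega_b\le1$ the bracket is negative anyway, since $e^{\mathrm{Mine}}\bar{p}\le(1-\gamma)(1-\alpha+M\alpha)<1$. Replace the claimed identity with this inequality.

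For the infimum issue, monotonicity of the population gap in $\alpha$ is not evident. Your heuristic ignores that $e^{\mathrm{Mine}}$ grows with $\alpha$, which raises the value of the target's triggering blocks. Work at the micro level instead. The sign of $\Delta U_{\mathcal{T}_i}^{\mathrm{Stop}}(1)$ is that of $\omega_b H(\alpha)-1$ with $H=e\,\bar{p}$, and the proof of Theorem~\ref{thm:epsilon-robustness} shows $H'(\alpha)<0$ for fixed $r_1,\beta$. Suppose some $\alpha'<\alpha$ lies in the set of Definition~\ref{def:alpha_crit} under a policy $(r_1,r_2)$. Then the display, with the case split above, gives $\omega_b H(\alpha')<1$, hence $\omega_b H(\alpha)<1$ under the same policy.

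Finally, $x$ parametrizes only Mine/Stop profiles. So Lemma~\ref{lemma:mono-utility} does not show that Stop beats Mine at profiles where other miners SPV-mine. Do as the paper does and delete SPV first, then argue dominance within $\{\mathrm{Mine},\mathrm{Stop}\}$. For a micro-miner, Mine beats SPV at every profile, since its deterring-state blocks win with positive probability while SPV output there is void. Your uniqueness argument is unaffected. With these repairs your proof is more complete than the paper's, which asserts the $x=1$ step directly from Definition~\ref{def:alpha_crit}.
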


\subsection{Evolutionary Dynamics}

Although the system ultimately converges to the all-stop Nash equilibrium, real decentralized networks exhibit delayed collective responses due to information delay, strategic inertia, and bounded rationality.

To capture the dynamic trajectory of liveness degradation, we promote the active target-hash-power ratio to a continuous-time function $x(t)$. Under an evolutionary-game formulation, each $\mathcal{T}_i$ adjusts its hash-power deployment according to the current utility gap, and the population dynamics are described by the replicator equation. Let
$
\bar{U}(x)=x(t)U_{\mathcal{T}_i}^{\mathrm{Mine}}(x)+\bigl(1-x(t)\bigr)U_{\mathcal{T}_i}^{\mathrm{Stop}}(x)$
denote the population-average utility. Then the evolution of the fraction choosing normal mining satisfies
\begin{align}\label{eq:replicator}
    \dot{x}(t)
    &\triangleq
    x(t)\Big[U_{\mathcal{T}_i}^{\mathrm{Mine}}\bigl(x(t)\bigr)-\bar{U}\bigl(x(t)\bigr)\Big] \notag\\
    &=
    x(t)\bigl(1-x(t)\bigr)\Delta U_{\mathcal{T}_i}^{\mathrm{Stop}}\bigl(x(t)\bigr).
\end{align}

The following theorem establishes global convergence and the accelerated-collapse property; the full proof is deferred to Appendix~\ref{app:death-spiral-proof}.

\begin{theorem}[\textbf{Global Convergence and Accelerated Collapse}]
\label{thm:death-spiral}
If the attacker hash power exceeds the critical threshold, i.e., $\alpha > \alpha_{\mathcal{A}}^*$, then for any initial active ratio $x(0)\in(0,1)$, the trajectory $x(t)$ is strictly decreasing and converges to the all-stop terminal state:
$
\lim_{t\to\infty} x(t)=0$.
Moreover, the hash-power vacuum effect induced by PDoS introduces an additional negative acceleration into the evolutionary process.
\end{theorem}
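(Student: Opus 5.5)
The plan is to treat the replicator equation~\eqref{eq:replicator} as a one-dimensional autonomous ODE $\dot x = F(x) \triangleq x(1-x)\Delta U_{\mathcal{T}_i}^{\mathrm{Stop}}(x)$ on $[0,1]$. The whole argument reduces to a sign property of $\Delta U_{\mathcal{T}_i}^{\mathrm{Stop}}$, which we take from Theorem~\ref{thm:unique-ne} and Lemma~\ref{lemma:mono-utility}. The intended chain is: first, the sign of $F$ on $(0,1)$; then strict monotonicity and convergence of trajectories; then identification of the limit as $0$; and finally the acceleration claim.

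\textbf{Sign of the vector field.} Under $\alpha>\alpha_{\mathcal{A}}^*$, Theorem~\ref{thm:unique-ne} states that $\mathrm{Stop}$ is strictly dominant for every micro-miner. We read this as $\Delta U_{\mathcal{T}_i}^{\mathrm{Stop}}(x)<0$ for all $x\in[0,1]$. Concretely, Definition~\ref{def:alpha_crit} gives $\Delta U_{\mathcal{T}}^{\mathrm{Stop}}<0$ at the homogeneous profile $x=1$. Lemma~\ref{lemma:mono-utility} says the gap is strictly increasing in $x$, so $\Delta U_{\mathcal{T}_i}^{\mathrm{Stop}}(x)<\Delta U_{\mathcal{T}_i}^{\mathrm{Stop}}(1)<0$ on $[0,1)$. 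By compactness and continuity of the gap in $x$, there is $\kappa>0$ with $\Delta U_{\mathcal{T}_i}^{\mathrm{Stop}}(x)\le-\kappa$ on $[0,1]$. Continuity holds because the steady state from Appendix~\ref{app:scenario4_derivation} is a rational function of $x$ with denominator bounded away from zero.

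\textbf{Monotone convergence.} $F$ is $C^1$ on $[0,1]$, so solutions exist and are unique. The points $0$ and $1$ are equilibria, so by uniqueness a trajectory starting in $(0,1)$ never reaches them in finite time. On $(0,1)$ we have $F(x)<0$, so $x(t)$ is strictly decreasing and bounded below by $0$; hence it converges to some $x_\infty\in[0,x(0))$. The limit of a bounded monotone solution of an autonomous ODE must be an equilibrium. Since $F(x_\infty)\le -\kappa x_\infty(1-x_\infty)<0$ whenever $x_\infty\in(0,1)$, we conclude $x_\infty=0$. A rate also follows: near $0$, $\dot x\le-\kappa x(1-x(0))$, so the convergence is exponential.

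\textbf{Accelerated collapse.} This is the step I expect to be the main obstacle, because "additional negative acceleration" must first be given a precise meaning. I would differentiate along trajectories:
\[
\ddot x = F'(x)\dot x = \Big[(1-2x)\Delta U_{\mathcal{T}_i}^{\mathrm{Stop}}(x) + x(1-x)\tfrac{d}{dx}\Delta U_{\mathcal{T}_i}^{\mathrm{Stop}}(x)\Big]\dot x .
\]
The first bracketed term is the standard replicator contribution, i.e., what one would get with a constant gap. The second term is the vacuum-induced contribution. Lemma~\ref{lemma:mono-utility} makes $\tfrac{d}{dx}\Delta U>0$, and $\dot x<0$, so this term equals $x(1-x)\,\tfrac{d\Delta U}{dx}\,\dot x<0$ on $(0,1)$. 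I would therefore formalize the claim as a comparison. Against a benchmark dynamics with the gap frozen at $\Delta U(x(0))$, the PDoS trajectory has strictly more negative acceleration, namely the second term. Moreover, by a standard ODE comparison lemma, since $\Delta U(x)<\Delta U(x(0))$ for $x<x(0)$, the trajectory lies pointwise below the frozen-gap trajectory. The work is in checking that the frozen-gap benchmark is the intended reference, and in confirming the smoothness of $\Delta U$ needed for the derivative.
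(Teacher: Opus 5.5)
Your proposal is correct and follows essentially the same route as the paper. You derive the sign of $\Delta U_{\mathcal{T}_i}^{\mathrm{Stop}}$ from Theorem~\ref{thm:unique-ne} and Lemma~\ref{lemma:mono-utility}, conclude $\dot x<0$ and monotone convergence to $0$, and use the same chain-rule decomposition of $\ddot x$, whose second term $x(1-x)\tfrac{d}{dx}\Delta U_{\mathcal{T}_i}^{\mathrm{Stop}}\,\dot x<0$ is the vacuum-induced acceleration. Your equilibrium-limit argument and comparison lemma make rigorous what the paper only asserts, including that the trajectory lies below a baseline logistic curve.

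One caveat: ``extra acceleration $=$ second term'' holds only against the gap frozen at its current value $\Delta U_{\mathcal{T}_i}^{\mathrm{Stop}}(x(t))$, as in the paper. Against the gap frozen at $\Delta U_{\mathcal{T}_i}^{\mathrm{Stop}}(x(0))$, the logistic terms also differ by $(1-2x)x(1-x)\big[\Delta U_{\mathcal{T}_i}^{\mathrm{Stop}}(x)^2-\Delta U_{\mathcal{T}_i}^{\mathrm{Stop}}(x(0))^2\big]$, which is positive for $x<1/2$. There, only your trajectory-level comparison remains valid.
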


\subsection{Robustness to Frictional Costs}\label{sec:epsilon_equilibrium}

The preceding results are derived under a frictionless assumption. In real PoW mining, however, electricity-contract penalties, hardware depreciation, and switching overhead create substantial frictional costs. These practical barriers dampen strategy switching, causing miners to tolerate mild losses rather than immediately shut down.

To evaluate the robustness of PDoS under realistic conditions, we introduce an $\epsilon$-equilibrium framework. Suppose that switching to $\mathrm{Stop}$ requires a micro-miner $\mathcal{T}_i$ to overcome an equivalent normalized frictional cost $\epsilon>0$. Then $\mathrm{Stop}$ becomes a strictly dominant strategy only when the relative loss from continued mining exceeds this frictional cost, i.e.,
$
\Delta U_{\mathcal{T}_i}^{\mathrm{Stop}}(x) < -\epsilon$.
Assuming that the single-block MEV inflation premium is economically bounded, we obtain the following theorem on friction penetration and comparative robustness; the full proof is deferred to Appendix~\ref{app:epsilon-robustness-proof}.

\begin{theorem}[\textbf{Friction Penetration and Robustness}]
\label{thm:epsilon-robustness}
For any realistic frictional cost $\epsilon \in (0,c)$, there exists a critical attacker hash-power threshold $\alpha_{\mathcal{A}}^\epsilon$ above which PDoS can penetrate the friction barrier and trigger accelerated liveness collapse. Moreover, under the same network parameters, the absolute shutdown utility gap induced by PDoS is strictly larger than that induced by BDoS:
\begin{equation*}
\left|\Delta U_{\mathcal{T}_i}^{\mathrm{Stop}}(\alpha)\right|_{\mathrm{PDoS}}
>
\left|\Delta U_{\mathcal{T}_i}^{\mathrm{Stop}}(\alpha)\right|_{\mathrm{BDoS}}.
\end{equation*}
\end{theorem}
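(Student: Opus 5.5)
The theorem has two parts, and I would prove them in this order: first existence of $\alpha_{\mathcal{A}}^\epsilon$, then the comparison with BDoS.

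\textbf{Existence of $\alpha_{\mathcal{A}}^\epsilon$.} I would work with the micro-miner gap $\Delta U_{\mathcal{T}_i}^{\mathrm{Stop}}(x;\alpha)$ from the generalized steady state of Appendix~\ref{app:scenario4_derivation}. The argument has three steps.

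\emph{Step 1: continuity.} $\Delta U_{\mathcal{T}_i}^{\mathrm{Stop}}$ is continuous in $\alpha$, because every $\pi_i$ is a rational function whose denominator $D_S(x)$ stays positive. This follows the same way as the positivity of $D_{\mathrm{Mine}}>D_{\mathrm{SPV}}>D_{\mathrm{Stop}}>0$ used in Lemma~\ref{lem:prob_order}.

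\emph{Step 2: the gap falls below $-\epsilon$ for large $\alpha$.} The target's gain from mining is the term $\omega_b(v_{\mathcal{T}}^{\mathrm{Mine}}-v_{\mathcal{T}}^{\mathrm{Stop}})$. It is driven by the deterring-state contribution $e\,(\pi_1 p_{\mathcal{E}}^4+\pi_2 p_{\mathcal{E}}^5)$. As $\alpha$ grows:
\begin{itemize}
\item $p_{\mathcal{E}}^4$ and $p_{\mathcal{E}}^5$ decrease toward $0$, using Table~\ref{tab:race_prob} with $\beta+\eta+\delta=1-\alpha$;
\item the deterring mass $\alpha/D_S$ increases.
\end{itemize}
The cost side contributes $-(1-v_{\mathcal{T}}^{\mathrm{Stop}})=-\pi_{\mathrm{det}}^{\mathrm{Stop}}$, which grows in magnitude. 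Using the assumed bound $e^S\le \bar e$ on the MEV premium, I would show that $\omega_b\bar e(\pi_1p_{\mathcal{E}}^4+\pi_2p_{\mathcal{E}}^5)-\pi_{\mathrm{det}}$ tends to a limit that is at most $-c\cdot\pi_{\mathrm{det}}(\alpha_{\max})$ in normalized units. The restriction $\epsilon<c$ is what makes this limiting gap strictly below $-\epsilon$. I would check this first at $x=1$, the hardest point, and then extend it to all $x$ by Lemma~\ref{lemma:mono-utility}: since the gap is increasing in $x$, $x=1$ is its worst case.

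\emph{Step 3: define the threshold.} By continuity and the intermediate value theorem, set $\alpha_{\mathcal{A}}^\epsilon\triangleq\inf\{\alpha:\Delta U_{\mathcal{T}_i}^{\mathrm{Stop}}(1;\alpha)<-\epsilon\}$. Above this value, Stop is strictly dominant even after paying the friction.

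\emph{Step 4: accelerated collapse.} I would rerun the replicator argument of Theorem~\ref{thm:death-spiral} with the friction-shifted gap $\Delta U+\epsilon$. This gives $\dot x<0$ and $x\to0$.

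\textbf{Comparison with BDoS.} BDoS is the special case $r_1=r_2=0$ with no victim hijacking. In that case $p_{\mathcal{A}}^5$ is absent and $\pi_2=0$. I would write $\Delta U|_{\mathrm{PDoS}}-\Delta U|_{\mathrm{BDoS}}$ and reuse the decomposition from the proof of Theorem~\ref{thm:threshold_dominance} (Appendix~\ref{app:threshold-proof}). There are two sources of difference:
\begin{itemize}
\item the IvE race, where $p_{\mathcal{E}}^5=(1-\gamma)(\eta+\delta)<(1-\gamma)(\beta+\eta+\delta)=p_{\mathcal{E}}^4$ strictly when $\beta>0$, which lowers the target's race-winning payoff;
\item the shifted partition functions, which change the deterring mass.
\end{itemize}
Both move the gap in the negative direction, so in the deterring regime where both gaps are negative, the PDoS gap has strictly larger absolute value. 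To get strictness it suffices to take a policy with $r_1>0$. The PDoS value is a maximum over $(r_1,r_2)$, so it is at least the value at that policy, which strictly exceeds the BDoS value.

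\textbf{Main obstacle.} The hardest step is the sign of the partition-function effect. Changing $D_{\mathrm{Mine}}=1+\alpha(1-\alpha-r_1\beta)$ through $r_1$ raises $\pi_{\mathrm{det}}^{\mathrm{Mine}}$ but also alters $\pi_{\mathrm{det}}^{\mathrm{Stop}}$. I would first try to show that the combined derivative with respect to $r_1$ at $r_1=0$ is negative (the analogue of the quantity $\Gamma(r_1)$ appearing in Theorem~\ref{thm:deterrence_resilience}). If that does not hold uniformly, I would fall back on restricting to the deterrence-effective region, where both gaps are negative. I would also need to make sure the absolute-value comparison is not reversed by the MEV inflation factor, which is larger under PDoS.
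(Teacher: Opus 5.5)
Part 2 has a genuine gap, and it is the step you flag as the ``main obstacle.'' You set up the comparison through the aggregate gap \eqref{eq:u_t_diff}, which uses two different steady states. After substituting $v_{\mathcal{T}}^{\mathrm{Mine}}$ and $v_{\mathcal{T}}^{\mathrm{Stop}}=1-\pi_{\mathrm{det}}^{\mathrm{Stop}}$, it equals $c\bigl[(\omega_b-1)\pi_{\mathrm{det}}^{\mathrm{Stop}}-\omega_b\pi_{\mathrm{det}}^{\mathrm{Mine}}(1-e^{\mathrm{Mine}}\bar p)\bigr]$ with $\bar p=(1-r_1)p_{\mathcal{E}}^4+r_1p_{\mathcal{E}}^5$, and the first term really does grow with $r_1$. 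Your fallback of restricting to where both gaps are negative does not help. Take $(\alpha,\beta,\eta,\delta)=(0.2,0.2,0.5,0.1)$, $\gamma=0.95$, $M=0$, $\omega_b=1.6$: the aggregate gap is about $-0.0505c$ at $r_1=0$ but only $-0.0464c$ at $r_1=1$.

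The missing idea is that the statement concerns the \emph{individual} gap $\Delta U_{\mathcal{T}_i}^{\mathrm{Stop}}$. A non-atomic miner cannot move the steady state, so its Mine and Stop payoffs are both evaluated under $\boldsymbol{\pi}(1)$. As in the proof of Lemma~\ref{lemma:mono-utility}, the gap then factors as
\[
\Delta U_{\mathcal{T}_i}^{\mathrm{Stop}}=c\,\pi_{\mathrm{det}}\bigl[\omega_b\,e\,\bar p_{\mathcal{T}_i}-1\bigr],
\]
with $\pi_{\mathrm{det}}=\alpha/[1+\alpha(1-\alpha-r_1\beta)]$, $e=1+M\alpha/(1-\alpha)$ and $\bar p_{\mathcal{T}_i}=(1-\gamma)(1-\alpha-r_1\beta)$. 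Your Step~2 expression, written with a single $\pi_{\mathrm{det}}$, already has this form; you just did not use it. Once factored, there is no competing partition-function effect:
\begin{itemize}
\item $e$ is identical for PDoS and BDoS at equal $\alpha$. It depends only on the power absent from public block production during deterrence, which is all of $\alpha$ whatever $(r_1,r_2)$. So your worry that it is larger under PDoS is unfounded.
\item $\bar p_{\mathcal{T}_i}$ drops by $(1-\gamma)r_1\beta$.
\item $\pi_{\mathrm{det}}$ rises, because its denominator drops by $\alpha r_1\beta$.
\end{itemize}
A larger positive multiplier times a more negative bracket gives a strictly larger absolute gap, whenever the BDoS bracket is already negative. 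In the example above the individual gap is $-0.170c$ for PDoS versus $-0.161c$ for BDoS.

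The same factorization repairs two weak points in Part 1. First, defining $\alpha_{\mathcal{A}}^\epsilon$ as an infimum does not give ``above which'' without monotonicity in $\alpha$. The paper obtains monotonicity from $\pi_{\mathrm{det}}'(\alpha)=(1+\alpha^2)/[1+\alpha(1-\alpha-r_1\beta)]^2>0$ and $H'(\alpha)=-(1-\gamma)\bigl[(1-M)+Mr_1\beta/(1-\alpha)^2\bigr]<0$, where $H=e\,\bar p_{\mathcal{T}_i}$. Hence the gap is strictly decreasing once the bracket is negative, vanishes at $\alpha_{\mathcal{A}}^*$, and crosses $-\epsilon$ exactly once. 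Alternatively, define the threshold as the supremum of $\{\alpha:\Delta U_{\mathcal{T}_i}^{\mathrm{Stop}}\ge-\epsilon\}$. Second, your limit $-c\,\pi_{\mathrm{det}}(\alpha_{\max})$ lies below $-\epsilon$ for every $\epsilon\in(0,c)$ only if $\pi_{\mathrm{det}}\to1$, which you need to state. In factored form this, together with $\bar p_{\mathcal{T}_i}\to0$ and $e\le E_{\max}$, gives the limit $-c$ directly.
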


This result shows that, due to the revenue subsidization provided by the underlying infiltration mechanism, PDoS can penetrate higher frictional barriers and exhibits stronger deterrence robustness in realistic environments.

\begin{table}[t]
  \centering
  \caption{Summary of dataset statistics and estimated parameters.}
  \label{tab:combined_stats}
  \resizebox{\linewidth}{!}{
  \begin{tabular}{lrlr}
    \toprule
    \multicolumn{2}{c}{\textbf{Dataset Statistics}} & \multicolumn{2}{c}{\textbf{Parameter Estimation}} \\
    \cmidrule(lr){1-2} \cmidrule(lr){3-4}
    \textbf{Metric} & \textbf{Value} & \textbf{Parameter} & \textbf{Value} \\
    \midrule
    Total Blocks & 106,422 & Tx Fee Accum. Rate ($\lambda_t$) & $4.0886 \times 10^{-5}$ BTC/s \\
    Block Interval Avg. & 593.11 s & Whale Arrival Rate ($\lambda_w$) & $1.9248 \times 10^{-6}$ /s \\
    Block Interval Med. & 414 s & Exp. Whale Reward ($\mathbb{E}[\mathcal{R}_w]$) & 0.1019 BTC \\
    Block Interval P95 & 1766 s & MEV Ratio ($M$) & 0.0078 \\
    Tx Fee Avg. & 0.1589 BTC & Block Mining Cost ($c_{\mathrm{block}}$) & 2.0097 BTC \\
    Tx Fee P99.9 & 6.1936 BTC & Profitability Factor ($\omega_b$) & 1.6149 \\
    \bottomrule
  \end{tabular}
  }
\end{table}

\begin{figure}[t]
  \centering
  \includegraphics[width=\linewidth]{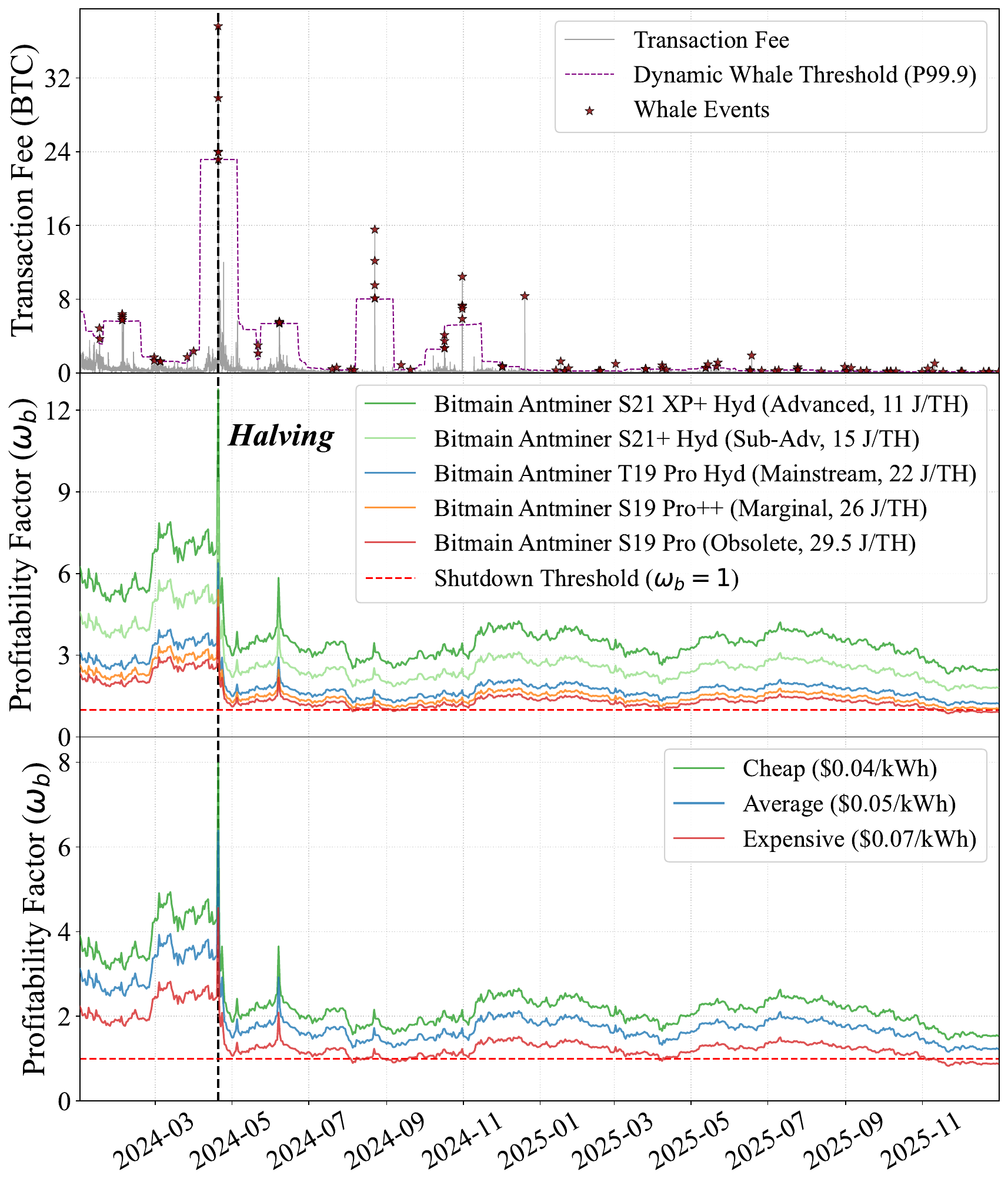}
  \caption{Real-world incentive environment and evolution of the baseline profitability factor. The black dashed line marks Bitcoin's fourth halving. (a) Transaction-fee distribution and high-value block events. (b) Comparison of $\omega_b$ across ASIC generations under the baseline electricity price of $0.05$ USD/kWh. (c) Comparison of $\omega_b$ across industrial electricity prices under the baseline hardware efficiency of $22$ J/TH.}
  \Description{Figure 1.}
  \label{fig:fig1}
\end{figure}

\section{Evaluation}

\subsection{Dataset and Parameter Estimation}\label{sec:dataset}

\heading{Dataset.}
To evaluate PDoS and the BDoS baseline under realistic constraints, we collect on-chain and market data covering Bitcoin's fourth halving cycle, from January 1, 2024 to December 30, 2025. Bitcoin block data is obtained from the public Google Cloud BigQuery dataset~\cite{google_bigquery_btc}; the BTC--USD historical price series is obtained from Investing.com~\cite{investing_btc_usd}; and the real-world mining-pool hash-power distribution is derived from mempool.space~\cite{mempool_pools_2y}. For hardware and energy costs, we use the industrial electricity-price benchmark from the Cambridge Bitcoin Electricity Consumption Index (CCAF)~\cite{ccaf_electricity}, adopt mainstream ASIC efficiency parameters (J/TH) from ASIC Miner Value~\cite{asic_miner_value}, and incorporate an additional 15\% operating expenditure (OPEX) following CoinShares industry reports~\cite{coinshares_q4_2024,coinshares_q4_2025}. This additional cost accounts for cooling, maintenance, and administrative overhead.

\heading{Real-World Profitability.}
The resilience of miners against liveness attacks is fundamentally governed by the baseline profitability factor $\omega_b$. Figure~\ref{fig:fig1} shows the temporal variation in transaction fees during our evaluation window and highlights the decisive impact of hardware efficiency and electricity prices on $\omega_b$. In particular, after Bitcoin's fourth halving, the abrupt reduction in block subsidy persistently compresses the profitability of miners using older hardware or facing high electricity prices, often pushing their $\omega_b$ close to the shutdown threshold $\omega_b=1$. This vulnerable hash power is therefore the first to be forced offline under attack.

\heading{Statistics and Parameters.}
Table~\ref{tab:combined_stats} summarizes the dataset statistics and the key parameter estimates used in our Markov model. The on-chain statistics on the left highlight the high volatility of Bitcoin's reward environment, which creates asymmetric profit opportunities for PDoS. The parameter estimates on the right provide the empirical economic baseline, which is specifically calibrated using the stabilized period in the second half of 2025 (where $K=3.125$ BTC), including the MEV ratio $M$ and the per-block economic mining cost $c_{\mathrm{block}}$. Unless otherwise specified, we use the following default parameters in the subsequent experiments: attacker hash power $\alpha=0.15$, victim-pool hash power $\beta=0.2$, target-population hash power $\eta=0.1$, propagation advantage $\gamma=0.5$, and baseline profitability factor $\omega_b=1.6$. For visual consistency, we use \textit{blue dashed curves} for the BDoS baseline and \textit{red solid curves} for PDoS. The color saturation of each curve indicates the attacker hash-power level $\alpha$, with darker colors representing larger $\alpha$.

\subsection{Attack Critical Threshold}

We first derive the analytical critical attacker hash power $\alpha_{\mathcal{A}}^*$ from the CTMC model, and then validate it using local checkpoint simulations. For each checkpoint, we run a Monte Carlo discrete-event simulation with $10^6$ block events and extract the stabilized output using polynomial fitting. The simulation checkpoints closely match the analytical boundary.

As shown in Figure~\ref{fig:fig2}, $\alpha_{\mathcal{A}}^*$ increases monotonically with the profitability factor $\omega_b$, while a larger propagation advantage $\gamma$ substantially lowers the required attack threshold. Across the entire boundary distribution, the PDoS threshold consistently lies below that of BDoS, empirically supporting Theorem~\ref{thm:threshold_dominance}.

\begin{figure}[t]
  \centering
  \includegraphics[width=\linewidth]{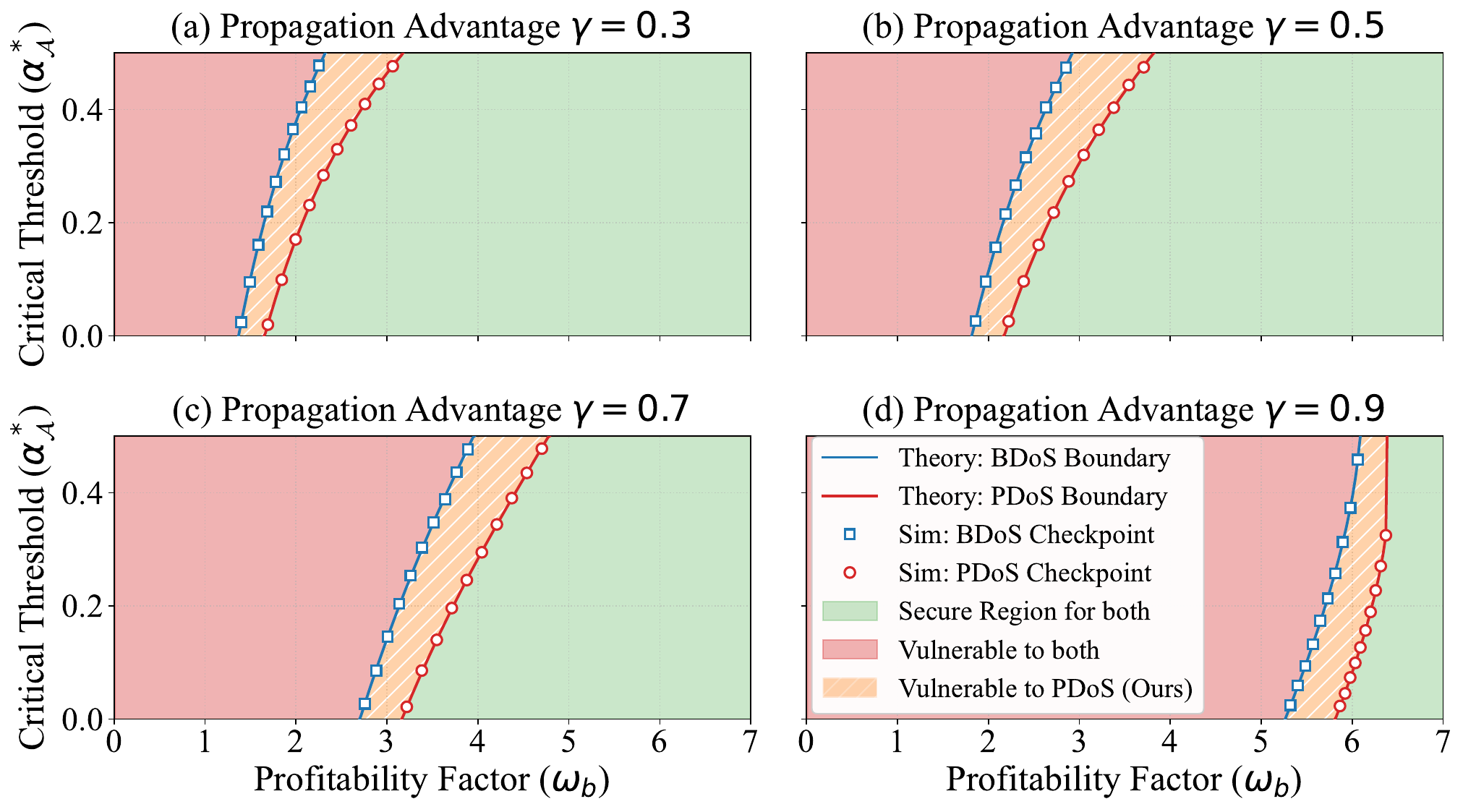}
  \caption{Phase diagram of the critical attacker hash power $\alpha_{\mathcal{A}}^*$ as a function of the profitability factor $\omega_b$ under different propagation advantages $\gamma$. The PDoS threshold consistently lies below the BDoS threshold, and the orange shaded region visualizes the enlarged feasible attack space enabled by PDoS.}
  \Description{Figure 2.}
  \label{fig:fig2}
\end{figure}

\begin{figure}[t]
  \centering
  \includegraphics[width=\linewidth]{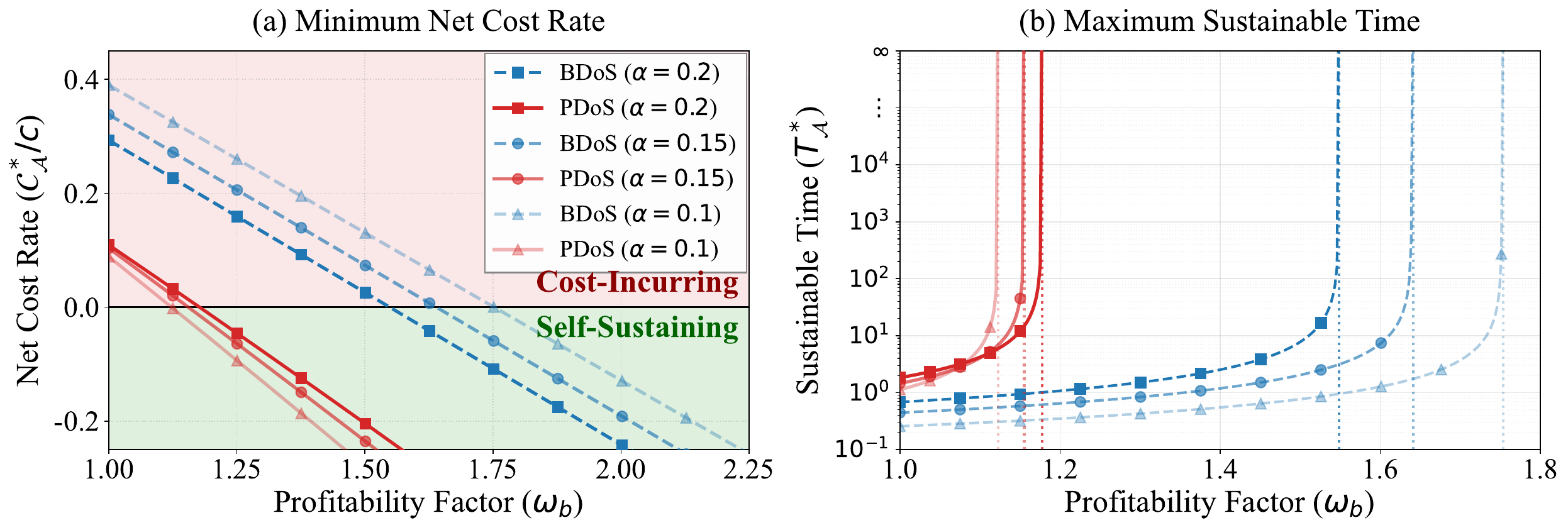}
  \caption{Comparison of attack cost and sustainable duration. (a) Normalized net cost rate $\mathcal{C}_{\mathcal{A}}^*/c$ as a function of $\omega_b$. (b) Maximum sustainable time under the same budget.}
  \Description{Figure 3.}
  \label{fig:cost_time}
\end{figure}

\subsection{Net Cost and Attack Sustainability}

To quantify the attacker's capital consumption rate and the sustainable attack duration, we compare three attacker hash-power levels, $\alpha \in \{0.10,0.15,0.20\}$.

\heading{Self-Sustaining Zone.}
As shown in Figure~\ref{fig:cost_time}(a), the horizontal boundary $\mathcal{C}_{\mathcal{A}}^*=0$ separates the cost-incurring zone from the self-sustaining zone. Under the same conditions, the net cost rate of PDoS is consistently lower than that of BDoS. PDoS already enters the self-sustaining region under a relatively low-profitability environment around $\omega_b \approx 1.15$, whereas BDoS requires $\omega_b>1.5$ to break even. This result corroborates the cost-dominance result in Theorem~\ref{thm:net_cost_dominance}.

\heading{Maximum Sustainable Time.}
Figure~\ref{fig:cost_time}(b) shows that the expected sustainable time of PDoS grows by orders of magnitude relative to BDoS on a logarithmic scale. PDoS also lowers the market-profitability threshold required for indefinite attack persistence, validating the sustainability result in Theorem~\ref{thm:self-sustain}.

\heading{hash-power Inversion.}
As $\alpha$ decreases, the self-sustaining threshold of BDoS shifts to the right, whereas that of PDoS shifts to the left. This counterintuitive hash-power inversion arises because the infiltration-based share-reward mechanism provides small attackers with higher marginal compensation, allowing them to cross the self-sustaining boundary at a lower $\omega_b$.

\subsection{Reward and Scheduling Ablations}

Figure~\ref{fig:ablations}(a) holds economic mining cost and the 3.125 BTC coinbase reward fixed, then sweeps the non-coinbase share of total block reward from 0 to 5\%. The requested coinbase-only and empirical coinbase-plus-fee/whale cases are marked on the same continuous sensitivity curve. The empirical non-coinbase share is only $M=0.783\%$, so removing it changes the optimized net cost from $-0.343c$ to $-0.333c$ (2.8\% of the empirical cost magnitude). Bursty fees strengthen PDoS at the margin, but the victim-funded share of the coinbase reward already creates most of the subsidy channel.

Figures~\ref{fig:ablations}(b)--(c) enumerate all 36 fixed schedules on the finer grid $r_1,r_2\in\{0,0.2,\ldots,1\}$ at the default $\omega_b=1.6$. Panel~(b) reports total net cost. Positive $r_2$ is wasteful at this baseline because the exact settlement-weighted marginal utility satisfies $G^{\mathrm{Stop}}(r_1,r_2)<0$ throughout the sampled range; additional payout does not offset operating cost, so the optimizer powers down after a lead. Panel~(c) directly tests whether state-dependent scheduling matters. Each cell reports the cost saving relative to the corresponding static policy $r_2=r_1$, i.e., $\mathcal C_{\mathcal A}(r_1,r_1)-\mathcal C_{\mathcal A}(r_1,r_2)$. The optimized $(r_1,r_2)=(1,0)$ policy saves $0.112c$ over the static $(1,1)$ policy; conversely, continuing to infiltrate after a lead produces negative savings below the zero-saving diagonal. The gold outline marks the one deployed joint optimizer.

\begin{figure}[t]
  \centering
  \includegraphics[width=\linewidth]{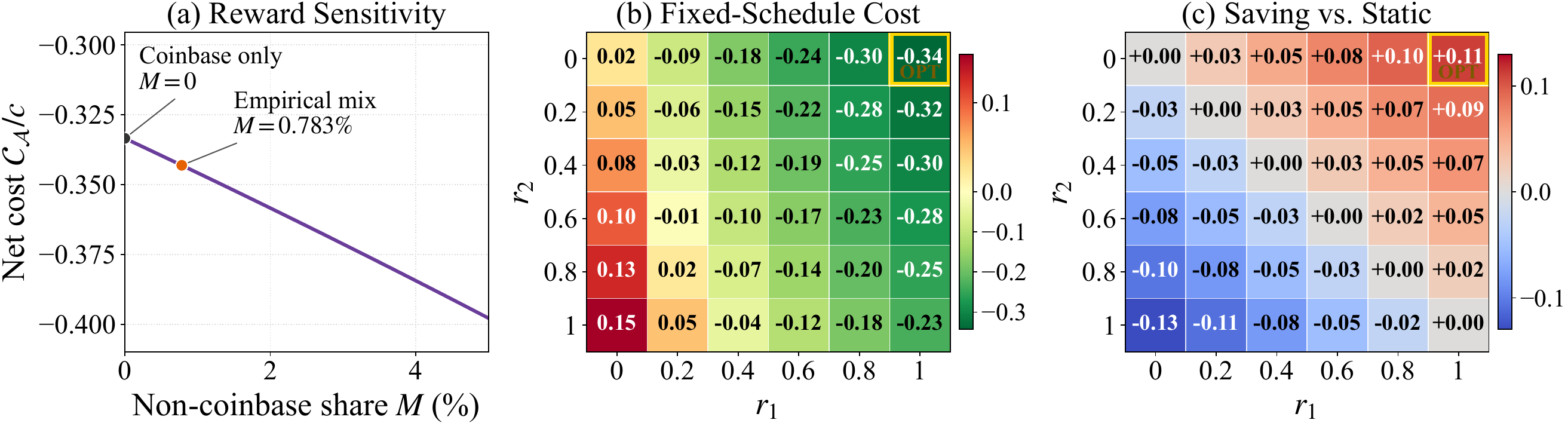}
  \caption{Reward and scheduling ablations. (a) Optimized net cost as the non-coinbase reward share varies; the coinbase-only and empirical $M=0.783\%$ points are marked. (b) Net cost for the 36 fixed schedules on the $0.2$-spaced $r_1,r_2$ grid at $\omega_b=1.6$. (c) Cost saving of each state-dependent schedule relative to its static counterpart $r_2=r_1$; positive values favor dynamic scheduling. The gold outline is the deployed joint optimizer.}
  \Description{Continuous reward-source sensitivity, fixed-schedule net cost, and dynamic-scheduling saving.}
  \label{fig:ablations}
\end{figure}

\subsection{Evolutionary and Equilibrium Analysis}

\heading{Evolutionary Trajectories.}
Figure~\ref{fig:fig4}(a) uses the replicator dynamics equation to characterize the decay trajectory of the target survival ratio $x$ over evolutionary time $t$. The results show that PDoS causes a substantially faster loss of active target hash power than BDoS. Using the explicit numerical convergence criterion $x\le10^{-3}$, PDoS reaches the terminal neighborhood at $t=64.1$, compared with $t=97.4$ for BDoS, a $1.5\times$ acceleration that validates Theorem~\ref{thm:death-spiral}.

\heading{Trace-Driven Nash Equilibrium.}
Figure~\ref{fig:fig4}(b) shows the final target survival ratio after the system reaches Nash equilibrium under different attacker hash-power levels and the real-world hash-power distribution. We instantiate a discrete pool-level model and replay the on-chain data to compute the instantaneous shutdown boundary of each independent mining pool. The results show that PDoS consistently yields a lower terminal survival ratio than BDoS.

\subsection{Trace-Driven Dynamic Evaluation}

To evaluate the long-term behavior of PDoS under realistic economic fluctuations, we replay the historical block sequence using a trace-driven simulation parameterized by the real-world hash-power distribution. We aggregate the trace into 730 daily locally stationary slices: each slice uses the observed mean profitability and reward-weighted fee fraction, and its utility contribution is multiplied by the exact number of blocks observed that day. This captures difficulty and reward evolution without presenting block-level flicker as a persistent equilibrium change. For each slice, we compute the Nash-equilibrium survival ratio $x$ under attacker hash-power levels $\alpha\in\{0.1,0.15,0.2\}$ using the preceding slice's decision environment. The policy selected from that preceding slice is then held fixed while the current slice's realized rewards are settled, preventing one-day look-ahead.

\heading{Network Liveness Evolution.}
The upper panel of Figure~\ref{fig:trace_dynamic} shows that PDoS consistently induces a larger loss of active hash power than BDoS. After the halving, the survival ratio of the target population drops substantially, indicating that PDoS has stronger deterrence power under post-halving profitability compression.

\heading{Divergence of Attacker Cumulative Utility.}
As shown in the lower panel of Figure~\ref{fig:trace_dynamic}, both PDoS and BDoS accumulate utility during the high-profit period before the halving. After the halving, the mean daily BDoS increment becomes negative for every evaluated $\alpha$, so its cumulative curve trends downward. PDoS retains a positive post-halving mean increment; all 730 daily increments are nonnegative for $\alpha\in\{0.10,0.15\}$ and 728 of 730 are nonnegative for $\alpha=0.20$. The orange region reports the cumulative-utility advantage of PDoS over BDoS at the baseline $\alpha=0.15$.

\begin{figure}[t]
  \centering
  \includegraphics[width=\linewidth]{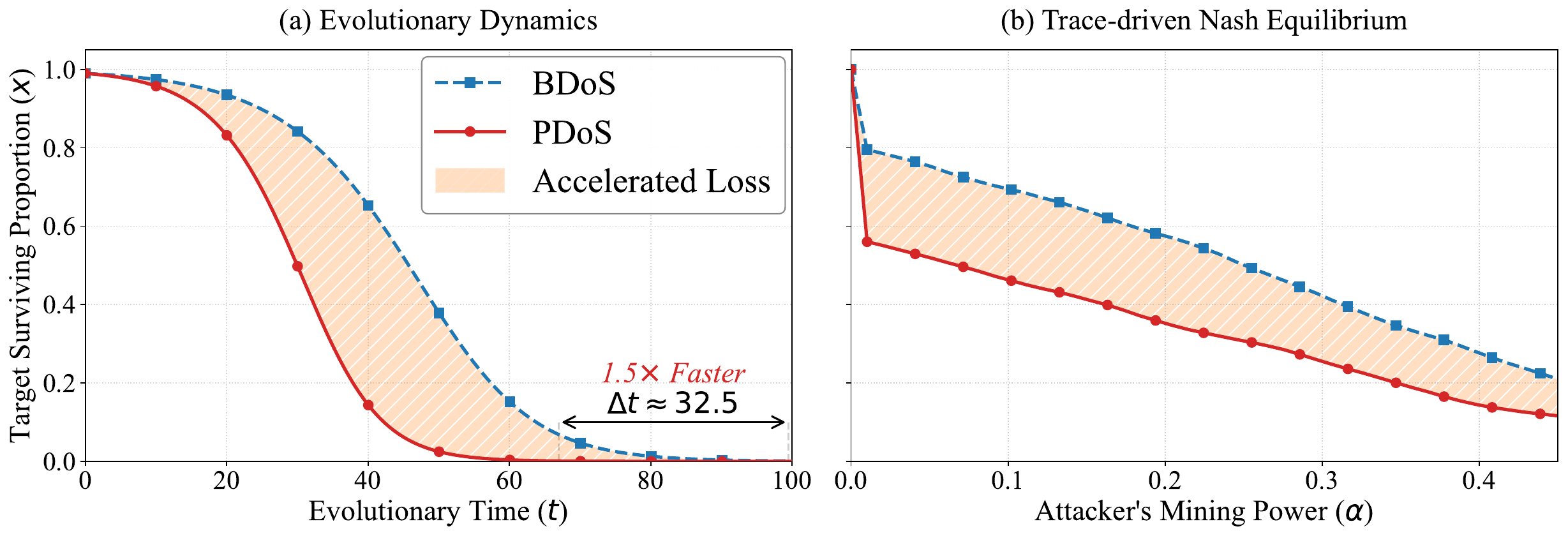}
  \caption{Comparison of system-liveness degradation. The orange shaded region quantifies the accelerated loss induced by PDoS. (a) Liveness-degradation trajectories based on evolutionary game theory. (b) Trace-driven terminal distribution of Nash equilibrium states.}
  \Description{Figure 4.}
  \label{fig:fig4}
\end{figure}

\begin{figure}[t]
    \centering
    \includegraphics[width=\linewidth]{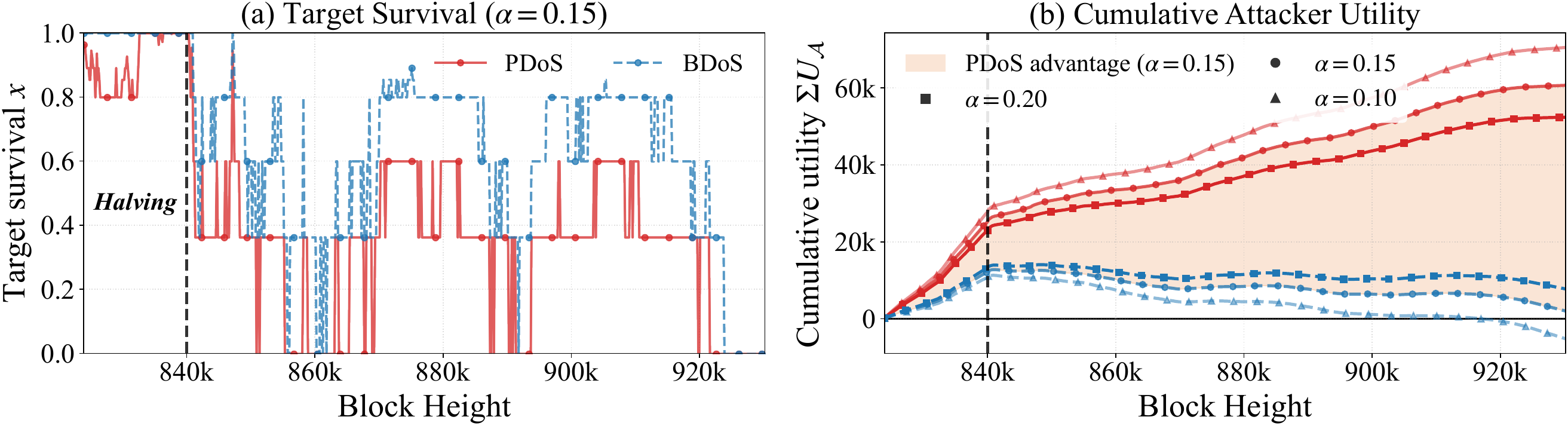}
    \caption{Trace-driven dynamic game evaluation over the 2024--2025 historical period. The black dashed line marks Bitcoin's fourth halving. The upper panel shows the time-varying target survival ratio $x$ under the baseline attacker hash power $\alpha=0.15$. The lower panel compares the attacker's cumulative utility $\sum U_{\mathcal{A}}$ under different strategies. The green shaded region indicates the profitable regime of PDoS, the gray region shows the declining utility trend of BDoS, and the red region marks the net-negative cumulative-utility regime of BDoS.}
    \Description{Figure 5.}
    \label{fig:trace_dynamic}
\end{figure}

\subsection{Withholding-Defense Evaluation}\label{sec:defense-evaluation}
We parameterize pool hardening by a \emph{released-share fraction} $q\in[0,1]$: if an account is flagged before its escrow matures, only fraction $q$ of its accrued share reward is paid. This replaces $s_{\mathcal{A}}^S$ by $q s_{\mathcal{A}}^S$ while leaving direct race rewards unchanged. Figure~\ref{fig:defenses}(a) first reports the no-defense policy at $q=1$ and $\omega_b=1.6$. Increasing $r_1$ from 0 to 1 lowers normalized net cost from $0.024$ to $-0.343$ and amplifies the target's loss gap by 67.5\%.

Figure~\ref{fig:defenses}(b) evaluates the impact of PPLNS on short-term payout risk and long-term subsidy. Under the standard independent equal-difficulty-share approximation, let $K\sim\mathrm{Binomial}(N,F)$ be the number of attacker shares in an effective window of $N$ shares, where $F(r)=r\alpha/(\beta+r\alpha)$. Then
\(
\mathbb E[K/N]=F~\text{and}~\operatorname{CV}(K/N)=\sqrt{\frac{1-F}{NF}}.
\)
PPLNS therefore reduces finite-window payout variance as $N$ grows but does not reduce the persistent infiltrator's mean subsidy.

For conditional payout, a useful defense metric must require profitability and liveness effectiveness simultaneously. Let $\Omega_{\mathcal A}(r_1,q)$ be the lower profitability boundary at which the attack is self-sustaining (with the exact inner $r_2$ optimizer), and let $\Omega_{\mathcal T}(r_1)$ be the upper boundary below which the same policy deters. We define the profitable-liveness window
\[
W_P(q)=\max_{r_1\in[0,1]}
       [\Omega_{\mathcal T}(r_1)-\Omega_{\mathcal A}(r_1,q)]_+,
\]
and the BDoS reference $W_B=[\Omega_{\mathcal T}(0)-\Omega_{\mathcal A}(0,q)]_+=0.419$, which is independent of $q$ because BDoS submits no shares. Figure~\ref{fig:defenses}(c) plots the surplus $W_P(q)-W_B$ and the maximizing $r_1$. Conditional payout removes PDoS's additional feasible window for $q\lesssim0.429$; above this transition, the optimizer moves continuously away from BDoS and reaches $r_1=1$ near $q=0.62$. With immediate payout, the surplus window reaches $0.967$. Thus, delayed forfeiture can eliminate PDoS's subsidy advantage, whereas PPLNS alone changes only payout risk. Effective release should be conditioned on statistically credible FPoW contribution or auditable work~\cite{lerner2026apow}.

\begin{figure}[t]
  \centering
  \includegraphics[width=\linewidth]{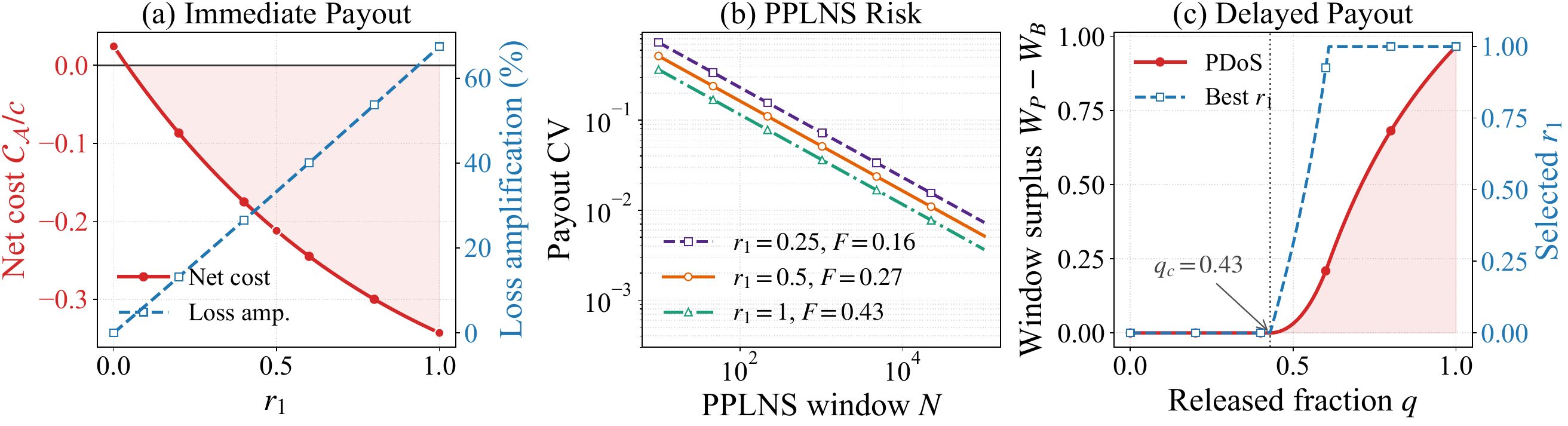}
  \caption{Mechanism-specific withholding-defense evaluation. (a) Under immediate payout, the same $r_1$ policy is evaluated by attacker net cost and target-loss amplification. (b) Stationary PPLNS leaves the mean share fraction unchanged but changes its finite-window coefficient of variation. (c) Delayed/conditional payout: surplus profitable-liveness window over BDoS and the corresponding maximizing $r_1$ versus released fraction $q$.}
  \Description{No-defense policy, PPLNS finite-window risk, and conditional-payout feasible-window reduction.}
  \label{fig:defenses}
\end{figure}

\section{Discussion and Countermeasures}\label{sec:defenses}

\heading{The Impact of the SPV Trap.}
Viewed through the lens of an incomplete-information game, a target cannot immediately distinguish malicious body withholding from benign relay delay. The protocol-level separation is concrete: mining jobs expose header search space and accept later proof submissions~\cite{stratumv2mining}, compact relay can initially leave transaction data missing~\cite{bip152}, and both BDoS and perishing mining analyze incentives created by an available proof-bearing header without a validated body~\cite{mirkin2020bdos,cao2023verifier}. In real-world dataset (see \footnote{\url{https://anonymous.4open.science/r/PDoS-23C3/dataset/block_metrics.csv}}), we observed 278 empty blocks with zero transaction fees, corresponding to an empty-block rate of 0.26\%, with the last such instance at height 929546. Empty blocks corroborate the operational incentive to begin hashing before a profitable template is fully available.

\heading{Remembering the Header and Concurrent Mining.}
A correct miner can remember the isolated header while continuing on a fully validated tip, or split power between both choices. Let $\kappa$ be the fraction kept on normal validated mining and $1-\kappa$ the fraction exposed to the header-only branch. Under linear per-hash rewards and no switching overhead, its mixed payoff is
\(
U_{\mathcal{T}}^{\mathrm{mix}}(\kappa)
=\kappa U_{\mathcal{T}}^{\mathrm{Mine}}
+(1-\kappa)U_{\mathcal{T}}^{\mathrm{SPV}}
\le U_{\mathcal{T}}^{\mathrm{Mine}},
\)
because Appendix~\ref{app:spv_dominated} proves $U^{\mathrm{SPV}}<U^{\mathrm{Mine}}$. Thus, remembering and hedging weakly reduces the transient SPV loss but does not create a response better than normal mining, and it does not remove the Mine-versus-Stop deterrence boundary. The generalized survival fraction $x$ allows different miners to update at different times; within-miner switching costs would make the inequality stricter.

We next discuss three classes of countermeasures against PDoS: hardening pool payout rules, shortening the reward-inflation window after liveness degradation, and reducing race-time propagation asymmetry.

\heading{Modifying the Share-Payout.}
The key advantage of PDoS over BDoS is that a public victim pool funds the attacker. Mining in an attacker-owned private pool would avoid detection by an external operator, but it would also eliminate victim-funded share transfers and therefore does not instantiate the PDoS subsidy. In a public pool, PPLNS or time-decay scoring raises variance and discourages pool hopping~\cite{rosenfeld2011pooled}; as Figure~\ref{fig:defenses}(b) shows, changing $N$ scales the finite-window payout risk as $N^{-1/2}$ but does not lower a persistent infiltrator's stationary expected share fraction. Effective hardening must instead condition release on evidence correlated with useful FPoWs: probation, delayed escrow, cross-account aggregation, and statistical tests of the share-to-FPoW ratio. Natural Poisson variance prevents immediate proof of withholding and Sybil rotation extends the detection delay. Auditable work constructions can strengthen this evidence at the cost of mining-protocol or ASIC changes~\cite{lerner2026apow}.

\heading{Shortening the Reward-Inflation Window.}
PDoS exploits a positive feedback loop:
target shutdown
$\rightarrow
\alpha_{\mathrm{act}}$ 
decreases
$\rightarrow$
longer block interval
$\rightarrow$
larger per-block reward
$\rightarrow$
lower attacker cost.
A second class of defenses aims to weaken this feedback loop. Faster or smoother difficulty adjustment can shorten the abnormal high-reward window under reduced active hash power. Protocols may also consider GHOST-style or uncle-block reward mechanisms that partially compensate losing blocks in races, thereby narrowing the utility gap between continuing to mine and shutting down. These mechanisms, however, require protocol-level redesign and may introduce nontrivial side effects. Overly responsive difficulty adjustment can create new manipulation opportunities, while rewards for losing branches may incentivize additional strategic forking behavior.

\section{Conclusion}\label{Conclusion}
We present PDoS, a hybrid attack against PoW blockchain liveness. Through both theoretical analysis and experimental evaluation, we show that PDoS substantially lowers the hash-power threshold required to induce rational miner shutdown compared with BDoS, and that in high-MEV environments, it can cross the break-even point to become self-sustaining or even profitable. More broadly, our results show that increasingly lucrative reward environments do not necessarily strengthen PoW security; instead, they can also subsidize adversarial strategies that amplify liveness fragility. Our findings have been disclosed to the Bitcoin Core Security Team.

\cleardoublepage
\appendix
\section*{Ethical Considerations}
\textbf{We are committed to complying with all relevant research ethics guidelines, and this study involves no ethical risks. The research content is not associated with animal experimentation, human subjects, environmental protection, medical health, or military applications. We confirm full adherence to all ethical principles outlined in the Call for Papers (CFPs). We pledge to proactively address any ethical concerns that may arise during the review process or post-publication, welcome academic feedback regarding research ethics norms, and will refine the research protocol based on such input.}

\section*{Open Science}
\textbf{We fully endorse and strictly adhere to the Open Science Policy outlined in the Call for Papers (CFP). The formal proofs for all theorems and lemmas presented in this paper are provided in the Appendix. To facilitate reproducibility and independent validation, we provide an anonymized implementation of our analysis and evaluation framework at:
\url{https://anonymous.4open.science/r/PDoS-23C3}. We hereby grant the scientific community unrestricted rights to review, validate, and build upon this research work.}
\cleardoublepage

\bibliographystyle{plainurl}
\bibliography{Ref}

\section{Related Work}

\heading{Nakamoto Consensus Security.}
A large body of work has analyzed the security of PoW longest-chain protocols. The Bitcoin backbone line of work formalizes the consistency and liveness properties of Nakamoto consensus~\cite{garay2015backbone,garay2020fullanalysis,pass2017analysis,pass2017sleepy,pass2017rethinking,thehaltgame}. Subsequent studies refine PoW security bounds under tighter consistency, settlement latency, bounded capacity, and optimal attack models~\cite{kiffer2018better,gazi2020tight,li2021close,guo2022latency,gazi2022practical,gazi2023practical,dembo2020race,kiffer2024bounded,cao2025beat,cao2025security,doger2025refined}. These works mainly ask when blocks can be safely confirmed and how security trades off with latency and throughput. In contrast, we study whether an adversary can degrade liveness through miner-incentive manipulation even when the protocol remains formally correct.

\heading{Profit-Driven Mining Deviations.}
Selfish mining and its variants show that strategic block withholding and timed release can yield excess revenue with minority hash power~\cite{eyal2014majority,nayak2016stubborn,sapirshtein2016optimal,barzur2020efficient,marmolejo2019competing,li2021semiselfish,negy2020selfishreexamined,bai2023multiple,BM-PAW,han2025collapsing,timestampmanipulationtimestampbasednakamotostyle}. Related work further studies gap games, Ethereum-specific fork-choice rules, and time-averaged revenue effects~\cite{tsabary2018gap,feng2019selfishethereum,sarenche2024timeaveraged}. PDoS differs in objective: it does not use strategic mining primarily to maximize relative revenue, but uses revenue extraction as a funding source for liveness disruption.

\heading{Pool Infiltration and Withholding.}
Mining-pool reward mechanisms create a separate attack surface. Block withholding lets an attacker submit PPoWs to earn shares while discarding FPoWs that would otherwise solve blocks~\cite{rosenfeld2011pooled,eyal2015miners,luu2015powersplitting}. FAW, PAW, and subsequent variants combine infiltration with fork competition, bribery, extortion, and power adjustment~\cite{kwon2017faw,gao2019paw,yang2022fwap,hu2023bribery,luu2017smartpool}. Defenses range from payout-window changes and statistical monitoring to auditable work constructions~\cite{rosenfeld2011pooled,lerner2026apow}. Unlike the revenue-oriented attacks, PDoS treats infiltration as an \emph{attack subsidy and race control}: the victim pool funds liveness disruption and, on an infiltration-originated fork, supports its own compromised template.

\heading{DoS Attacks on Blockchains.}
BDoS shows that an attacker can depress rational miners' expected payoff by proving the existence of a block without revealing its body, thereby inducing miner shutdown~\cite{mirkin2020bdos}. Perishing-mining attack also releases headers from a private chain to lure validationless miners away from useful work, and uses that diversion to improve a double-spend attack~\cite{cao2023verifier}. PDoS differs in both objective and feedback: its endpoint is a self-funded liveness attack, and an infiltration-originated lead recruits the victim pool during the race. Other DoS studies target mempools, execution or validation load, state storage, network resources, and L2 sequencing~\cite{li2021deter,yaish2024speculative,he2024nurgle,tsuchiya2025amplification,li2025denialsequencing}. These attacks primarily exploit resource bottlenecks or transaction-processing paths.

\heading{MEV, Reward Volatility, and Incentive Attacks.}
Prior work shows that diminishing block subsidies and volatile transaction fees can destabilize PoW incentives~\cite{carlsten2016instability}. Studies on whale transactions, undercutting, and empirical mempool dynamics further show that reward volatility can lower the threshold for profitable deviations~\cite{barzur2023werlman,gong2022undercutting,sarenche2024deepselfish,sarenche2025volatile}. MEV research similarly demonstrates that high-value on-chain opportunities can feed back into consensus-layer stability~\cite{daian2020flashboys,qin2022darkforest}. Incentive attacks also appear in PoS settings, including longest-chain PoS barriers, Ouroboros-style security analyses, and Ethereum attestation-incentive attacks~\cite{browncohen2019barriers,kiayias2017ouroboros,kiayias2018praos,schwarzschilling2022threeattacks,zhang2024maxattestation}. Our work combines reward volatility with victim-funded infiltration, showing that high-reward environments can lower, rather than raise, the economic barrier for liveness attacks.

\heading{Network Propagation.}
Network-layer advantages can amplify consensus-layer attacks. Eclipse attacks and BGP hijacking show that adversaries can distort node views or routing paths~\cite{heilman2015eclipse,apostolaki2017hijacking}, while studies of high-throughput and bounded-capacity blockchains show that propagation and processing bottlenecks affect security--performance trade-offs~\cite{sompolinsky2015highrate,lewenberg2015inclusive,gervais2016security,bagaria2019prism,kiffer2024bounded}. PDoS does not rely on a specific network-layer exploit; instead, we abstract propagation asymmetry through the rushing parameter $\gamma$, which determines branch-winning probabilities in racing states.

\heading{Summary.}
Existing work studies revenue-maximizing deviations, pool infiltration, incentive-driven DoS, and reward volatility largely in isolation. PDoS connects these lines by showing that header-only deterrence and victim-pool subsidies can jointly transform liveness disruption from a costly attack into a victim-funded, self-sustaining, and potentially profitable strategy.

\section{Background and Preliminaries}

\heading{Denial-of-Service (DoS) Attacks.}
In traditional distributed systems, DoS attacks typically target network-layer bottlenecks such as bandwidth, connectivity, or computation. In PoW blockchains, however, a more representative class of DoS attacks operates directly on the incentive structure of the consensus layer. Rather than exhausting raw network resources, such attacks manipulate the expected utility of rational participants, causing them to conclude that continued mining is no longer economically worthwhile under certain states.

A representative example is \textit{BDoS}~\cite{mirkin2020bdos}. In BDoS, the attacker broadcasts only a block header while withholding the corresponding block body, thereby releasing a credible signal that a competing branch may already exist. This signal increases the perceived risk that the honest miners' current work will eventually end up on an orphaned branch. Once this expected risk becomes sufficiently large, some rational miners may choose to shut down to avoid further loss. Moreover, if rational miners attempt to continue mining based only on the attacker's published header, their actions become contingent on whether and when the attacker later reveals the block body.

\heading{Economic Limitation of BDoS.}
Although BDoS can, in principle, exert substantial pressure on chain liveness~\cite{mirkin2020bdos}, its practical sustainability is severely hindered by its unfavorable accounting profitability. To maintain a credible threat, the adversary must bear continuous explicit operational expenses, which can only be offset by winning substantive branch races to capture block rewards. However, following reward halving events, this sole revenue stream is substantially reduced. As operating margins become razor-thin, these diminished rewards fail to cover the ongoing mining costs, inevitably resulting in a negative accounting profit. Such an attritional attack becomes operationally unsustainable, as the adversary faces a rapid depletion of capital.

\heading{Withholding Attacks.}
Block withholding (BWH) attacks allow an adversary to infiltrate a victim mining pool, earn a share of the pool reward by submitting partial proofs of work (PPoWs), and discard any full proof of work (FPoW) that would otherwise produce a valid block~\cite{rosenfeld2011pooled,eyal2015miners}. This behavior dilutes the victim pool's realized revenue and transfers part of the reward to the attacker.

Fork-after-withholding (FAW) extends BWH by strategically releasing a withheld full block upon a favorable fork opportunity, thereby coupling pool infiltration with opportunistic branch competition~\cite{kwon2017faw}.

PAW can be viewed as a further refinement that combines FAW-style withholding with state-dependent hash-power reallocation~\cite{gao2019paw}. The attacker dynamically reallocates its hash power between private mining and infiltration mining according to the current chain state. When the attacker does not yet hold a private lead block, it primarily infiltrates the victim pool to extract share rewards. Once it finds a private block, it shifts more hash power to the private branch in order to enlarge its lead or improve its probability of winning the ensuing race, thereby optimizing its overall expected utility. The key feature of PAW is that it leverages victim resources to provide a continuous subsidy to the attacker. However, PAW is designed to maximize adversarial revenue rather than to directly disrupt chain liveness. In fact, a PAW attacker typically prefers the blockchain to continue operating, so that it can continue extracting excess rewards from the victim pool over time.

\heading{Simplified Payment Verification (SPV).}
A blockchain can be formalized as a sequence of blocks linked by hash pointers, denoted by $\mathcal{C}=\{B_0,B_1,\dots\}$, where $B_0$ is the genesis block~\cite{nakamoto2008bitcoin}. Each block $B$ is represented as a tuple $B=(h,b)$, where $h$ denotes the block header, containing metadata such as the previous-block hash, the Merkle root, and the PoW nonce, and $b$ denotes the block body, which contains the transaction list. Consensus is determined by the longest-chain rule.

Unlike many conventional models that implicitly assume $(h,b)$ is propagated atomically, we explicitly consider an \emph{asymmetric publication} capability: a participant may first broadcast the block header $h$ to prove possession of a valid PoW solution while temporarily withholding the block body $b$. We refer to this capability as \emph{SPV mining}. SPV mining emits a credible signal that a competing branch may exist, thereby altering the strategic expectations of rational miners~\cite{mirkin2020bdos}. PDoS is built precisely on this signaling capability.

Moreover, SPV mining, also referred to as validationless mining, is not uncommon in real-world systems. Coin Metrics~\cite{coinmetrics_empty_blocks_2024} reports that, upon receiving an initial signal that a new block has been found, mining pools often begin mining immediately without waiting for the full transaction template in order to gain a timing advantage; this practice can produce empty blocks containing no user transactions.

\heading{Blockchain Race.}
A \emph{race} refers to the competition among multiple conflicting candidate branches for inclusion in the main chain under strategic block release. In our model, a race is triggered when the attacker ends a withholding phase and releases its private or infiltration block, which happens to conflict with a newly generated honest block. Because the target miners may adopt different response strategies, the dynamics of different race scenarios are not identical.

It is important to distinguish \emph{race} from mere signaling. In this paper, a \emph{race} specifically refers to the case in which the attacker releases a \emph{full block} $(h,b)$ and engages in substantive consensus competition against an honest block. By contrast, if the attacker releases only the block header $h$ while withholding the body, we treat that state as a signaling-based interference mechanism rather than a genuine consensus race.

\section{Markov State Dynamics}

\subsection{Definitions and Rules}
In PoW blockchains, consensus strictly adheres to the longest-chain rule. Consequently, on-chain confirmation and reward settlement for the current round of the game are triggered only when the system transitions back to the initial state.

We define a block that breaks the equilibrium and transitions the system into a deterring or racing state as a \emph{race-triggering block} (or simply triggering block), whose reward remains pending. Conversely, a block that concludes the race and returns the system to State 0 is termed a \emph{race-resolving block}. Upon the generation of a race-resolving block, all pending rewards are settled simultaneously based on the ownership of the resulting longest chain. To formalize branch ownership during the race phase, we classify competing branches into three categories:
\begin{itemize}
    \item \textbf{Private Branch}: The branch containing $B_{\mathrm{pri}}$, which is discovered and released by the attacker's private hash power.
    \item \textbf{Infiltration Branch}: The branch containing $B_{\mathrm{inf}}$, which is discovered within the victim pool but selectively submitted by the attacker. At the consensus layer, this branch is recognized as the victim's output.
    \item \textbf{Honest Branch}: The branch containing any competing block other than the two aforementioned types (i.e., $B_{\mathrm{vic}}, B_{\mathrm{tar}}, \text{ or } B_{\mathrm{oth}}$).
\end{itemize}

Because fundamental mining events follow a memoryless Poisson process, the transition rate between any two system states is determined by the product of the base block generation rate $\lambda$ and the active hash power triggering the respective event.

\subsection{State Transition and Reward Settlement}\label{app:state_transitions}

Based on the aforementioned rules, we first detail the baseline full transition map when target miners mine normally. Subsequently, we elucidate the variations in state transitions when the target adopts alternative strategies. For each scenario, we list the state transitions, their transition rates, the corresponding mining events, and the resulting reward settlements.

\subsubsection{Scenario 1: Target Miners Mine Normally.}
\begin{itemize}
    \item \textbf{State 0 (Initial):}
    \begin{itemize}
        \item $0 \to 0$: $\lambda \eta$, $\mathcal{T}$ mines a race-resolving block $B_{\mathrm{new}}$. $\mathcal{T}$ receives the block reward for $B_{\mathrm{new}}$.
        \item $0 \to 0$: $\lambda \delta$, $\mathcal{O}$ mines a race-resolving block $B_{\mathrm{new}}$. $\mathcal{O}$ receives the block reward for $B_{\mathrm{new}}$.
        \item $0 \to 0$: $\lambda \beta$, $\mathcal{V}$ mines a race-resolving block $B_{\mathrm{new}}$. $\mathcal{V}$ receives the block reward for $B_{\mathrm{new}}$, and $\mathcal{A}$ receives a share reward from $B_{\mathrm{new}}$ based on $r_1$.
        \item $0 \to 1$: $\lambda (1-r_1)\alpha$, $\mathcal{A}$'s private hash power mines a race-triggering block $B_{\mathrm{pri}}$.
        \item $0 \to 2$: $\lambda r_1\alpha$, $\mathcal{A}$'s infiltrating hash power mines a race-triggering block $B_{\mathrm{inf}}$.
    \end{itemize}
    
    \item \textbf{State 1 (Private Lead):}
    \begin{itemize}
        \item $1 \to 3$: $\lambda \beta$, $\mathcal{V}$ mines a race-triggering block $B_{\mathrm{vic}}$.
        \item $1 \to 4$: $\lambda \eta$, $\mathcal{T}$ mines a race-triggering block $B_{\mathrm{tar}}$.
        \item $1 \to 4$: $\lambda \delta$, $\mathcal{O}$ mines a race-triggering block $B_{\mathrm{oth}}$.
    \end{itemize}
    
    \item \textbf{State 2 (Infiltration Lead):}
    \begin{itemize}
        \item $2 \to 0$: $\lambda \beta$, $\mathcal{V}$ mines a race-resolving block $B_{\mathrm{vic}}$, and $\mathcal{A}$ discards the stale $B_{\mathrm{inf}}$.\footnote{Upon mining $B_{\mathrm{vic}}$, the pool server updates the mining task. The attacker's leading infiltration block loses its validity, becomes a stale block, and must be discarded.} $\mathcal{V}$ receives the block reward for $B_{\mathrm{vic}}$, and $\mathcal{A}$ receives a share reward from $B_{\mathrm{vic}}$ weighted by $r_1$ and $r_2$.
        \item $2 \to 5$: $\lambda \eta$, $\mathcal{T}$ mines a race-triggering block $B_{\mathrm{tar}}$.
        \item $2 \to 5$: $\lambda \delta$, $\mathcal{O}$ mines a race-triggering block $B_{\mathrm{oth}}$.
    \end{itemize}
    
    \item \textbf{State 3 (Race PvV):} $B_{\mathrm{pri}}$ vs. $B_{\mathrm{vic}}$
    \begin{itemize}
        \item $3 \to 0$: $\lambda \alpha$, $\mathcal{A}$ mines a race-resolving block $B_{\mathrm{new}}$. $\mathcal{A}$ receives the block rewards for both $B_{\mathrm{pri}}$ and $B_{\mathrm{new}}$.
        \item $3 \to 0$: $\lambda \beta$, $\mathcal{V}$ mines a race-resolving block $B_{\mathrm{new}}$. $\mathcal{V}$ receives the block rewards for both $B_{\mathrm{vic}}$ and $B_{\mathrm{new}}$, and $\mathcal{A}$ receives a share reward from $B_{\mathrm{vic}}$ weighted by $r_1$ and $r_2$.\footnote{Because this race-resolving block $B_{\mathrm{new}}$ is generated during the racing state, the attacker has fully shifted its hash power back to the private branch (i.e., $r=0$) and thus does not participate in the share payout for this new block.}
        \item $3 \to 0$: $\lambda \gamma \eta$, $\mathcal{T}$ mines a race-resolving block $B_{\mathrm{new}}$ on the private branch. $\mathcal{A}$ receives the block reward for $B_{\mathrm{pri}}$, and $\mathcal{T}$ receives the block reward for $B_{\mathrm{new}}$.
        \item $3 \to 0$: $\lambda \gamma \delta$, $\mathcal{O}$ mines a race-resolving block $B_{\mathrm{new}}$ on the private branch. $\mathcal{A}$ receives the block reward for $B_{\mathrm{pri}}$, and $\mathcal{O}$ receives the block reward for $B_{\mathrm{new}}$.
        \item $3 \to 0$: $\lambda (1-\gamma) \eta$, $\mathcal{T}$ mines a race-resolving block $B_{\mathrm{new}}$ on the honest branch. $\mathcal{V}$ receives the block reward for $B_{\mathrm{vic}}$, $\mathcal{A}$ receives a share reward from $B_{\mathrm{vic}}$ weighted by $r_1$ and $r_2$, and $\mathcal{T}$ receives the block reward for $B_{\mathrm{new}}$.
        \item $3 \to 0$: $\lambda (1-\gamma) \delta$, $\mathcal{O}$ mines a race-resolving block $B_{\mathrm{new}}$ on the honest branch. $\mathcal{V}$ receives the block reward for $B_{\mathrm{vic}}$, $\mathcal{A}$ receives a share reward from $B_{\mathrm{vic}}$ weighted by $r_1$ and $r_2$, and $\mathcal{O}$ receives the block reward for $B_{\mathrm{new}}$.
    \end{itemize}
    
    \item \textbf{State 4 (Race PvE):} $B_{\mathrm{pri}}$ vs. $B_{\mathrm{ext}} \in \{B_{\mathrm{tar}}, B_{\mathrm{oth}}\}$
    \begin{itemize}
        \item $4 \to 0$: $\lambda \alpha$, $\mathcal{A}$ mines a race-resolving block $B_{\mathrm{new}}$. $\mathcal{A}$ receives the block rewards for both $B_{\mathrm{pri}}$ and $B_{\mathrm{new}}$.
        \item $4 \to 0$: $\lambda \gamma \beta$, $\mathcal{V}$ mines a race-resolving block $B_{\mathrm{new}}$ on the private branch. $\mathcal{A}$ receives the block reward for $B_{\mathrm{pri}}$, and $\mathcal{V}$ receives the block reward for $B_{\mathrm{new}}$.
        \item $4 \to 0$: $\lambda \gamma \eta$, $\mathcal{T}$ mines a race-resolving block $B_{\mathrm{new}}$ on the private branch. $\mathcal{A}$ receives the block reward for $B_{\mathrm{pri}}$, and $\mathcal{T}$ receives the block reward for $B_{\mathrm{new}}$.
        \item $4 \to 0$: $\lambda \gamma \delta$, $\mathcal{O}$ mines a race-resolving block $B_{\mathrm{new}}$ on the private branch. $\mathcal{A}$ receives the block reward for $B_{\mathrm{pri}}$, and $\mathcal{O}$ receives the block reward for $B_{\mathrm{new}}$.
        \item $4 \to 0$: $\lambda (1-\gamma) \beta$, $\mathcal{V}$ mines a race-resolving block $B_{\mathrm{new}}$ on the honest branch. The discoverer of the triggering block receives the block reward for $B_{\mathrm{ext}}$, and $\mathcal{V}$ receives the block reward for $B_{\mathrm{new}}$.
        \item $4 \to 0$: $\lambda (1-\gamma) \eta$, $\mathcal{T}$ mines a race-resolving block $B_{\mathrm{new}}$ on the honest branch. The discoverer of the triggering block receives the block reward for $B_{\mathrm{ext}}$, and $\mathcal{T}$ receives the block reward for $B_{\mathrm{new}}$.
        \item $4 \to 0$: $\lambda (1-\gamma) \delta$, $\mathcal{O}$ mines a race-resolving block $B_{\mathrm{new}}$ on the honest branch. The discoverer of the triggering block receives the block reward for $B_{\mathrm{ext}}$, and $\mathcal{O}$ receives the block reward for $B_{\mathrm{new}}$.
    \end{itemize}
    
    \item \textbf{State 5 (Race IvE):} $B_{\mathrm{inf}}$ vs. $B_{\mathrm{ext}} \in \{B_{\mathrm{tar}}, B_{\mathrm{oth}}\}$
    \begin{itemize}
        \item $5 \to 0$: $\lambda \alpha$, $\mathcal{A}$ mines a race-resolving block $B_{\mathrm{new}}$. $\mathcal{V}$ receives the block rewards for both $B_{\mathrm{inf}}$ and $B_{\mathrm{new}}$, while $\mathcal{A}$ receives a share reward from $B_{\mathrm{inf}}$ weighted by $r_1$ and $r_2$, as well as a share reward from $B_{\mathrm{new}}$ based on $r=1$.\footnote{Since $B_{\mathrm{new}}$ is generated in State 5, $\mathcal{A}$ participates in the payout with an infiltration ratio of $r=1$.}
        \item $5 \to 0$: $\lambda \beta$, $\mathcal{V}$ mines a race-resolving block $B_{\mathrm{new}}$. $\mathcal{V}$ receives the block rewards for both $B_{\mathrm{inf}}$ and $B_{\mathrm{new}}$, while $\mathcal{A}$ receives a share reward from $B_{\mathrm{inf}}$ weighted by $r_1$ and $r_2$, as well as a share reward from $B_{\mathrm{new}}$ based on $r=1$.
        \item $5 \to 0$: $\lambda \gamma \eta$, $\mathcal{T}$ mines a race-resolving block $B_{\mathrm{new}}$ on the infiltration branch. $\mathcal{V}$ receives the block reward for $B_{\mathrm{inf}}$, $\mathcal{A}$ receives a share reward from $B_{\mathrm{inf}}$ weighted by $r_1$ and $r_2$, and $\mathcal{T}$ receives the block reward for $B_{\mathrm{new}}$.
        \item $5 \to 0$: $\lambda \gamma \delta$, $\mathcal{O}$ mines a race-resolving block $B_{\mathrm{new}}$ on the infiltration branch. $\mathcal{V}$ receives the block reward for $B_{\mathrm{inf}}$, $\mathcal{A}$ receives a share reward from $B_{\mathrm{inf}}$ weighted by $r_1$ and $r_2$, and $\mathcal{O}$ receives the block reward for $B_{\mathrm{new}}$.
        \item $5 \to 0$: $\lambda (1-\gamma) \eta$, $\mathcal{T}$ mines a race-resolving block $B_{\mathrm{new}}$ on the honest branch. The discoverer of the triggering block receives the block reward for $B_{\mathrm{ext}}$, and $\mathcal{T}$ receives the block reward for $B_{\mathrm{new}}$.
        \item $5 \to 0$: $\lambda (1-\gamma) \delta$, $\mathcal{O}$ mines a race-resolving block $B_{\mathrm{new}}$ on the honest branch. The discoverer of the triggering block receives the block reward for $B_{\mathrm{ext}}$, and $\mathcal{O}$ receives the block reward for $B_{\mathrm{new}}$.
    \end{itemize}
\end{itemize}

\subsubsection{Scenario 2: Target Miners Fall into the SPV Trap.}
In this scenario, $\mathcal{T}$ falls into the trap, causing their output to become orphaned blocks. Consequently, there is no reward settlement, and their original transition paths to the racing states are redirected to a reset. Unlisted transition and settlement rules are identical to those in Scenario 1.
\begin{itemize}
    \item \textbf{State 1 Changes:}
    \begin{itemize}
        \item $1 \to 0$: $\lambda \eta$, $\mathcal{T}$ falls into the SPV trap.
        \item $1 \to 4$: $\lambda \delta$, rate attenuation only.
    \end{itemize}
    
    \item \textbf{State 2 Changes:}
    \begin{itemize}
        \item $2 \to 0$: $\lambda \eta$, $\mathcal{T}$ falls into the SPV trap.
        \item $2 \to 5$: $\lambda \delta$, rate attenuation only.
    \end{itemize}
\end{itemize}

\subsubsection{Scenario 3: Target Miners Stop to Cut Losses.}
In this scenario, the hash power of $\mathcal{T}$ is entirely withdrawn, directly reducing the corresponding state transition rates. Unlisted transition and settlement rules are identical to those in Scenario 1.
\begin{itemize}
    \item \textbf{State 1 Changes:} 
    \begin{itemize}
        \item $1 \to 4$: $\lambda \delta$, rate attenuation only.
    \end{itemize}
    
    \item \textbf{State 2 Changes:}
    \begin{itemize}
        \item $2 \to 5$: $\lambda \delta$, rate attenuation only.
    \end{itemize}
\end{itemize}

\subsection{Derivation of Race Winning Probabilities}\label{app:race_win_rate}

Regardless of the strategy $\mathcal{T}$ adopts in the deterring state, the block-header deterrence signal is broken once a race is triggered. Furthermore, because $\mathcal{T}$ is modeled as a continuum, when a micro-miner within the population or an external node discovers a race-triggering block, the remaining miners in the population are still subject to underlying P2P network propagation delays during the subsequent branch competition. Consequently, in the macro-level race, the entire hash power population $\eta$ behaves as neutral hash power influenced by the rushing ability $\gamma$. Thus, the branch winning probabilities in different racing states are global constants strictly independent of the target's response strategy $S$.

The hash power allocation and winning probabilities for each racing state are derived as follows:
\begin{itemize}
    \item \textbf{PvV Winning Probability:} $\mathcal{A}$ supports the private block $B_{\mathrm{pri}}$; $\mathcal{V}$ supports its own $B_{\mathrm{vic}}$; $\mathcal{E}$ supports $B_{\mathrm{pri}}$ in proportion to $\gamma$. Thus, the private branch winning probability is $p_{\mathcal{A}}^3 = \alpha + \gamma(\eta + \delta)$, and the corresponding honest branch winning probability is $p_{\mathcal{V}}^3 = \beta + (1-\gamma)(\eta + \delta)$.
    
    \item \textbf{PvE Winning Probability:} $\mathcal{A}$ supports the private block $B_{\mathrm{pri}}$; $\mathcal{H}$ supports $B_{\mathrm{pri}}$ in proportion to $\gamma$. Thus, the private branch winning probability is $p_{\mathcal{A}}^4 = \alpha + \gamma(\beta + \eta + \delta)$, and the corresponding honest branch winning probability is $p_{\mathcal{E}}^4 = (1-\gamma)(\beta + \eta + \delta)$.
    
    \item \textbf{IvE Winning Probability:} $\mathcal{A}$ supports the infiltration block $B_{\mathrm{inf}}$; $\mathcal{V}$ is forced to support $B_{\mathrm{inf}}$; $\mathcal{E}$ supports $B_{\mathrm{inf}}$ in proportion to $\gamma$. Thus, the infiltration branch winning probability is $p_{\mathcal{A}}^5 = \alpha + \beta + \gamma(\eta + \delta)$, and the corresponding honest branch winning probability is $p_{\mathcal{E}}^5 = (1-\gamma)(\eta + \delta)$.
\end{itemize}

\section{Derivation of Steady-State Distributions}
\label{app:steady_state}

This section details the methodology and algebraic derivations for solving the system's steady-state distributions under various strategies. For any given strategy $S$, the steady-state distribution vector $\boldsymbol{\pi}^S$ must satisfy the global balance equations of the CTMC:
\begin{equation*}
    \boldsymbol{\pi}^S \mathbf{Q}^S = \mathbf{0}, \quad \text{s.t.} \quad \sum_{i=0}^{5} \pi_i^S = 1,
\end{equation*}
where $\mathbf{Q}^S$ denotes the generator matrix of the state transitions.

Based on the exhaustive state transition events and their respective rates detailed in Appendix~\ref{app:state_transitions}, we construct the generator matrix $\mathbf{Q}^S$ for each target response strategy. Because the rows of this matrix sum to zero (i.e., $Q_{ii} = -\sum_{j \neq i} Q_{ij}$), the system of global balance equations inherently contains one linearly dependent equation. To simplify the derivation across the following scenarios, we uniformly omit the relatively complex inflow equation for the initial state (State 0) and utilize the equations for the remaining non-initial states. By applying substitution, we express the stationary probabilities of all other states as functions of the initial state probability $\pi_0^S$. Finally, we substitute these expressions into the normalization condition $\sum_{i=0}^{5} \pi_i^S = 1$ to obtain the unique closed-form solution.

\subsection{Scenario 1: Target Chooses to Mine}
Extracting the base block generation rate $\lambda$ as a common factor, the generator matrix is given by:
\begin{equation*}
\mathbf{Q}^{\mathrm{Mine}} = \lambda
\begin{bmatrix} 
-\alpha & (1-r_1)\alpha & r_1\alpha & 0 & 0 & 0 \\ 
0 & -(1-\alpha) & 0 & \beta & \eta+\delta & 0 \\ 
\beta & 0 & -(1-\alpha) & 0 & 0 & \eta+\delta \\ 
1 & 0 & 0 & -1 & 0 & 0 \\ 
1 & 0 & 0 & 0 & -1 & 0 \\ 
1 & 0 & 0 & 0 & 0 & -1 
\end{bmatrix}
\end{equation*}
The balance equations for the non-initial states are as follows:
\begin{equation*}
\begin{cases} 
\lambda (1-\alpha) \cdot \pi_1^{\mathrm{Mine}} = \lambda (1-r_1)\alpha \cdot \pi_0^{\mathrm{Mine}} \\ 
\lambda (1-\alpha) \cdot \pi_2^{\mathrm{Mine}} = \lambda r_1\alpha \cdot \pi_0^{\mathrm{Mine}} \\ 
\lambda \cdot \pi_3^{\mathrm{Mine}} = \lambda \beta \cdot \pi_1^{\mathrm{Mine}} \\ 
\lambda \cdot \pi_4^{\mathrm{Mine}} = \lambda (\eta + \delta) \cdot \pi_1^{\mathrm{Mine}} \\ 
\lambda \cdot \pi_5^{\mathrm{Mine}} = \lambda (\eta + \delta) \cdot \pi_2^{\mathrm{Mine}} 
\end{cases}
\end{equation*}
By substitution, the steady-state probabilities of all other states can be expressed as functions of the initial state probability $\pi_0^{\mathrm{Mine}}$:
\begin{align*} 
\pi_1^{\mathrm{Mine}} &= \frac{(1-r_1)\alpha}{1-\alpha} \pi_0^{\mathrm{Mine}}, \quad
\pi_2^{\mathrm{Mine}} = \frac{r_1\alpha}{1-\alpha} \pi_0^{\mathrm{Mine}} \\ 
\pi_3^{\mathrm{Mine}} &= \beta \pi_1^{\mathrm{Mine}} = \frac{\beta(1-r_1)\alpha}{1-\alpha} \pi_0^{\mathrm{Mine}} \\ 
\pi_4^{\mathrm{Mine}} &= (\eta+\delta) \pi_1^{\mathrm{Mine}} = \frac{(\eta+\delta)(1-r_1)\alpha}{1-\alpha} \pi_0^{\mathrm{Mine}} \\ 
\pi_5^{\mathrm{Mine}} &= (\eta+\delta) \pi_2^{\mathrm{Mine}} = \frac{(\eta+\delta)r_1\alpha}{1-\alpha} \pi_0^{\mathrm{Mine}} 
\end{align*}
Substituting these into the normalization condition yields:
\begin{align*}
    \pi_0^{\mathrm{Mine}} \Bigg[ 1 &+ \frac{(1-r_1)\alpha + r_1\alpha}{1-\alpha} \\
    &+ \frac{\beta(1-r_1)\alpha + (\eta+\delta)(1-r_1)\alpha + (\eta+\delta)r_1\alpha}{1-\alpha} \Bigg] = 1
\end{align*}
Noting that $(1-r_1)\alpha + r_1\alpha = \alpha$ and applying the hash power conservation relation $\alpha+\beta+\eta+\delta=1$, the summation term above can be simplified to:
\begin{equation*}
    \frac{1 + \alpha ( 1-\alpha-r_1\beta )}{1-\alpha}
\end{equation*}
We define the normalized partition function for this scenario as:
\begin{equation*}
    D_{\mathrm{Mine}} = 1 + \alpha ( 1-\alpha-r_1\beta),
\end{equation*}
This allows us to derive the complete steady-state distribution vector:
\begin{align} \label{eq:pi_mine}
    \boldsymbol{\pi}^{\mathrm{Mine}} = \frac{1}{D_{\mathrm{Mine}}} \Big( &1-\alpha,\ (1-r_1)\alpha,\ r_1\alpha,\ \beta(1-r_1)\alpha, \notag \\
    &(1-\alpha-\beta)(1-r_1)\alpha,\ (1-\alpha-\beta)r_1\alpha \Big).
\end{align}

\subsection{Scenario 2: Target Chooses SPV Mining}
After extracting the base block generation rate $\lambda$ as a common factor, the generator matrix is given by:
\begin{equation*}
\mathbf{Q}^{\mathrm{SPV}} = \lambda
\begin{bmatrix} 
-\alpha & (1-r_1)\alpha & r_1\alpha & 0 & 0 & 0 \\ 
\eta & -(1-\alpha) & 0 & \beta & \delta & 0 \\ 
\beta+\eta & 0 & -(1-\alpha) & 0 & 0 & \delta \\ 
1 & 0 & 0 & -1 & 0 & 0 \\ 
1 & 0 & 0 & 0 & -1 & 0 \\ 
1 & 0 & 0 & 0 & 0 & -1 
\end{bmatrix}
\end{equation*}
The balance equations for the non-initial states are as follows:
\begin{equation*}
\begin{cases} 
\lambda (1-\alpha) \cdot \pi_1^{\mathrm{SPV}} = \lambda (1-r_1)\alpha \cdot \pi_0^{\mathrm{SPV}} \\ 
\lambda (1-\alpha) \cdot \pi_2^{\mathrm{SPV}} = \lambda r_1\alpha \cdot \pi_0^{\mathrm{SPV}} \\ 
\lambda \cdot \pi_3^{\mathrm{SPV}} = \lambda \beta \cdot \pi_1^{\mathrm{SPV}} \\ 
\lambda \cdot \pi_4^{\mathrm{SPV}} = \lambda \delta \cdot \pi_1^{\mathrm{SPV}} \\ 
\lambda \cdot \pi_5^{\mathrm{SPV}} = \lambda \delta \cdot \pi_2^{\mathrm{SPV}} 
\end{cases}
\end{equation*}
Solving by substitution yields:
\begin{align*} 
\pi_1^{\mathrm{SPV}} &= \frac{(1-r_1)\alpha}{1-\alpha} \pi_0^{\mathrm{SPV}}, \quad
\pi_2^{\mathrm{SPV}} = \frac{r_1\alpha}{1-\alpha} \pi_0^{\mathrm{SPV}} \\ 
\pi_3^{\mathrm{SPV}} &= \beta \pi_1^{\mathrm{SPV}} = \frac{\beta(1-r_1)\alpha}{1-\alpha} \pi_0^{\mathrm{SPV}} \\ 
\pi_4^{\mathrm{SPV}} &= \delta \pi_1^{\mathrm{SPV}} = \frac{\delta(1-r_1)\alpha}{1-\alpha} \pi_0^{\mathrm{SPV}} \\ 
\pi_5^{\mathrm{SPV}} &= \delta \pi_2^{\mathrm{SPV}} = \frac{\delta r_1\alpha}{1-\alpha} \pi_0^{\mathrm{SPV}} 
\end{align*}
Substituting these into the normalization condition yields:
\begin{align*}
    \pi_0^{\mathrm{SPV}} \Bigg[ 1 &+ \frac{(1-r_1)\alpha + r_1\alpha}{1-\alpha} \\
    &+ \frac{\beta(1-r_1)\alpha + \delta(1-r_1)\alpha + \delta r_1\alpha}{1-\alpha} \Bigg] = 1
\end{align*}
Similarly, the summation term can be simplified to:
\begin{equation*}
    \frac{1 + \alpha ( 1-\alpha-\eta-r_1\beta )}{1-\alpha}
\end{equation*}
We define the normalized partition function for this scenario as:
\begin{equation*}
    D_{\mathrm{SPV}} = 1 + \alpha ( 1-\alpha-\eta-r_1\beta),
\end{equation*}
Consequently, we derive the complete steady-state distribution vector:
\begin{align*}
    \boldsymbol{\pi}^{\mathrm{SPV}} = \frac{1}{D_{\mathrm{SPV}}} \Big( &1-\alpha,\ (1-r_1)\alpha,\ r_1\alpha, \notag \\
    &\beta(1-r_1)\alpha,\ \delta(1-r_1)\alpha,\ \delta r_1\alpha \Big).
\end{align*}

\subsection{Scenario 3: Target Chooses to Stop}
Extracting the common factor $\lambda$, the generator matrix is:
\begin{equation*}
\mathbf{Q}^{\mathrm{Stop}} = \lambda
\begin{bmatrix} 
-\alpha & (1-r_1)\alpha & r_1\alpha & 0 & 0 & 0 \\ 
0 & -(\beta+\delta) & 0 & \beta & \delta & 0 \\ 
\beta & 0 & -(\beta+\delta) & 0 & 0 & \delta \\ 
1 & 0 & 0 & -1 & 0 & 0 \\ 
1 & 0 & 0 & 0 & -1 & 0 \\ 
1 & 0 & 0 & 0 & 0 & -1 
\end{bmatrix}
\end{equation*}
The balance equations for the non-initial states are as follows:
\begin{equation*}
\begin{cases} 
\lambda (\beta+\delta) \cdot \pi_1^{\mathrm{Stop}} = \lambda (1-r_1)\alpha \cdot \pi_0^{\mathrm{Stop}} \\ 
\lambda (\beta+\delta) \cdot \pi_2^{\mathrm{Stop}} = \lambda r_1\alpha \cdot \pi_0^{\mathrm{Stop}} \\ 
\lambda \cdot \pi_3^{\mathrm{Stop}} = \lambda \beta \cdot \pi_1^{\mathrm{Stop}} \\ 
\lambda \cdot \pi_4^{\mathrm{Stop}} = \lambda \delta \cdot \pi_1^{\mathrm{Stop}} \\ 
\lambda \cdot \pi_5^{\mathrm{Stop}} = \lambda \delta \cdot \pi_2^{\mathrm{Stop}} 
\end{cases}
\end{equation*}
Solving by substitution yields:
\begin{align*} 
\pi_1^{\mathrm{Stop}} &= \frac{(1-r_1)\alpha}{\beta+\delta} \pi_0^{\mathrm{Stop}}, \quad
\pi_2^{\mathrm{Stop}} = \frac{r_1\alpha}{\beta+\delta} \pi_0^{\mathrm{Stop}}\\ 
\pi_3^{\mathrm{Stop}} &= \beta \pi_1^{\mathrm{Stop}} = \frac{\beta(1-r_1)\alpha}{\beta+\delta} \pi_0^{\mathrm{Stop}} \\ 
\pi_4^{\mathrm{Stop}} &= \delta \pi_1^{\mathrm{Stop}} = \frac{\delta(1-r_1)\alpha}{\beta+\delta} \pi_0^{\mathrm{Stop}} \\ 
\pi_5^{\mathrm{Stop}} &= \delta \pi_2^{\mathrm{Stop}} = \frac{\delta r_1\alpha}{\beta+\delta} \pi_0^{\mathrm{Stop}} 
\end{align*}
Substituting these into the normalization condition yields:
\begin{align*}
    \pi_0^{\mathrm{Stop}} \Bigg[ 1 &+ \frac{(1-r_1)\alpha + r_1\alpha}{\beta+\delta} \\
    &+ \frac{\beta(1-r_1)\alpha + \delta(1-r_1)\alpha + \delta r_1\alpha}{\beta+\delta} \Bigg] = 1
\end{align*}
Similarly, the summation term can be simplified to:
\begin{equation*}
    \frac{1 - \eta + \alpha ( 1-\alpha-\eta-r_1\beta )}{\beta+\delta}
\end{equation*}
We define the normalized partition function for this scenario as:
\begin{equation*}
D_{\mathrm{Stop}} = 1 - \eta + \alpha ( 1-\alpha-\eta-r_1\beta),
\end{equation*}
Consequently, we derive the complete steady-state distribution vector:
\begin{align*}
    \boldsymbol{\pi}^{\mathrm{Stop}} = \frac{1}{D_{\mathrm{Stop}}} \Big( &1-\alpha-\eta,\ (1-r_1)\alpha,\ r_1\alpha, \notag \\
    &\beta(1-r_1)\alpha,\ \delta(1-r_1)\alpha,\ \delta r_1\alpha \Big).
\end{align*}

\subsection{Scenario 4: Partial Termination of Target Miners}
\label{app:scenario4_derivation}

When the active target hash power proportion is $x \in [0, 1]$, the transition rates exiting the deterring states (States 1 and 2) are affected by the hash-power vacuum. We define the total active competing hash power in the deterring states as $\sigma(x) = 1-\alpha-(1-x)\eta$, and the block generation rate of the honest branch as $\nu(x) = x\eta+\delta$. Extracting $\lambda$ as a common factor, the generator matrix is given by:
\begin{equation*}
\mathbf{Q}(x) = \lambda
\begin{bmatrix} 
-\alpha & (1-r_1)\alpha & r_1\alpha & 0 & 0 & 0 \\ 
0 & -\sigma(x) & 0 & \beta & \nu(x) & 0 \\ 
\beta & 0 & -\sigma(x) & 0 & 0 & \nu(x) \\ 
1 & 0 & 0 & -1 & 0 & 0 \\ 
1 & 0 & 0 & 0 & -1 & 0 \\ 
1 & 0 & 0 & 0 & 0 & -1 
\end{bmatrix}
\end{equation*}

The balance equations for the non-initial states are as follows:
\begin{equation*}
\begin{cases} 
\lambda \big(1-\alpha-(1-x)\eta\big) \cdot \pi_1(x) = \lambda (1-r_1)\alpha \cdot \pi_0(x) \\ 
\lambda \big(1-\alpha-(1-x)\eta\big) \cdot \pi_2(x) = \lambda r_1\alpha \cdot \pi_0(x) \\ 
\lambda \cdot \pi_3(x) = \lambda \beta \cdot \pi_1(x) \\ 
\lambda \cdot \pi_4(x) = \lambda (x\eta + \delta) \cdot \pi_1(x) \\ 
\lambda \cdot \pi_5(x) = \lambda (x\eta + \delta) \cdot \pi_2(x) 
\end{cases}
\end{equation*}

Solving by substitution yields:
\begin{align*} 
\pi_1(x) &= \frac{(1-r_1)\alpha}{1-\alpha-(1-x)\eta} \pi_0(x)\\
\pi_2(x) &= \frac{r_1\alpha}{1-\alpha-(1-x)\eta} \pi_0(x) \\ 
\pi_3(x) &= \beta \pi_1(x) = \frac{\beta(1-r_1)\alpha}{1-\alpha-(1-x)\eta} \pi_0(x) \\ 
\pi_4(x) &= (x\eta+\delta) \pi_1(x) = \frac{(x\eta+\delta)(1-r_1)\alpha}{1-\alpha-(1-x)\eta} \pi_0(x) \\ 
\pi_5(x) &= (x\eta+\delta) \pi_2(x) = \frac{(x\eta+\delta)r_1\alpha}{1-\alpha-(1-x)\eta} \pi_0(x) 
\end{align*}

Substituting these into the normalization condition:
\begin{align*} 
    \pi_0(x) \Bigg[ 1 &+ \frac{(1-r_1)\alpha + r_1\alpha}{1-\alpha-(1-x)\eta} \\ 
    &+ \frac{\beta(1-r_1)\alpha + (x\eta+\delta)(1-r_1)\alpha + (x\eta+\delta)r_1\alpha}{1-\alpha-(1-x)\eta} \Bigg] = 1 
\end{align*}

Similarly, the summation term can be simplified to:
\begin{equation*}
    \frac{1-(1-x)\eta + \alpha \big[ 1 - \alpha - (1-x)\eta - r_1\beta \big]}{1-\alpha-(1-x)\eta}
\end{equation*}

We define the normalized partition function for this scenario as:
\begin{equation} \label{eq:d_x}
    D(x) = 1 - (1-x)\eta + \alpha \left[ 1 - \alpha - (1-x)\eta - r_1\beta \right].
\end{equation}
Consequently, we derive the complete generalized steady-state distribution vector:
\begin{align} \label{eq:pi_x}
    \boldsymbol{\pi}(x) = \frac{1}{D(x)} \Big(&1-\alpha-(1-x)\eta,\quad (1-r_1)\alpha,\quad r_1\alpha, \notag \\
    &\beta(1-r_1)\alpha,\quad (x\eta+\delta)(1-r_1)\alpha,\quad (x\eta+\delta)r_1\alpha \Big).
\end{align}

\section{Derivation of Expected Utilities via Markov Reward Process}
\label{app:utility_derivation}

This section employs the first principles of the standard MRP. We derive the core dynamic inflation parameter, then demonstrate the rigorous extraction of the utility coefficients presented in the main text.

\subsection{State-Dependent MEV Inflation Coefficient}
\label{app:mev_inflation}

In realistic networks, when the system enters a deterring state, the withdrawal or invalidation of a portion of the hash power prolongs the actual block interval. This, in turn, causes the time-varying MEV accumulated within a single block to inflate. Here, we derive the analytical solution for this inflation coefficient $e^S$.

\heading{Absolute Time Difference and Extra MEV Accumulation.}
In the baseline state without hash power shutdown, the expected block generation time is $\mathbb{E}[\tau_b] = \frac{1}{\lambda}$. 
In the deterring state, let $\Delta \alpha_\mathrm{act}^S$ denote the hash power vacuum, defined as the proportion of hash power that withdraws from public effective competition. The expected effective block generation time of the main chain is then prolonged to $\mathbb{E}[\tau_{\mathrm{det}}^S] = \frac{1}{\lambda(1-\Delta \alpha_\mathrm{act}^S)}$. 
Compared to the baseline state, the additional waiting time required to produce an effective block in the deterring state is:
\begin{equation*}
    \Delta \tau^S = \mathbb{E}[\tau_{\mathrm{det}}^S] - \mathbb{E}[\tau_b] = \frac{1}{\lambda} \left( \frac{\Delta \alpha_\mathrm{act}^S}{1-\Delta \alpha_\mathrm{act}^S} \right).
\end{equation*}

During this additional time $\Delta \tau^S$, the transaction fees and whale rewards accumulating at a constant rate constitute the system's excess dividend, given by $(\lambda_t + \lambda_w \mathbb{E}[\mathcal{R}_w]) \cdot \Delta \tau^S$.

\heading{Closed-Form Solution for the Inflation Coefficient.}
The inflation coefficient is essentially the ratio of the block's total realized value to its baseline value:
\begin{align*}
    e^S &\triangleq \frac{\mathcal{R}_b + (\lambda_t + \lambda_w \mathbb{E}[\mathcal{R}_w]) \cdot \Delta \tau^S}{\mathcal{R}_b} \\
    &= 1 + \frac{(\lambda_t + \lambda_w \mathbb{E}[\mathcal{R}_w]) \cdot \Delta \tau^S}{\mathcal{R}_b}\\
    &=1 + \left( \frac{\lambda_t + \lambda_w \mathbb{E}[\mathcal{R}_w]}{\lambda \mathcal{R}_b} \right) \left( \frac{\Delta \alpha_\mathrm{act}^S}{1-\Delta \alpha_\mathrm{act}^S} \right).
\end{align*}
Defining $M \triangleq \frac{\lambda_t + \lambda_w \mathbb{E}[\mathcal{R}_w]}{\lambda \mathcal{R}_b} \in [0, 1)$ as the proportion of MEV rewards relative to the total reward, we obtain:
\begin{equation*}
    e^S = 1 + M \left( \frac{\Delta \alpha_\mathrm{act}^S}{1-\Delta \alpha_\mathrm{act}^S} \right)\ge 1.
\end{equation*}

When $S=\mathrm{Mine}$, only the attacker's hash power is effectively offline, yielding $\Delta \alpha_\mathrm{act}^{\mathrm{Mine}} = \alpha$. When $S=\mathrm{SPV}$, the target's hash power remains active and triggers state transitions (i.e., the SPV trap), which yields $\Delta \alpha_\mathrm{act}^{\mathrm{SPV}} = \alpha$. When $S=\mathrm{Stop}$, the target's hash power deterministically shuts down, yielding $\Delta \alpha_\mathrm{act}^{\mathrm{Stop}} = \alpha + \eta$. Clearly, the inequality $e^{\mathrm{Stop}} > e^{\mathrm{SPV}} = e^{\mathrm{Mine}}$ strictly holds.

\subsection{Formal Extraction of Utility Coefficients}
\label{app:formal_extraction}

In an MRP, the system's total expected reward stream consists of continuous rate rewards within states and impulse rewards during transitions. In PoW blockchains, transaction fees and MEV accumulated over time must rely on the one-time on-chain confirmation of blocks for settlement. Therefore, by introducing the dynamic inflation coefficient $e^S$, our model equivalently transforms the integral of continuous rewards during deterring states into amplified impulse rewards upon race-resolving state transitions.

Let $\Pi_{\mathcal{M}}^S = \mathcal{R}_{\mathcal{M}}^S - \mathcal{C}_{\mathcal{M}}^S$ denote the expected profit of any participant $\mathcal{M}$ (with hash power proportion $\alpha_{\mathcal{M}}$) under strategy $S$. The corresponding general formula for the normalized utility (per unit time, per unit hash power) is defined as:
\begin{equation*}
    U_{\mathcal{M}}^S = \frac{\Pi_{\mathcal{M}}^S}{\alpha_{\mathcal{M}}} = \frac{\mathcal{R}_{\mathcal{M}}^S - \mathcal{C}_{\mathcal{M}}^S}{\alpha_{\mathcal{M}}}.
\end{equation*}

The absolute expected total revenue $\mathcal{R}_{\mathcal{M}}^S$ in the numerator is formalized under the MRP framework as the sum of impulse rewards upon state transitions:
\begin{equation} \label{eq:mrp_base}
    \mathcal{R}_{\mathcal{M}}^S = \sum_{i \in \mathcal{S}} \pi_i^S \sum_{j \in \mathcal{S}} q_{ij}^S R_{ij}^{\mathcal{M}}
\end{equation}
where $q_{ij}^S$ is the state transition rate, and $R_{ij}^{\mathcal{M}}$ is the absolute economic reward allocated to $\mathcal{M}$ when the transition occurs (refer to Appendix \ref{app:state_transitions} for specific values).

Conversely, the absolute expected total cost $\mathcal{C}_{\mathcal{M}}^S$ is the continuous rate cost incurred by keeping mining equipment running. Incorporating the normalized operational cost $c$, it can be expressed as:
\begin{equation} \label{eq:cost_base}
    \mathcal{C}_{\mathcal{M}}^S = \alpha_{\mathcal{M}} \cdot c \cdot \theta_{\mathcal{M}}^S
\end{equation}
where $\theta_{\mathcal{M}}^S \in [0, 1]$ is defined as the active time ratio during which miner $\mathcal{M}$ keeps its equipment running under strategy $S$, corresponding to the integral of the steady-state probabilities of its active states.

\subsubsection{Derivation of Utility Coefficients for Target Miner $\mathcal{T}$.}
When $S=\mathrm{Mine}$, $\mathcal{T}$ remains active throughout. It deterministically receives the reward for the race-resolving block it mines during transitions from States 0, 3, 4, and 5 back to State 0. Concurrently, in States 4 and 5, if the honest branch wins and the race-triggering block was discovered by $\mathcal{T}$, it receives the pending triggering block reward amplified by $e^{\mathrm{Mine}}$. Substituting this into Equation \eqref{eq:mrp_base}, its absolute expected revenue can be explicitly decomposed into race-resolving block rewards and race-triggering block rewards:
\begin{align*}
    \mathcal{R}_{\mathcal{T}}^{\mathrm{Mine}} =& \underbrace{\Big( \pi_0^{\mathrm{Mine}} + \pi_3^{\mathrm{Mine}} + \pi_4^{\mathrm{Mine}} + \pi_5^{\mathrm{Mine}} \Big) (\lambda \eta) \mathcal{R}_b}_{\text{resolvers in States }0,3,4,5} \\
    &+ \underbrace{\pi_4^{\mathrm{Mine}} \Big[ \lambda(1-\gamma)(\beta+\eta+\delta) \frac{\eta}{\eta+\delta} e^{\mathrm{Mine}} \Big] \mathcal{R}_b}_{\text{State 4 trigger}} \\
    &+ \underbrace{\pi_5^{\mathrm{Mine}} \Big[ \lambda(1-\gamma)(\eta+\delta) \frac{\eta}{\eta+\delta} e^{\mathrm{Mine}} \Big] \mathcal{R}_b}_{\text{State 5 trigger}}
\end{align*}

Based on the derivation of race winning probabilities in Appendix \ref{app:race_win_rate}, the rate factors for the honest branch winning in States 4 and 5 can be directly substituted with $p_{\mathcal{E}}^4$ and $p_{\mathcal{E}}^5$, respectively.
Meanwhile, according to the algebraic relationships among the components in the steady-state distribution formula \eqref{eq:pi_mine}, the sojourn probabilities of the deterring states and the subsequent racing states strictly satisfy the following flow conservation: $\pi_4^{\mathrm{Mine}} = (\eta+\delta)\pi_1^{\mathrm{Mine}}$, and $\pi_5^{\mathrm{Mine}} = (\eta+\delta)\pi_2^{\mathrm{Mine}}$.

By substituting these winning rate replacements and steady-state component relationships into the original absolute expected revenue equation, and dividing both sides by the target's baseline output $\eta \lambda \mathcal{R}_b$, we define the normalized effective block generation rate. The final form aligns with Equation \eqref{eq:v_t_mine} in the main text:
\begin{equation*}
    v_{\mathcal{T}}^{\mathrm{Mine}} \triangleq \frac{\mathcal{R}_{\mathcal{T}}^{\mathrm{Mine}}}{\eta \lambda \mathcal{R}_b} = \sum_{i \in \{0,3,4,5\}} \pi_i^{\mathrm{Mine}} + e^{\mathrm{Mine}} \Big( \pi_1^{\mathrm{Mine}} p_{\mathcal{E}}^4 + \pi_2^{\mathrm{Mine}} p_{\mathcal{E}}^5 \Big).
\end{equation*}

When $S=\mathrm{SPV}$, $\mathcal{T}$'s output during the deterring phase inevitably triggers the trap and becomes invalid; when $S=\mathrm{Stop}$, $\mathcal{T}$ does not participate in any game during the deterring phase, producing no triggering blocks and earning no additional rewards. Mathematically, all terms related to triggering block rewards in the above equation become zero. Therefore, its effective block generation rates degenerate to $v_{\mathcal{T}}^{\mathrm{SPV}} = \sum_{i \in \{0,3,4,5\}} \pi_i^{\mathrm{SPV}}$ and $v_{\mathcal{T}}^{\mathrm{Stop}} = \theta_{\mathcal{T}}^{\mathrm{Stop}}$.

\subsubsection{Derivation of Utility Coefficients for Attacker $\mathcal{A}$.}
The total expected revenue of $\mathcal{A}$ can be decoupled into two components: the direct block revenue from winning private branches, $\mathcal{R}_{\mathrm{blk}}^S$, and the share revenue stolen via infiltration, $\mathcal{R}_{\mathrm{shr}}^S$.

\textbf{Effective Block Generation Rate $v_{\mathcal{A}}^S$:} 
This coefficient is derived from $\mathcal{R}_{\mathrm{blk}}^S$. $\mathcal{A}$ deterministically receives the reward for its own race-resolving block only during transitions from States 3 and 4 back to State 0. Additionally, in States 3 and 4, if the private branch wins, $\mathcal{A}$ receives the pending triggering block $B_{\mathrm{pri}}$ mined in State 0. Its absolute expected revenue can similarly be decomposed as:
\begin{align*}
    \mathcal{R}_{\mathrm{blk}}^S &= \underbrace{\Big( \pi_3^S + \pi_4^S \Big) (\lambda \alpha) \mathcal{R}_b}_{\text{resolvers in States }3,4} \\
    &+ \underbrace{\pi_3^S \Big[ \lambda \big( \alpha + \gamma(\eta+\delta) \big) \Big] \mathcal{R}_b}_{\text{State 3 trigger}} \\
    &+ \underbrace{\pi_4^S \Big[ \lambda \big( \alpha + \gamma(\beta+\eta+\delta) \big) \Big] \mathcal{R}_b}_{\text{State 4 trigger}}
\end{align*}
Based on the derivation of race winning probabilities in Appendix \ref{app:race_win_rate}, the rate factors for the private branch winning in States 3 and 4 can be directly substituted with $\lambda p_{\mathcal{A}}^3$ and $\lambda p_{\mathcal{A}}^4$, respectively. By dividing both sides of the equation by the attacker's baseline output $\alpha \lambda \mathcal{R}_b$, we define the normalized effective block generation rate, yielding a final form consistent with the main text:
\begin{equation*}
    v_{\mathcal{A}}^S \triangleq \frac{\mathcal{R}_{\mathrm{blk}}^S}{\alpha \lambda \mathcal{R}_b} = \pi_3^S + \pi_4^S + \frac{1}{\alpha} \big( \pi_3^S p_{\mathcal{A}}^3 + \pi_4^S p_{\mathcal{A}}^4 \big).
\end{equation*}

\textbf{Share Extraction Rate $s_{\mathcal{A}}^S$:}
This is the most non-linear parameter in the model. When the attacker injects an infiltrating hash power of $r\alpha$ into the victim pool, its share proportion within the pool is defined as $F(r) = \frac{r\alpha}{\beta + r\alpha}$.

To mitigate the errors introduced by Jensen's Inequality due to the concavity of $F(r)$, we employ a local smoothing method based on the block discovery paths to compute the local time-weighted infiltration ratio within specific settlement windows:
\begin{equation*}
    \bar{r}_1^S = \frac{r_1\pi_0^S + r_2\pi_1^S}{\pi_0^S + \pi_1^S}, \quad \bar{r}_2^S = \frac{r_1\pi_0^S + r_2\pi_2^S}{\pi_0^S + \pi_2^S}
\end{equation*}
This orthogonal partitioning significantly compresses the variance of $r$ within the local windows, effectively reducing the otherwise prominent non-linear error into negligible higher-order terms. This achieves high-precision approximation while preserving the closed-form nature of the formulas.

$\mathcal{A}$'s parasitic share revenue stems from four core settlement scenarios: (1) The victim pool's race-resolving block output in State 0; (2) Spanning States 0 and 1, the settlement of the victim pool's pending triggering block $B_{\mathrm{vic}}$ when the honest branch wins in State 3; (3) Spanning States 0 and 2, the shares of the victim's race-resolving block when forced to discard the infiltration block in State 2, or the settlement of the pending triggering block $B_{\mathrm{inf}}$ (mined in State 0) when the infiltration branch wins in State 5; (4) The pool's race-resolving blocks mined during the full-scale infiltration competition in State 5. Expanding these according to their raw state transition rates yields:
\begin{align*}
    \mathcal{R}_{\mathrm{shr}}^S &= \underbrace{\pi_0^S (\lambda \beta) F(r_1) \mathcal{R}_b}_{\text{State 0 payout}} \\
    &+ \underbrace{\pi_3^S \Big[ \lambda \big( \beta + (1-\gamma)(\eta+\delta) \big) \Big] e^S F(\bar{r}_1^S) \mathcal{R}_b}_{\text{State 3 settlement}} \\
    &+ \underbrace{\Big\{ \pi_2^S (\lambda \beta) e^S + \pi_5^S \lambda \big( \alpha+\beta+\gamma(\eta+\delta) \big) \Big\} F(\bar{r}_2^S) \mathcal{R}_b}_{\text{States 2/5 settlement}} \\
    &+ \underbrace{\pi_5^S \lambda(\alpha+\beta) F(1) \mathcal{R}_b}_{\text{State 5 resolver}}
\end{align*}
Based on the derivation of race winning probabilities in Appendix \ref{app:race_win_rate}, the rate factor for the honest branch winning in State 3 above can be directly substituted with $\lambda p_{\mathcal{V}}^3$, and the rate factor for the infiltration branch winning in State 5 can be substituted with $\lambda p_{\mathcal{A}}^5$.
Meanwhile, because the attacker commits its full hash power to infiltration in State 5 ($r=1$), its share proportion becomes $F(1) = \frac{\alpha}{\alpha+\beta}$. Substituting this into the fourth line yields:
\begin{equation*}
    \pi_5^S \lambda (\alpha+\beta) \left( \frac{\alpha}{\alpha+\beta} \right) \mathcal{R}_b = \pi_5^S (\lambda \alpha) \mathcal{R}_b
\end{equation*}

Finally, by substituting the aforementioned winning rate replacements and dividing both sides of the equation by the attacker's baseline output $\alpha \lambda \mathcal{R}_b$, we extract the normalized share extraction function $f(r) \triangleq \frac{F(r)}{\alpha} = \frac{r}{\beta + r\alpha}$. We thus define the normalized share extraction rate, whose final form aligns with the main text:
\begin{align*}
    s_{\mathcal{A}}^S \triangleq \frac{\mathcal{R}_{\mathrm{shr}}^S}{\alpha \lambda \mathcal{R}_b} =& \pi_0^S \beta f(r_1) \\
    &+ e^S \Big[ \pi_3^S p_{\mathcal{V}}^3 f(\bar{r}_1^S) + \pi_2^S \beta f(\bar{r}_2^S) \Big] + \pi_5^S \Big[ p_{\mathcal{A}}^5 f(\bar{r}_2^S) + 1 \Big].
\end{align*}

\section{Optimization of Infiltration Ratios}\label{app:optimization}
\subsection{Optimization of the Global Infiltration Ratio}
The initial infiltration ratio $r_1$ determines the probability of branching into the deterring state and the baseline theft revenue. Because the attacker's primary objective is to disrupt system liveness, and maximizing its own utility is secondary, solving for $r_1$ constitutes a priority-based conditional optimization problem. 

We define the set of initial infiltration ratios that satisfy
the economic feasibility of deterrence as

\[
F=\{r_1\in[0,1]|\Delta U_T^{Stop}(r_1,r_2^*)<0\}.
\]

When \(F\neq\emptyset\), the attacker maximizes its utility
subject to inducing target shutdown. Otherwise, when
\(F=\emptyset\), meaning that deterrence is economically
infeasible, the attacker falls back to unconstrained revenue
maximization:

\[
r_1^*=
\begin{cases}
\arg\max_{r_1\in F}U_A^{Stop}(r_1),
&F\neq\emptyset,\\
\arg\max_{r_1\in[0,1]}U_A^{Mine}(r_1),
&F=\emptyset.
\end{cases}
\]

Because this non-linear optimization lacks a simple closed-form solution, we employ a Grid Search method to solve for $r_1^*$ in our experimental evaluation.

\subsection{Optimization of the Local Infiltration Ratio} 

Based on Equation \eqref{eq:u_t_diff} and the derivations in Appendix \ref{app:steady_state}, both the system's steady-state distribution $\boldsymbol{\pi}^S$ and the target miners' shutdown utility difference $\Delta U_{\mathcal{T}}^\mathrm{Stop}$ are mathematically independent of $r_2$. This is because the infiltrating hash power only submits PPoWs and deliberately discards valid FPoWs, contributing nothing to global block discovery. Thus, the attacker's entire hash power $\alpha$ remains effectively offline, making the single-block reward inflation effect $e^S$ a constant strictly independent of $r_2$. Under the proportional payout mechanism of the victim pool, the local profit function can be expressed as:
\begin{equation*}
\Pi_{\mathrm{inf}}(r_2) = \underbrace{\lambda\beta e^S\mathcal{R}_b}_{\text{Inflated pool total output}} \underbrace{\left( \frac{r_2\alpha}{\beta+r_2\alpha} \right)}_{\text{Share proportion}} - \underbrace{c r_2\alpha}_{\text{Infiltration power cost}}
\end{equation*}
Substituting the baseline profitability equation $\lambda\mathcal{R}_b=\omega_b c$ and factoring out common terms, the profit function simplifies to:
\begin{equation*}
\Pi_{\mathrm{inf}}(r_2) = c\alpha\left[ \frac{\omega_b e^S\beta r_2}{\beta+r_2\alpha} - r_2 \right]
\end{equation*}

To find the extremum, we take the first derivative with respect to $r_2$:
\begin{equation*}
\frac{d\Pi_{\mathrm{inf}}}{dr_2} = c\alpha\left[ \frac{\omega_b e^S\beta^2}{(\beta+r_2\alpha)^2} - 1 \right]
\end{equation*}

Setting the derivative to zero to solve for the steady-state extremum:
\begin{equation*}
(\beta+r_2\alpha)^2 = \omega_b e^S\beta^2
\end{equation*}

Since the system's hash power parameters $\alpha, \beta$ and the economic factors $\omega_b, e^S$ are all strictly positive, we take the positive square root of both sides and simplify:
\begin{equation*}
r_2\alpha = \beta\left(\sqrt{\omega_b e^S}-1\right) \implies r_2 = \frac{\beta\left(\sqrt{\omega_b e^S}-1\right)}{\alpha}.
\end{equation*}

Finally, considering that the attacker's infiltration ratio must satisfy the realistic boundary constraint $r_2 \in [0, 1]$, we apply truncation to obtain the closed-form solution for the optimal infiltration ratio:
\begin{equation*}
r_2^* = \max\left( 0, \min\left( 1, \frac{\beta\left(\sqrt{\omega_b e^S}-1\right)}{\alpha} \right) \right).
\end{equation*}

\section{Proof of SPV as a Strictly Dominated Strategy}\label{app:spv_dominated}

Let the target's SPV utility difference be defined as the normalized utility difference between adopting $S=\mathrm{Mine}$ and $S=\mathrm{SPV}$. Substituting the respective utility equations and expanding yields:
\begin{align*}
\Delta U_{\mathcal{T}}^\mathrm{SPV} &\triangleq U_{\mathcal{T}}^{\mathrm{Mine}} - U_{\mathcal{T}}^{\mathrm{SPV}} \\
&=c \left( \omega_b v_{\mathcal{T}}^{\mathrm{Mine}} - 1 \right)-c \left( \omega_b v_{\mathcal{T}}^{\mathrm{SPV}} - 1 \right)\\
&=c\omega_b\left(v_{\mathcal{T}}^{\mathrm{Mine}} - v_{\mathcal{T}}^{\mathrm{SPV}}\right).
\end{align*}
Since both the normalized operational cost and the baseline profitability factor are strictly positive ($c>0, \omega_b>0$), the sign of this utility difference is entirely determined by the difference in their effective block generation rates.

Based on the coefficient derivations in Section \ref{sec:expected_utility} and the probability normalization condition $\sum_{i=0}^{5} \pi_i^S = 1$, subtracting the two effective block generation rates yields:
\begin{align*}
    v_{\mathcal{T}}^{\mathrm{Mine}} - v_{\mathcal{T}}^{\mathrm{SPV}}
    = &\big( \pi_1^{\mathrm{SPV}} + \pi_2^{\mathrm{SPV}} \big) - \big( \pi_1^{\mathrm{Mine}} + \pi_2^{\mathrm{Mine}} \big) \\
    &+ e^{\mathrm{Mine}} \Big( \pi_1^{\mathrm{Mine}} p_{\mathcal{E}}^4 + \pi_2^{\mathrm{Mine}} p_{\mathcal{E}}^5 \Big).
\end{align*}

According to Lemma \ref{lem:prob_order}, the difference between the first two parenthetical terms is strictly positive. Furthermore, since the inflation coefficient $e^{\mathrm{Mine}} > 1$, and both the steady-state probabilities $\pi_i^S$ and the winning probabilities $p_{\mathcal{E}}^i$ are strictly positive, the trailing term is also strictly positive. Therefore, $v_{\mathcal{T}}^{\mathrm{Mine}} - v_{\mathcal{T}}^{\mathrm{SPV}} > 0$ strictly holds, which implies:
\begin{equation*}
    \Delta U_{\mathcal{T}}^\mathrm{SPV} > 0 \iff U_{\mathcal{T}}^{\mathrm{Mine}} > U_{\mathcal{T}}^{\mathrm{SPV}}.
\end{equation*}
This proves that the utility of target miners choosing SPV mining is always strictly lower than the utility of normal mining. Consequently, $S=\mathrm{SPV}$ constitutes a strictly dominated strategy.

\section{Proofs of Theorems}\label{app:theorems}

\subsection{Proof of Theorem \ref{thm:threshold_dominance}}\label{app:threshold-proof}
\begin{proof}
According to Definition \ref{def:alpha_crit}, the sole game-theoretic criterion for the target miners $\mathcal{T}$ to decide whether to shut down depends on the utility difference between normal mining and stopping to cut losses. Assuming other network and economic parameters are constant, we denote this utility difference as a function of the attacker's hash power $\alpha$ and the infiltration ratio combination $(r_1, r_2)$: $\Delta U_{\mathcal{T}}^\mathrm{Stop}(\alpha, r_1, r_2)$. The target shuts down if and only if $\Delta U_{\mathcal{T}}^\mathrm{Stop}(\alpha, r_1, r_2) < 0$.

\heading{BDoS as a Special Case of PDoS:} A BDoS attack corresponds to a degenerate case where the attacker completely abandons infiltration. In this scenario, all infiltration-related states in the CTMC collapse, and the target utility difference induced by BDoS is identically equal to the value of PDoS evaluated at the zero-infiltration boundary point, i.e., $\Delta U_{\mathcal{T}}^\mathrm{Stop} \big|_{\mathrm{BDoS}}(\alpha) \equiv \Delta U_{\mathcal{T}}^\mathrm{Stop}(\alpha, 0, 0)$.

\heading{Theoretical Extremum Inequality:} PDoS grants the attacker the degrees of freedom to dynamically optimize within the continuous two-dimensional space $(r_1, r_2) \in [0,1]^2$. To calculate the critical deterrence threshold, selecting the parameter combination that minimizes the target's utility difference yields the actual utility difference of PDoS: $\Delta U_{\mathcal{T}}^\mathrm{Stop} \big|_{\mathrm{PDoS}}(\alpha) \triangleq \min_{r_1, r_2} \Delta U_{\mathcal{T}}^\mathrm{Stop}(\alpha, r_1, r_2)$. By the fundamental properties of extrema in real analysis, the global minimum of a function over a closed region must be less than or equal to its value evaluated at any specific boundary point. Therefore, the following extremum inequality strictly holds:
\begin{align} \label{eq:delta-inequality}
  \Delta U_{\mathcal{T}}^\mathrm{Stop} \big|_{\mathrm{PDoS}}(\alpha) &\triangleq \min_{r_1, r_2} \Delta U_{\mathcal{T}}^\mathrm{Stop}(\alpha, r_1, r_2) \notag \\
  &\le \Delta U_{\mathcal{T}}^\mathrm{Stop}(\alpha, 0, 0) \equiv \Delta U_{\mathcal{T}}^\mathrm{Stop} \big|_{\mathrm{BDoS}}(\alpha).
\end{align}

\heading{Inclusion Relation of Stopping Sets:} We define the hash power sets (Stopping Sets) sufficient to induce the target to shut down under the two attacks:
\begin{align*}
 \mathcal{A}_{\mathrm{BDoS}} &= \left\{ \alpha \mid \Delta U_{\mathcal{T}}^\mathrm{Stop} \big|_{\mathrm{BDoS}}(\alpha) < 0 \right\}, \\
 \mathcal{A}_{\mathrm{PDoS}} &= \left\{ \alpha \mid \Delta U_{\mathcal{T}}^\mathrm{Stop} \big|_{\mathrm{PDoS}}(\alpha) < 0 \right\}.
\end{align*}
For any given hash power $\alpha \in \mathcal{A}_{\mathrm{BDoS}}$, by definition, we have $\Delta U_{\mathcal{T}}^\mathrm{Stop}(\alpha, 0, 0) < 0$. Combining this with the extremum inequality \eqref{eq:delta-inequality}, we deduce:
\begin{align*}
    0 &> \Delta U_{\mathcal{T}}^\mathrm{Stop}(\alpha, 0, 0)\\ 
    &\ge \min_{r_1, r_2} \Delta U_{\mathcal{T}}^\mathrm{Stop}(\alpha, r_1, r_2) 
    = \Delta U_{\mathcal{T}}^\mathrm{Stop} \big|_{\mathrm{PDoS}}(\alpha).
\end{align*}
This implies that this $\alpha$ must also satisfy the shutdown criterion for PDoS, meaning $\alpha \in \mathcal{A}_{\mathrm{PDoS}}$. From this set-theoretic derivation, we obtain $\mathcal{A}_{\mathrm{BDoS}} \subseteq \mathcal{A}_{\mathrm{PDoS}}$. The critical hash power $\alpha_{\mathcal{A}}^*$ is defined as the infimum of the corresponding stopping set. According to the fundamental properties of set theory, the infimum of a superset must be less than or equal to the infimum of its subset. Since $\mathcal{A}_{\mathrm{BDoS}} \subseteq \mathcal{A}_{\mathrm{PDoS}}$, it immediately follows that:
\begin{equation*}
    \alpha_{\mathcal{A}}^* \big|_{\mathrm{PDoS}} = \inf \mathcal{A}_{\mathrm{PDoS}} \;\le\; \inf \mathcal{A}_{\mathrm{BDoS}} = \alpha_{\mathcal{A}}^* \big|_{\mathrm{BDoS}}.
\end{equation*}
\end{proof}

\subsection{Proof of Theorem \ref{thm:net_cost_dominance}}\label{app:net-cost-proof}
\begin{proof}
According to Definition \ref{def:min_cost}, $\mathcal{C}_{\mathcal{A}}^*$ represents the theoretical minimum net cost achievable by the attacker.

A BDoS attack corresponds to a boundary case where the attacker completely abandons infiltration. Thus, its optimal net cost rate is identically equal to the value at that specific strategy point:
\begin{equation*}
    \mathcal{C}_{\mathcal{A}}^* \big|_{\mathrm{BDoS}}(\alpha) \equiv \mathcal{C}_{\mathcal{A}}^S(\alpha, 0, 0).
\end{equation*}

For a PDoS attack, the decision space is a two-dimensional continuous closed region $(r_1, r_2) \in [0,1]^2$. Since the BDoS strategy point $(0,0)$ lies within this global strategy space, according to the fundamental property of the minimum in real analysis, the global minimum of a continuous function over a closed region must be less than or equal to its value at any specific point within that region. Therefore, the optimal net cost rate of PDoS must satisfy:
\begin{align*}
    \mathcal{C}_{\mathcal{A}}^* \big|_{\mathrm{PDoS}}(\alpha) &\triangleq \min_{(r_1, r_2) \in [0,1]^2} \mathcal{C}_{\mathcal{A}}^S(\alpha, r_1, r_2) \\
    &\le \mathcal{C}_{\mathcal{A}}^S(\alpha, 0, 0) = \mathcal{C}_{\mathcal{A}}^* \big|_{\mathrm{BDoS}}(\alpha).
\end{align*}
\end{proof}

\subsection{Proof of Theorem \ref{thm:self-sustain}}\label{app:self-sustain-proof}
\begin{proof}
An attacker achieves self-sustainability if its minimum net cost rate satisfies $\mathcal{C}_{\mathcal{A}}^* \le 0$. Based on Definition \ref{def:min_cost} and Equation \eqref{eq:net_cost}, we expand this inequality explicitly as:
\begin{equation*}
    \min_{r_1, r_2} \left\{ c \Big[ \theta_{\mathcal{A}}^S(r_1, r_2) - \omega_b \big( v_{\mathcal{A}}^S(r_1, r_2) + s_{\mathcal{A}}^S(r_1, r_2) \big) \Big] \right\} \;\le\; 0.
\end{equation*}
Since the normalized operational cost $c > 0$, it can be divided out. The extremum condition is equivalent to the existence of a pair $(r_1, r_2) \in [0,1]^2$ such that:
\begin{equation*}
     \theta_{\mathcal{A}}^S(r_1, r_2) - \omega_b \big( v_{\mathcal{A}}^S(r_1, r_2) + s_{\mathcal{A}}^S(r_1, r_2) \big) \;\le\; 0.
\end{equation*}
Given that the sum of the effective block generation rate and the share extraction rate is strictly positive, $v_{\mathcal{A}}^S(r_1, r_2) + s_{\mathcal{A}}^S(r_1, r_2) > 0$, we rearrange the terms and divide by the positive sum to obtain:
\begin{equation*}
    \omega_b \;\ge\; \frac{\theta_{\mathcal{A}}^S(r_1, r_2)}{v_{\mathcal{A}}^S(r_1, r_2) + s_{\mathcal{A}}^S(r_1, r_2)}.
\end{equation*}
The necessary and sufficient condition for this existence to hold is that $\omega_b$ must be greater than or equal to the minimum value of this fractional term within the domain, i.e.,:
\begin{equation*}
    \omega_b \;\ge\; \min_{r_1, r_2} \frac{\theta_{\mathcal{A}}^S(r_1, r_2)}{v_{\mathcal{A}}^S(r_1, r_2) + s_{\mathcal{A}}^S(r_1, r_2)}.
\end{equation*}
The right side of the inequality is a system parameter extremum independent of $\omega_b$. We define it as the lower bound of the profitability factor for PDoS:
\begin{equation*}
    \Omega_{\mathcal{A}}^* \big|_{\mathrm{PDoS}} \;\triangleq\; \min_{r_1, r_2} \frac{\theta_{\mathcal{A}}^S(r_1, r_2)}{v_{\mathcal{A}}^S(r_1, r_2) + s_{\mathcal{A}}^S(r_1, r_2)}.
\end{equation*}

To prove the absolute advantage of PDoS over BDoS, we introduce the strategy space constraint from optimization theory. A BDoS attack lacks an infiltration dimension, and its strategy space is strictly constrained to the zero-infiltration boundary. Thus, the critical profitability factor required for it to self-sustain degenerates into a fixed value under this constraint:
\begin{equation*}
    \Omega_{\mathcal{A}}^* \big|_{\mathrm{BDoS}} \;\equiv\; \frac{\theta_{\mathcal{A}}^S(0, 0)}{v_{\mathcal{A}}^S(0, 0) + s_{\mathcal{A}}^S(0, 0)}.
\end{equation*}

A PDoS attack is essentially a feasible region relaxation of the BDoS strategy space, as it allows the variable combination $(r_1, r_2)$ to optimize freely within the two-dimensional continuous region $[0,1]^2$. According to the fundamental principles of optimization theory, the global optimal solution of a less constrained minimization problem must be less than or equal to the solution of any sub-problem with additional rigid constraints. Therefore, it follows that:
\begin{equation*}
   \Omega_{\mathcal{A}}^* \big|_{\mathrm{PDoS}} \;\le\; \Omega_{\mathcal{A}}^* \big|_{\mathrm{BDoS}}.
\end{equation*}
\end{proof}

\subsection{Proof of Theorem \ref{thm:deterrence_resilience}}\label{app:deterrence-proof}
\begin{proof}
When $\omega_b$ becomes sufficiently large, the utility of target miners persisting in mining increases, eventually leading to the failure of attack deterrence.
The attack is effective if and only if the target's shutdown utility difference satisfies $\Delta U_{\mathcal{T}}^\mathrm{Stop}<0$. Based on Equation \eqref{eq:u_t_diff}, this condition expands to:
\begin{equation*}
    c \Big[ \omega_b \big( v_{\mathcal{T}}^{\mathrm{Mine}} - v_{\mathcal{T}}^{\mathrm{Stop}} \big) - \big( 1 - v_{\mathcal{T}}^{\mathrm{Stop}} \big) \Big] < 0.
\end{equation*}
Since the normalized operational cost $c > 0$, it can be divided out. We now prove that $v_{\mathcal{T}}^{\mathrm{Mine}} - v_{\mathcal{T}}^{\mathrm{Stop}} > 0$.
According to the derivation in Section \ref{sec:expected_utility} and the probability normalization condition $\sum_{i=0}^{5} \pi_i^S = 1$, subtracting the two yields:
\begin{align*}
    v_{\mathcal{T}}^{\mathrm{Mine}} - v_{\mathcal{T}}^{\mathrm{Stop}}
    = &\big( \pi_1^{\mathrm{Stop}} + \pi_2^{\mathrm{Stop}} \big) - \big( \pi_1^{\mathrm{Mine}} + \pi_2^{\mathrm{Mine}} \big) \\
    &+ e^{\mathrm{Mine}} \Big( \pi_1^{\mathrm{Mine}} p_{\mathcal{E}}^4 + \pi_2^{\mathrm{Mine}} p_{\mathcal{E}}^5 \Big)
\end{align*}
According to Lemma \ref{lem:prob_order}, the difference between the first two terms is strictly positive. Since the inflation coefficient $e^{\mathrm{Mine}} > 1$, and both the probabilities $\pi_i^S$ and the winning rates $p_{\mathcal{E}}^i$ are positive, the trailing term is also strictly positive. Therefore, $v_{\mathcal{T}}^{\mathrm{Mine}} - v_{\mathcal{T}}^{\mathrm{Stop}} > 0$. Thus, the condition is equivalent to the existence of $(r_1, r_2) \in [0,1]^2$ such that:
\begin{equation*}
    \omega_b \;<\; \frac{1 - v_{\mathcal{T}}^{\mathrm{Stop}}(r_1, r_2)}{v_{\mathcal{T}}^{\mathrm{Mine}}(r_1, r_2) - v_{\mathcal{T}}^{\mathrm{Stop}}(r_1, r_2)}.
\end{equation*}
The necessary and sufficient condition for this existence to hold is that $\omega_b$ must be less than or equal to the maximum value of this fractional term within the domain, i.e.,:
\begin{equation*}
    \omega_b \;<\; \max_{r_1, r_2} \frac{1 - v_{\mathcal{T}}^{\mathrm{Stop}}(r_1, r_2)}{v_{\mathcal{T}}^{\mathrm{Mine}}(r_1, r_2) - v_{\mathcal{T}}^{\mathrm{Stop}}(r_1, r_2)}.
\end{equation*}
The right side of this inequality is an extremum of system parameters independent of $\omega_b$, which we define as the upper bound of the profitability factor for PDoS:
\begin{equation*}
    \Omega_{\mathcal{T}}^* \big|_{\mathrm{PDoS}} \;\triangleq\; \max_{r_1, r_2} \frac{1 - v_{\mathcal{T}}^{\mathrm{Stop}}(r_1, r_2)}{v_{\mathcal{T}}^{\mathrm{Mine}}(r_1, r_2) - v_{\mathcal{T}}^{\mathrm{Stop}}(r_1, r_2)}.
\end{equation*}

Similarly, analyzing from the perspective of strategy space constraints: The strategy space of BDoS is constrained at $(r_1, r_2) = (0,0)$. Thus, the critical profitability factor required to make a BDoS attack effective degenerates to:
\begin{equation*}
    \Omega_{\mathcal{T}}^* \big|_{\mathrm{BDoS}} \;\equiv\; \frac{1 - v_{\mathcal{T}}^{\mathrm{Stop}}(0, 0)}{v_{\mathcal{T}}^{\mathrm{Mine}}(0, 0) - v_{\mathcal{T}}^{\mathrm{Stop}}(0, 0)}.
\end{equation*}
Since PDoS is a complete relaxation of the BDoS feasible region, according to fundamental principles of optimization theory, the global optimal solution of an unconstrained maximization problem must be greater than or equal to the solution of any sub-problem with additional rigid constraints. Therefore, the extremum inequality strictly holds:
\begin{equation*}
    \Omega_{\mathcal{T}}^* \big|_{\mathrm{PDoS}} \;\ge\; \Omega_{\mathcal{T}}^* \big|_{\mathrm{BDoS}}.
\end{equation*}
\end{proof}

\subsection{Proof of Lemma \ref{lemma:mono-utility} }\label{app:mono-utility-proof}
\begin{proof}
In the partial shutdown scenario, let the probability of the system being in a non-deterring state be $\pi_{\mathrm{act}}(x)$, and the probability of being in a deterring state be $\pi_{\mathrm{det}}(x)$. According to the normalization condition:
\begin{equation}\label{eq:pi_det}
    1 - \pi_{\mathrm{act}}(x) = \pi_{\mathrm{det}}(x) = \pi_1(x) + \pi_2(x).
\end{equation}

When individual $\mathcal{T}_i$ chooses $S_{\mathcal{T}_i} = \mathrm{Stop}$, they only remain active and generate blocks during the non-deterring state. Their effective block generation rate and active time ratio are both $v_{\mathcal{T}_i}^{\mathrm{Stop}}(x) = \theta_{\mathcal{T}_i}^{\mathrm{Stop}}(x) = \pi_{\mathrm{act}}(x)$. Substituting this into the normalized utility \eqref{eq:u_m}, their shutdown utility is:
\begin{equation*}
    U_{\mathcal{T}_i}^{\mathrm{Stop}}(x) = c (\omega_b-1)\pi_{\mathrm{act}}(x).
\end{equation*}

When individual $\mathcal{T}_i$ chooses $S_{\mathcal{T}_i} = \mathrm{Mine}$, they remain active throughout the process, i.e., $\theta_{\mathcal{T}_i}^{\mathrm{Mine}} = 1$. Introducing the dynamic MEV inflation coefficient $e(x)$, their effective block generation rate includes normal output and triggering block output during the deterring phase:
\begin{equation*}
    v_{\mathcal{T}_i}^{\mathrm{Mine}}(x) = \pi_{\mathrm{act}}(x) + e(x) \Big[ \pi_1(x) p_{\mathcal{T}_i}^4(x) + \pi_2(x) p_{\mathcal{T}_i}^5(x) \Big]
\end{equation*}
When the triggering blocks mined during the deterrence period participate in the race, the hash power of their discoverers supports their own branch, while the remaining hash power is affected by $\gamma$. Thus, the microscopic winning probabilities in the race are respectively:
\begin{align*}
    p_{\mathcal{T}_i}^4(x) &= \eta_i + (1-\gamma) \big( \beta + x\eta + \delta - \eta_i \big), \notag \\
    p_{\mathcal{T}_i}^5(x) &= \eta_i + (1-\gamma) \big( x\eta + \delta - \eta_i \big).
\end{align*}

Based on the component relations of the steady-state probability \eqref{eq:pi_x}, the conditional probability ratios strictly satisfy $\frac{\pi_1(x)}{\pi_{\mathrm{det}}(x)} = 1-r_1$ and $\frac{\pi_2(x)}{\pi_{\mathrm{det}}(x)} = r_1$. The weighted average winning rate in the deterring state is defined as:
\begin{align}\label{eq:p_bar}
    \bar{p}_{\mathcal{T}_i}(x) &\triangleq \frac{\pi_1(x) p_{\mathcal{T}_i}^4(x) + \pi_2(x) p_{\mathcal{T}_i}^5(x)}{\pi_{\mathrm{det}}(x)} \notag \\
    &= (1-r_1)p_{\mathcal{T}_i}^4(x) + r_1 p_{\mathcal{T}_i}^5(x) \notag \\
    &= \eta_i + (1-\gamma) \Big[ (1-r_1)\beta + x\eta + \delta - \eta_i \Big].
\end{align}

Substituting into the normalized utility \eqref{eq:u_m}, their mining utility is:
\begin{equation*}
    U_{\mathcal{T}_i}^{\mathrm{Mine}}(x) = c \left( \omega_b \Big[ \pi_{\mathrm{act}}(x) + e(x) \pi_{\mathrm{det}}(x) \bar{p}_{\mathcal{T}_i}(x) \Big] - 1 \right).
\end{equation*}

Subtracting the two yields the individual shutdown utility difference:
\begin{align}\label{u_ti_stop}
    \Delta U_{\mathcal{T}_i}^{\mathrm{Stop}}(x) &\triangleq U_{\mathcal{T}_i}^{\mathrm{Mine}}(x) - U_{\mathcal{T}_i}^{\mathrm{Stop}}(x) \notag \\
    &= c \cdot \pi_{\mathrm{det}}(x) \Big[ \omega_b \cdot e(x) \cdot \bar{p}_{\mathcal{T}_i}(x) - 1 \Big]
\end{align}

To analyze monotonicity, define the core revenue term as $H(x) = e(x) \bar{p}_{\mathcal{T}_i}(x)$. Since the hash power vacuum is $\Delta \alpha_\mathrm{act}(x)=\alpha + \eta - x\eta$, let constant $C = 1-\alpha-\eta = \beta + \delta$, the inflation coefficient can be simplified to:
\begin{equation}\label{eq:e_x}
    e(x) = 1 + M \left( \frac{1 - (C + x\eta)}{C + x\eta} \right) = (1 - M) + \frac{M}{C + x\eta}
\end{equation}
Express the microscopic winning rate as a linear function $\bar{p}_{\mathcal{T}_i}(x) = A + Bx$, where:
\begin{align*}
    A &= \eta_i + (1-\gamma)[(1-r_1)\beta + \delta - \eta_i] \\
    B &= (1-\gamma)\eta > 0
\end{align*}
Taking the derivative with respect to $x$ yields:
\begin{align*}
    H'(x) &= \frac{d}{dx} \left[ (1 - M)(A + Bx) + M \frac{A + Bx}{C + x\eta} \right] \\
    &= (1 - M)B + M \frac{BC - A\eta}{(C + x\eta)^2}
\end{align*}
Expanding and simplifying the numerator term:
\begin{align*}
    BC - A\eta &= (1-\gamma)\eta \Big[ (\beta+\delta) - \big( (1-r_1)\beta + \delta - \eta_i \big) \Big] - \eta_i\eta \\
    &= \eta \Big[ (1-\gamma)r_1\beta - \gamma\eta_i \Big]
\end{align*}
Under the non-atomic game assumption, the hash power of a microscopic miner $\eta_i \to 0$, and the limit of this term is:
\begin{equation*}
    \lim_{\eta_i \to 0} (BC - A\eta) = \eta(1-\gamma)r_1\beta > 0
\end{equation*}
Since $0<M<1, B > 0$ and $(C+x\eta)^2 > 0$, we deduce that $H'(x) > 0$. Thus, the core revenue term $H(x)$ is strictly monotonically increasing with respect to $x$.

According to Definition \ref{def:alpha_crit}, when $\alpha > \alpha_{\mathcal{A}}^*$, the collective shutdown utility difference of the target is negative. Based on the symmetry of the non-atomic game ($\eta_i \to 0$), the relative utility of an individual in the fully active state ($x=1$) must also necessarily be negative, i.e., $\Delta U_{\mathcal{T}_i}^{\mathrm{Stop}}(1)=c\cdot\pi_{\mathrm{det}}(1)[\omega_b H(1) - 1] < 0$. Since $c > 0$ and $\pi_{\mathrm{det}}(1) > 0$, it must hold that $\omega_b H(1) - 1 < 0$. Because $H(x)$ is strictly monotonically increasing, for $\forall x \in [0, 1]$ it universally satisfies:
\begin{equation*}
    \omega_b H(x) - 1 \le \omega_b H(1) - 1 < 0
\end{equation*}

For the deterring state probability $\pi_{\mathrm{det}}(x) = \frac{\alpha}{D(x)}$, combined with the derivative of the partition function \eqref{eq:d_x}, $\frac{d D(x)}{dx} = \eta(1+\alpha) > 0$, we have:
\begin{equation*}
    \frac{d \pi_{\mathrm{det}}(x)}{dx} = - \frac{\alpha}{D(x)^2} \frac{d D(x)}{dx} < 0.
\end{equation*}

Applying the product rule to calculate the derivative of the total utility difference:
\begin{align*}
    \frac{d \Delta U_{\mathcal{T}_i}^{\mathrm{Stop}}(x)}{dx} =& \underbrace{c}_{(+)} \Bigg( \underbrace{\frac{d \pi_{\mathrm{det}}(x)}{dx}}_{(-)} \underbrace{\Big[ \omega_b H(x) - 1 \Big]}_{(-)} \\
    &+ \underbrace{\pi_{\mathrm{det}}(x)}_{(+)} \omega_b \underbrace{H'(x)}_{(+)} \Bigg) > 0.
\end{align*}
Therefore, the utility difference is strictly monotonically increasing over $x \in [0, 1]$.
\end{proof}

\subsection{Proof of Theorem \ref{thm:unique-ne} }\label{app:unique-ne-proof}
\begin{proof}
According to Definition \ref{def:alpha_crit}, when $\alpha > \alpha_{\mathcal{A}}^*$, under the most optimistic state where target miners are fully active ($x=1$), the utility difference of continuing to mine remains strictly negative, i.e., $\Delta U_{\mathcal{T}_i}^{\mathrm{Stop}}(1) < 0$.
According to Lemma \ref{lemma:mono-utility}, $\Delta U_{\mathcal{T}_i}^{\mathrm{Stop}}(x)$ is strictly monotonically increasing over $x \in [0, 1]$. Therefore, for any state $x \in [0, 1]$, it universally holds that:
\begin{equation*}
    \Delta U_{\mathcal{T}_i}^{\mathrm{Stop}}(x) \le \Delta U_{\mathcal{T}_i}^{\mathrm{Stop}}(1) < 0 \implies U_{\mathcal{T}_i}^{\mathrm{Mine}}(x) < U_{\mathcal{T}_i}^{\mathrm{Stop}}(x)
\end{equation*}
This implies that for any microscopic miner $\mathcal{T}_i$, regardless of the strategy combination adopted by other miners, the utility of choosing $S_{\mathcal{T}_i} = \mathrm{Mine}$ is always strictly less than that of choosing $S_{\mathcal{T}_i}=\mathrm{Stop}$. Therefore, $S_{\mathcal{T}_i}=\mathrm{Stop}$ is an absolute strictly dominant strategy. Through the Iterated Elimination of Strictly Dominated Strategies (IESDS), the only strategy combination the network can possibly converge to is $\mathbf{S}^* = (\mathrm{Stop}, \mathrm{Stop}, \dots, \mathrm{Stop})$.
\end{proof}

\subsection{Proof of Theorem \ref{thm:death-spiral} }\label{app:death-spiral-proof}
\begin{proof}
\heading{Global Convergence of the Evolutionary Trajectory.}
According to the uniqueness proof of the Nash Equilibrium in \ref{app:unique-ne-proof}, when $\alpha > \alpha_{\mathcal{A}}^*$, the inequality $\Delta U_{\mathcal{T}_i}^{\mathrm{Stop}}(x) < 0$ holds strictly over the entire effective interval $x \in (0, 1]$. 
Substituting this into the replicator dynamic equation \eqref{eq:replicator}, for any system state $x(t) \in (0, 1)$, since $x(t)(1-x(t)) > 0$, the state change rate satisfies:
\begin{equation*}
    \dot{x}(t) < 0.
\end{equation*}
This implies that $x=1$ is an unstable trivial stationary point, and any infinitesimal perturbation in hash power withdrawal will activate $\dot{x} < 0$. Since $x(t)$ is strictly monotonically decreasing over time and bounded by the lower bound $0$, its evolutionary trajectory must converge to the unique limit point $x=0$, which is asymptotically stable and satisfies $\dot{x}=0$. Thus, $\lim_{t \to \infty} x(t) = 0$, and the system necessarily tends toward a total network shutdown.

\heading{Characteristics of Accelerated Collapse via Positive Feedback.}
To characterize the non-linear dynamics of network liveness collapse, we analyze the collapse acceleration $-\ddot{x}(t)$. Applying the chain rule to the differential equation \eqref{eq:replicator} with respect to time $t$, and extracting the sign, we obtain:
\begin{align*}
    \ddot{x}(t) &= \frac{d}{dt} \Big( x(1-x) \Delta U_{\mathcal{T}_i}^{\mathrm{Stop}}(x) \Big) \\
    &= \frac{d}{dx} \Big( x(1-x) \Delta U_{\mathcal{T}_i}^{\mathrm{Stop}}(x) \Big) \cdot \frac{dx}{dt} \\
    &= \Big[ (1-2x)\Delta U_{\mathcal{T}_i}^{\mathrm{Stop}}(x) + x(1-x) \frac{d \Delta U_{\mathcal{T}_i}^{\mathrm{Stop}}(x)}{dx} \Big] \cdot \dot{x}
\end{align*}
Defining the collapse velocity as $-\dot{x}(t) > 0$, the collapse acceleration can be decomposed into two terms:
\begin{equation*}
    -\ddot{x}(t) = -\dot{x}(t)(1-2x)\Delta U_{\mathcal{T}_i}^{\mathrm{Stop}}(x) -\dot{x}(t) x(1-x) \frac{d \Delta U_{\mathcal{T}_i}^{\mathrm{Stop}}(x)}{dx}.
\end{equation*}

In traditional evolutionary game theory (where the baseline utility difference $\Delta U_{\mathcal{T}_i}^{\mathrm{Stop}}$ is constant), only the first term exists. Since $\Delta U_{\mathcal{T}_i}^{\mathrm{Stop}} < 0$, this term is positive for $x > 0.5$ and becomes negative for $x < 0.5$, exhibiting the classic symmetric Logistic S-shaped decay.

However, in a PDoS attack, the hash power vacuum effect triggers dynamic coupling at the system level. Within the evolution interval $x \in (0, 1)$, it holds that:
1. The collapse velocity $-\dot{x}(t) > 0$;
2. The parabolic factor $x(1-x) > 0$;
3. According to Lemma \ref{lemma:mono-utility}, the utility difference is strictly monotonically increasing, i.e., $\frac{d \Delta U_{\mathcal{T}_i}^{\mathrm{Stop}}(x)}{dx} > 0$.

The product of these three terms strictly guarantees that for any $x \in (0, 1)$, we have:
\begin{equation*}
    -\dot{x}(t) x(1-x) \frac{d \Delta U_{\mathcal{T}_i}^{\mathrm{Stop}}(x)}{dx} > 0.
\end{equation*}
This proves that the hash power vacuum effect, unique to PDoS, continuously injects an additional strictly positive collapse acceleration into the system. This positive feedback mechanism ensures that the evolutionary trajectory of PDoS is always suppressed below the baseline Logistic curve, effectively eliminating the possibility of a soft landing and manifesting as an irreversible, accelerated avalanche effect.
\end{proof}

\subsection{Proof of Theorem \ref{thm:epsilon-robustness} }\label{app:epsilon-robustness-proof}
\begin{proof}
\heading{Existence and Uniqueness of the Critical Threshold.}
In the presence of a frictional barrier $\epsilon$, for the attacker to trigger a network-wide liveness collapse, they must create a utility gap deep enough at the initial fully active state ($x=1$), i.e., satisfying $\Delta U_{\mathcal{T}_i}^{\mathrm{Stop}}(1) < -\epsilon$. Since Lemma \ref{lemma:mono-utility} has already proven the strict monotonicity of $\Delta U_{\mathcal{T}_i}^{\mathrm{Stop}}(x)$, as long as the initial gap breaches $\epsilon$, the subsequent degradation of liveness will follow an irreversible accelerated trend.

According to Equation \eqref{u_ti_stop}, we examine the individual shutdown utility difference in the fully active state ($x=1$), which can be viewed as a function of the attack hash power $\alpha$:
\begin{equation*}
    \Delta U_{\mathcal{T}_i}^{\mathrm{Stop}}(\alpha) = c \cdot \pi_{\mathrm{det}}(\alpha) \Big[ \omega_b \cdot e(\alpha) \cdot \bar{p}_{\mathcal{T}_i}(\alpha) - 1 \Big]
\end{equation*}

From the uniqueness proof of the Nash Equilibrium in \ref{app:unique-ne-proof}, when $\alpha > \alpha_{\mathcal{A}}^*$, $\Delta U_{\mathcal{T}_i}^{\mathrm{Stop}}(\alpha) < 0$. Given $c > 0$ and $\pi_{\mathrm{det}}(\alpha) > 0$, it must hold that $[\omega_b H(\alpha) - 1] < 0$. Combining the system parameter relations derived earlier \eqref{eq:pi_det}, \eqref{eq:pi_x}, \eqref{eq:e_x}, \eqref{eq:p_bar} and the total hash power conservation condition ($\beta+\eta+\delta = 1-\alpha$), the analytical expressions for each component function are:
Deterring state probability $\pi_{\mathrm{det}}(\alpha) = \frac{\alpha}{1 + \alpha(1-\alpha-r_1\beta)}$; 
inflation coefficient $e(\alpha) = 1 + M\frac{\alpha}{1-\alpha}$; 
and weighted average winning rate $\bar{p}_{\mathcal{T}_i}(\alpha) = (1-\gamma)(1-\alpha-r_1\beta)$.

Differentiating the multiplier term $\pi_{\mathrm{det}}(\alpha)$ with respect to $\alpha$ yields:
\begin{equation*}
    \frac{d \pi_{\mathrm{det}}(\alpha)}{d \alpha} = \frac{1 + \alpha^2}{\big[ 1 + \alpha(1-\alpha-r_1\beta) \big]^2} > 0.
\end{equation*}

Next, we define and examine the core revenue term:
\begin{align*}
    H(\alpha) &= e(\alpha) \cdot \bar{p}_{\mathcal{T}_i}(\alpha)\\
    &= \left( 1 + M\frac{\alpha}{1-\alpha} \right) (1-\gamma) \big( (1-\alpha) - r_1\beta \big) \\
    &= (1-\gamma) \left( (1-\alpha) - r_1\beta + M\alpha - \frac{M r_1 \beta \alpha}{1-\alpha} \right)
\end{align*}
Differentiating with respect to $\alpha$ using the quotient rule for the last term:
\begin{align*}
    H'(\alpha) &= (1-\gamma) \left[ -1 + M - M r_1 \beta \frac{1 \cdot (1-\alpha) - \alpha(-1)}{(1-\alpha)^2} \right] \\
    &= (1-\gamma) \left[ -(1-M) - \frac{M r_1 \beta}{(1-\alpha)^2} \right]
\end{align*}
Since the MEV value proportion constant $M \in (0, 1)$ and propagation advantage $\gamma \in (0, 1)$, it follows that $-(1-M) < 0$. Since $M, r_1, \beta \ge 0$, the trailing term is also $\le 0$. Thus, $H'(\alpha) < 0$ strictly holds.

Combining the monotonicity of both terms, we apply the product rule to calculate the derivative of the total utility difference:
\begin{align*}
    \frac{d \Delta U_{\mathcal{T}_i}^{\mathrm{Stop}}(\alpha)}{d \alpha} =& \underbrace{c}_{(+)} \Bigg( \underbrace{\frac{d \pi_{\mathrm{det}}(\alpha)}{d \alpha}}_{(+)} \underbrace{\Big[ \omega_b H(\alpha) - 1 \Big]}_{(-)} \\
    &+ \underbrace{\pi_{\mathrm{det}}(\alpha)}_{(+)} \omega_b \underbrace{H'(\alpha)}_{(-)} \Bigg) < 0.
\end{align*}
This proves that in the fully active state, the individual relative shutdown utility difference strictly monotonically decreases with the increase of attack hash power.

Given the realistic frictional cost $\epsilon \in (0, c)$, we examine the value of the utility difference function at both ends of the interval:
When $\alpha = \alpha_{\mathcal{A}}^*$, the system is precisely at the break-even point without friction, i.e., $\Delta U_{\mathcal{T}_i}^{\mathrm{Stop}}(\alpha_{\mathcal{A}}^*) = 0 > -\epsilon$.
In the limit state, considering that the accumulable MEV inflation bonus within a single block must have a theoretical upper bound, i.e., $e(\alpha) \le E_{\max}$ universally holds. Since the microscopic winning rate limit of target miners is $\lim_{\alpha \to 1^-} \bar{p}_{\mathcal{T}_i}(\alpha) = 0$, the core revenue term satisfies $\lim_{\alpha \to 1^-} H(\alpha) \le \lim_{\alpha \to 1^-} E_{\max} \cdot \bar{p}_{\mathcal{T}_i}(\alpha) = 0$. Combining this with the limit of the deterrence sojourn probability $\lim_{\alpha \to 1^-} \pi_{\mathrm{det}}(\alpha) = 1$, the utility difference under the bounded inflation constraint strictly satisfies:
\begin{equation*}
    \lim_{\alpha \to 1^-} \Delta U_{\mathcal{T}_i}^{\mathrm{Stop}}(\alpha) = c \big( \omega_b \cdot 0 - 1 \big) = -c < -\epsilon
\end{equation*}

Since $\Delta U_{\mathcal{T}_i}^{\mathrm{Stop}}(\alpha)$ is a continuous function on the interval $\alpha\in[\alpha_{\mathcal{A}}^*, 1)$ and is strictly monotonically decreasing, according to the Intermediate Value Theorem, there must exist a unique solution $\alpha_{\mathcal{A}}^\epsilon \in (\alpha_{\mathcal{A}}^*, 1)$ such that $\Delta U_{\mathcal{T}_i}^{\mathrm{Stop}}(\alpha_{\mathcal{A}}^\epsilon) = -\epsilon$. When $\alpha > \alpha_{\mathcal{A}}^\epsilon$, the inequality $\Delta U_{\mathcal{T}_i}^{\mathrm{Stop}}(\alpha) < -\epsilon$ holds strictly, thereby breaching the frictional barrier.

\heading{Relative Exploitation Advantage of PDoS.}
Consistent with the previous strategy space constraints, a BDoS attack corresponds to the boundary case of zero infiltration $(r_1=0, r_2=0)$. Comparing the utility gap created by both at the same $\alpha$ in the fully active state $x=1$:
For the microscopic winning rate term:
\begin{align*}
    \bar{p}_{\mathcal{T}_i}(\alpha)\big|_{\mathrm{PDoS}} &= (1-\gamma)(1-\alpha-r_1\beta) \\
    \bar{p}_{\mathcal{T}_i}(\alpha)\big|_{\mathrm{BDoS}} &= (1-\gamma)(1-\alpha)
\end{align*}
The difference is $\bar{p}_{\mathcal{T}_i}(\alpha)\big|_{\mathrm{BDoS}} - \bar{p}_{\mathcal{T}_i}(\alpha)\big|_{\mathrm{PDoS}} = (1-\gamma)r_1\beta > 0$. Since both have identical inflation coefficients $e(\alpha)$ at the same $\alpha$, the smaller winning rate of PDoS leads to a larger internal revenue gap:
\begin{equation}\label{eq:proof_epsilon_1}
    \Big[\omega_b e(\alpha) \bar{p}_{\mathcal{T}_i}(\alpha)\big|_{\mathrm{PDoS}} - 1\Big] < \Big[\omega_b e(\alpha) \bar{p}_{\mathcal{T}_i}(\alpha)\big|_{\mathrm{BDoS}} - 1\Big] < 0
\end{equation}

For the deterrence time multiplier term:
\begin{align*}
    \pi_{\mathrm{det}}(\alpha)\big|_{\mathrm{PDoS}} &= \frac{\alpha}{1 + \alpha(1-\alpha-r_1\beta)} \\
    \pi_{\mathrm{det}}(\alpha)\big|_{\mathrm{BDoS}} &= \frac{\alpha}{1 + \alpha(1-\alpha)}
\end{align*}
Since the denominator $1 + \alpha(1-\alpha-r_1\beta) < 1 + \alpha(1-\alpha)$, it clearly holds that:
\begin{equation}\label{eq:proof_epsilon_2}
    \pi_{\mathrm{det}}(\alpha)\big|_{\mathrm{PDoS}} > \pi_{\mathrm{det}}(\alpha)\big|_{\mathrm{BDoS}} > 0
\end{equation}

Combining inequalities \eqref{eq:proof_epsilon_1} and \eqref{eq:proof_epsilon_2}: PDoS possesses a larger positive multiplier (longer deterrence time) multiplied by a deeper negative revenue term (greater single-round loss), resulting in a product with a strictly larger absolute value. That is, it strictly holds that:
\begin{align*}
    \Delta U_{\mathcal{T}_i}^{\mathrm{Stop}}(\alpha)\big|_{\mathrm{PDoS}} <& \Delta U_{\mathcal{T}_i}^{\mathrm{Stop}}(\alpha)\big|_{\mathrm{BDoS}} < 0 \\
    \implies &\Big|\Delta U_{\mathcal{T}_i}^{\mathrm{Stop}}(\alpha)\big|_{\mathrm{PDoS}}\Big| > \Big|\Delta U_{\mathcal{T}_i}^{\mathrm{Stop}}(\alpha)\big|_{\mathrm{BDoS}}\Big|
\end{align*}
This proves that under identical external attack hash power, the utility gap created by PDoS is strictly greater than that of BDoS, demonstrating a stronger deterrence penetration capability.
\end{proof}

\section{Additional Discussion}
\heading{Modeling Scope and Limitations}
When target miners shut down to avoid losses under attack deterrence, the effective active hash power $\alpha_{\mathrm{act}}$ decreases. This passively lengthens the average block interval, allowing each block to accumulate more transaction fees and continuously arriving high-value opportunities, such as MEV-like rewards, before it is eventually mined. As a result, liveness degradation simultaneously reduces throughput and endogenously amplifies the per-block reward faced by the remaining active miners. In PDoS, this effect is further combined with the attacker's continuous extraction of share rewards from the infiltrated victim pool, creating an additional subsidy channel that is absent in classical BDoS. In other words, while slowing down the chain, the attacker also exploits both reward inflation caused by the slower block rate and the victim pool's payout mechanism to offset its attack cost. This explains the counterintuitive phenomenon revealed in this paper: in high-fee, high-MEV, or bull-market environments, a higher per-block reward does not necessarily improve resistance to liveness attacks. Instead, it can lower the cost threshold of PDoS and make the attack easier to sustain.

We note that the above results are subject to three important limitations. First, we focus on short- to medium-term attack windows. This assumption is appropriate for PoW systems such as Bitcoin, where difficulty adjustment is relatively slow and strategic withholding or deterrence attacks unfold within a locally stationary window. However, static difficulty underestimates the target's long-term utility: persistent shutdown eventually reduces difficulty and restores mining profitability. Second, the analytical inflation coefficient $e^S$ is uncapped, whereas block capacity and demand bound real fee accumulation. Third, the infiltration-originated release path is protocol dependent. Under conventional pool-assigned jobs, the worker can construct the valid header but normally relies on the pool to reconstruct and publish the body after the retained FPoW is submitted. A pool that rejects stale jobs too quickly, changes templates aggressively, or conditionally escrows payouts reduces the useful race window. Stratum V2 custom jobs can give a miner the full template, but deployment is not universal~\cite{stratumv2jobs}.

We model target miners as rational agents driven by short-term expected payoff. This assumption is consistent with the classical analytical framework for incentive-based attacks such as BDoS, and also reflects the behavior of industrial miners operating under thin profit margins. Nevertheless, real-world participants may exhibit more complex behavior, including cross-chain migration, heterogeneous time horizons, and mixed short-term and long-term objectives. Extending our model to capture these richer strategic behaviors is an important direction for future work.

\heading{Reducing Race-Time Propagation Asymmetry.}
Our analysis shows that the attack becomes stronger as the rushing ability $\gamma$ increases. Therefore, any network-layer improvement that reduces the attacker's propagation advantage directly lowers its winning probability in racing states and weakens the credibility of the header-only deterrence signal. In practice, mining pools and large miners can improve relay diversity, strengthen anti-eclipse connectivity, deploy multi-homed networking, optimize block-body propagation paths, and shorten abnormal-fork detection time. These measures reduce the attacker's first-mover advantage during races and increase the hash-power threshold required for PDoS to successfully induce shutdown.

\end{document}